\documentclass[12pt]{article}
\RequirePackage[OT1]{fontenc}
\usepackage{amsfonts,amsmath,amsthm,amssymb,bm,bbm}
\usepackage{graphicx}
\usepackage{array}
\usepackage{natbib}
\usepackage{enumerate}
\usepackage{booktabs}
\usepackage{parskip}
\usepackage[colorlinks=true,pagebackref,linkcolor=magenta,citecolor=blue]{hyperref}
\usepackage{multirow}
\usepackage{tikz}
\usepackage{verbatim}
\usepackage[top=1in, bottom=1in, left=1in, right=1in]{geometry}
\usepackage{setspace}
\usepackage{array}
\usepackage{rotating}
\usepackage{float}
\usepackage{pgf}
\usetikzlibrary{arrows,shapes,positioning}

\usepackage{subfigure}
\RequirePackage{graphicx}
\RequirePackage{xspace}
\RequirePackage{tabularx}
\RequirePackage{comment}
\RequirePackage{subdepth}

\usepackage{fancyhdr}
\pagestyle{fancy}
\headheight 0pt 
\rhead{}
\lhead{}
\cfoot{\thepage}


\newtheorem{theorem}{Theorem}[section]
\newtheorem{lemma}[theorem]{Lemma}

\newtheorem{proposition}[theorem]{Proposition}

\theoremstyle{definition}
\newtheorem{definition}[theorem]{Definition}
\newtheorem{remark}[theorem]{Remark}

\newtheorem{example}[theorem]{Example}
\numberwithin{equation}{section}

\renewcommand{\tilde}{\widetilde}
\renewcommand{\hat}{\widehat}
\renewcommand{\Pr}{\mathbb{P}}
\renewcommand{\Re}{\mathbb{R}}

\DeclareMathOperator{\Normal}{\mathcal{N}}

\DeclareMathOperator{\Ex}{\mathbb{E}}

\newcommand{\ER}{Erd{\H o}s-R{\' e}nyi}

\bibliographystyle{plainnat}

\title{Elements of estimation theory for causal effects in the presence of network interference}

\author{Daniel L. Sussman\footnote{Boston University, Department of Mathematics \& Statistics,  Boston, MA, sussman@bu.edu}, Edoardo M. Airoldi\footnote{Harvard University, Department of Statistics, Cambridge, MA, airoldi@fas.harvard.edu}}

\begin{document}    

\maketitle

\begin{abstract}
Randomized experiments in which the treatment of a unit can affect the outcomes of other units are becoming increasingly common in healthcare, economics, and in the  social and information sciences.
From a causal inference perspective, the typical assumption of no interference becomes untenable in such experiments.
In many problems, however, the patterns of  interference may be informed by the observation of network connections among the units of analysis.
Here, we develop elements of optimal estimation theory for causal effects leveraging an observed network, by assuming that the potential outcomes of an individual depend only on the individual's treatment and on the treatment of the neighbors.
We propose a collection of exclusion restrictions on the potential outcomes, and show how subsets of these restrictions lead to various parameterizations.
Considering the class of linear unbiased estimators of the average direct treatment effect, we derive conditions on the design that lead to the existence of unbiased estimators, and offer analytical insights on the weights that lead to minimum integrated variance estimators.
We  illustrate the improved performance of these estimators when compared to more standard biased and unbiased estimators, using simulations.\\
{\bf Keywords}: causal inference, network data, interference, inferential targets, exclusion restrictions, general additive models, unbiased estimation, minimum integrated variance \end{abstract}


\section{Introduction}
Social scientists, businesses, public health researchers and economists frequently seek to perform experiments in environments where it may be prohibitive or undesirable to restrict interactions between individuals in the study \citep{gui2015network}.
Experiments in these contexts present challenges and opportunities, as the causal effects of treatments need not be confined to treated individual.
A core assumption in causal inference of no interference \citep{cox1958planning,rubin1974estimating,manski1990nonparametric} frequently cannot be made for such studies.

In the presence of interference, two key questions arise: 
(i) Which units' treatments can effect which units' outcomes? (ii) How can treatments effect each unit's outcomes?
Fortunately in some instances, the answer to the first question is partially exposed by an observed social network based on an online user base \citep{bond201261,gui2015network} or collected as part of a study \citep{mcqueen2015national}
Provided the network is an accurate depiction of which individuals interact with each other, a tempting and often sensible tact is for an experimenter to assume that the transmission of effects of treatments must respect the structure of the network.

For the second question, various properties of the treatments, outcomes and the social network itself may be used to justify further assumptions regarding the nature of how outcomes relate to treatments.
Furthermore, making additional assumptions may relax the set of reasonable analytic tools sufficiently to allow for variance reduction in exchange for a small increase in bias.

We consider a set of core assumptions for interference when an observed network illuminates the structure of this interference.
Combinations of these core assumptions lead to a variety of parametric forms for the potential outcomes which serve to decompose treatment effects as either direct, indirect or interaction effects. 
From a randomization inference perspective, we consider the set of unbiased estimates of the direct treatment effect, deriving conditions on the experimental design that ensure linear unbiased estimates exist.
We also propose a novel optimality criterion for this setting  which draws on Bayesian ideas, deriving minimum integrated variance unbiased estimates. 
We demonstrate that using appropriate distributions for the parameters leads to improved performance over more standard methods to create unbiased estimators as well as over naive biased estimates. 

In Section~\ref{sec:background}, we review causal inference with interference and networks and we introduce the potential outcome framework.
In Section~\ref{sec:nia}, we consider one core and four structurual assumptions regarding network interference and in Section~\ref{sec:model_estimand}, we show how these assumptions lead to parameterizations of the potential outcomes which clarify relevant estimands.
As many of the ideas can be translated between the various sets of assumptions, much of the focus of this manuscript is under a particular parmameterization where two of the four structural assumptions hold.
In Section~\ref{sec:unbiased} we consider the set of unbiased estimators of the direct treatment effect as controlled by collections linear equations and in Section~\ref{sec:mivlue} we propose a method to choose among this set of estimators by optimizing a quadratic objective with respect the linear constraints for unbiasedness.
Finally, in Section~\ref{sec:simulation} we perform simulations to illustrate the effectiveness of our methods and in Section~\ref{sec:disc} we discuss the ideas proposed and future directions.





\section{Background}
\label{sec:background}

The assumption of no interference is at the core of causal inference \citep{cox1958planning,manski1990nonparametric} as it is a key part of the stable unit treatment assumption \citep{rubin1974estimating}.
To relax this assumption, researchers have proposed different answers to the question of which units can interfere with which other units based on the type of experiment performed. 
Early forms of interference included interference localized to an individual across different rounds of treatment such as in clinical trials with crossover designs \citep{grizzle1965two}.
Interference based on spatial proximity of treated units also received relatively early attention \citep{kempton1984inter}.
More recently, interference confined to blocks where interference is allowed within blocks but not between blocks has been studied \citep{hudgens2012toward}.
General interference has been studied occasionally \citep{rosenbaum2007interference} but typically allowing for any unit's treatment to effect any unit's outcome makes design and analysis major challenges.

With the rise of interest in social networks, many authors have proposed that observed network of interactions or relationships between units may inform how causal effects can be transmitted between units \citep{ugander2013graph,eckles2014design,athey2015exact}.
An assumption that has gained traction for causal inference on networks is that the causal effects can be passed along edges in the network so that units may experience an effect from the treatment of other nodes that are in some ways proximal in the network.
Observation of the network itself does not determine which units interfere with each other but typically the network structure drives key assumptions about the interference.
One form of these assumptions is that the outcome of unit $i$ depends only on the treatments of units which can connected to $i$ following a path through a network of length at most $k$ for some $k$ \citep{athey2015exact}.
The strongest and most studied assumption is to assume that units outcomes are only impacted by their neighbors in the network, which will be the focus of this manuscript.

In the presence of network interference, the experimental design and analysis of the outcomes present distinct challenges.
Efforts to design experiments which takes the structure of interference into account include block and neighbor designs \citep{david1996designs}, graph cluster design \citep{ugander2013graph}, nomination designs, high-degree designs \citep{kim2015social}, peer encouragement designs \citep{eckles2016estimating} and model-based designs \citep{basse2015optimal}.
Many of these can be viewed as versions of classical blocking designs in the the network context while others are based off of how individuals communicate within the network.
Whether using a design which takes into account the network structure, or a standard design such as a Bernoulli trial, the subsequent analysis should take into account the structure of the interference.

Our goals are to build estimators that are unbiased and in some sense optimal for estimands under a variety of network interference assumptions. 
While unbiased estimation is not necessary for all scenarios and can often be traded off for a substantial variance reduction, understanding the structure of unbiased estimators can be useful on its own and can also illuminate how accepting small amounts of bias can lead to even better estimators.





\citet{aronow2013estimating} proposed a method to create unbiased estimators for average outcome for any given exposure to treatment under interference, where exposure levels are defined as a subset of treatment allocations for each unit.
As with Horvitz-Thompson estimation, these estimators rely on deriving the probability of a given exposure based on the design, known as the propoensity score, and using inverse propensity score weighting of the observed outcomes. 
While this method is generally applicable and easily computed, this is far from the only way to construct unbiased estimators and one of our goals is to investigate the questions, (i) when do unbiased estimators exist? (ii) what is the set of unbiased estimators for a given design, network, and interference assumptions? and (iii) how can we choose from among these estimators?

We concretely answer the first two questions and provide a criterion for answering the third. 
In particular, we propose finding the minimum integrated variance linear unbiased estimator with respect to a distribution on the potential outcomes.
We will call this distribution a prior distribution which can be viewed as an {\em a priori} weighting of the parameter values of interest.
Minimum integrated variance unbiased estimators have not been studied extensively in the causal inference or statistics literature.
One closely related area where these ideas have been investigated is in the field of finite population inference \citep{ghosh1997bayesian}.
Minimum integrated variance designs have also been studied in the design of experiments literature but these ideas seek a design that functions well for standard estimators, whereas are goal is the reverse \citep{box1959basis}.
In analogy to results from finite population statistics, we will see in Section~\ref{sec:mivlue} that estimators similar to those proposed by \citet{aronow2013estimating} do arise as minimum integrated variance solutions for certain prior distributions.
However, in our simulation studies (see Section~\ref{sec:simulation}) we show that other prior distributions on the parameters generally lead to improved performance compared to inverse propensity score weighting.

\subsection{Potential outcomes framework}

We will now give an overview of the potential outcomes framework for causal inference, sometimes referred to as the Rubin causal model \citep{neyman1923applications,rubin1974estimating}.
Denote the population of experimental units as $[n]=\{1,2,\dotsc, n\}$. 
A treatment allocation vector, or simply an allocation, is a vector $\mathbf{z}\in \mathcal{Z}^n=\{0,1\}^n$ where the $i^{\text{th}}$ component $z_i$ represents the treatment assigned to unit $i$. 
Our results extend in a straightforward way to finite treatment regimes, but as notation can become unwieldy, we consider the binary treatment regime.
The outcome for unit $i$ under allocation $\mathbf{z}$ is denoted as $Y_i(\bf{z})\in \Re$.

The design of a randomized trial is a probability distribution over all allocations with probability mass function denoted as $p:\mathcal{Z}^n\mapsto[0,1]$, also referred to as the randomization distribution.
%
In the experiment, a researcher will select an allocation $\mathbf{z}^{obs}$ distributed according to the design $p$.
The observed data consists of the allocation $\mathbf{z}^{obs}$ and the outcomes $Y_i^{obs}=Y_i(\mathbf{z}^{obs})$.
The support of the design is the set of allocations $\mathbf{z}\in \mathcal{Z}^n$ such that $p(\mathbf{z})>0$.
We take the view of randomization-based inference so that for any fixed $\mathbf{z}$ the potential outcomes for all units are non-random, meaning that all randomness in the observed data is due to the random selection of the treatment allocation $\mathbf{z}^{obs}$. 
Hence, we use $\Pr$ and $\Ex$ to denote probability and expectation with respect to the design $p$. 
Occasionally, it will be useful to refer to the allocation where only unit $j$ is in the treatment group and all others are in the control group; this allocation will be denoted as $\mathbf{e}_j$.

If one can reasonably assume that the experimental units are not interacting, it is plausible to make the stable unit treatment value assumption (SUTVA) \citep{rubin1974estimating}.
Under SUTVA, for each unit $i$ the outcome $Y_i(\bf{z})$ depends on $\bf{z}$ only through $z_i$ so we can abuse notation and write $Y_i(\bf{z})$ as $Y_i(z_i)$. 
Hence, SUTVA effectively restricts the set of potential outcomes for each unit to $Y_i(1)$ and $Y_i(0)$.
For any observed allocation $\mathbf{z}^{obs}$, one would observe $n$ out of the $2n$ distinct potential outcomes, leaving $n$ potential outcomes as unobserved.
Much of the work under the Rubin causal model can be viewed as developing and studying methods to deal with this missing data problem.

Without making any assumptions about interference, we must view the potential outcome for each individual and each treatment allocation as possibly depending on the treatment assignments for all individuals \citep{rubin1978bayesian}. 
In this situation, the set of distinct potential outcomes explodes in cardinality to $2^n$ for each individual and $n2^n$ for the experimental population.
We will still only observe $n$ of these potential outcomes and dealing with a missing data problem of this magnitude is unlikely to yield useful inferences about causal effects without further assumptions or restrictions.
However, we need not restrict ourselves to SUTVA or arbitrary interference, and in the next section we will explore a collection of assumptions for the interference between units in a network.



\section{Neighborhood Interference Assumptions}\label{sec:nia}

The key data for constraining interference between units in our setting is an observed network of relationships between units that might accounts for possible paths of interference.
We will first introduce some necessary notation for the network setting.

\subsection{Notation}
The network will be represented by a binary adjacency matrix $g=[g_{ij}]_{i,j=1}^{n}\in \{0,1\}^{n\times n}$.
The entry $g_{ij}$ is $1$ if there is a edge from unit $i$ to unit $j$ and $0$ if there is no edge.
For the purpose of identifiability, we assume that $g_{ii}=0$. 
For generality, the network is directed so that $g_{ij}$ need not equal $g_{ji}$, though all results below apply in the undirected case where $g_{ij}=g_{ji}$ for all $i,j$.
We will say that unit $j$ is a neighbor of unit $i$ if $g_{ji}=1$.

The neighborhood of unit $i$ is the set of vertices with edges directed towards unit $i$.
We denote the neighborhood by $\mathcal{N}_i=\{j: g_{ji}=1\}$. 
The degree of unit $i$ is denoted as $d_i=\sum_{j=1}^n g_{ji}=|\mathcal{N}_i|$.
For unit $i$ and allocation $\mathbf{z}$, we denote the vector of treatments for the neighbors of unit $i$ by $\mathbf{z}_{\mathcal{N}_i}=(z_j)_{j\in \mathcal{N}_i} \in \{0,1\}^{d_i}$.
The set of treated neighbors for unit $i$ is $\mathcal{N}_i^{\textbf{z}}=\{j:g_{ji}=1,z_j=1\}$.
We define the treated degree of unit $i$, denoted $d^{\mathbf{z}}_i$, to be the number of treated neighbors of unit $i$, $d_i^{\mathbf{z}}=|\mathcal{N}_i^{\textbf{z}}|=(g^T \mathbf{z})_i$.

\subsection{Core Assumption}
\label{sec:core_ass}

While many assumptions could be made with regard to how causal effects can propogate through the network, perhaps the most concrete assumption is the following.

\begin{definition}[Neighborhood Interference Assumption (NIA)]\label{def:nia}
For a network $g\in\{0,1\}^{n\times n}$, the potential outcomes satisfy the neighborhood interference assumption for $g$ if for each unit $i\in[n]$ and all treatment  allocations $\mathbf{z},\mathbf{z}'\in \mathcal{Z}$ where $z_j=z_j'$ for all $j\in \mathcal{N}_i \cup\{i\}$, it holds that $Y_i(\mathbf{z})= Y_i(\mathbf{z}')$ .
\end{definition}
NIA means that if the treatments of unit $i$ and its neighbors are held fixed, then changing the treatments of other units does not change the outcome for unit $i$.
Under NIA, we define an alternative function for the potential outcome in terms of the treatment of the unit $i$ and the treatment of its neighbors, which will help simplify exposition:
let the function $\tilde{Y}_i:\{0,1\}\times \{0,1\}^{d_i} \mapsto \Re$ satisfy $Y_i(\mathbf{z}) = \tilde{Y}_i(z_i, \mathbf{z}_{\mathcal{N}_i})$ for all allocations $\mathbf{z}$.

Depending on the structure of the observed network, NIA can be a significant reduction in the set of distinct potential outcomes as compared to an unconstrained interference setting.
In general, the number of distinct potential outcomes for each unit under NIA is at most $2^{d_i+1}$.
If the network is empty, so that $g_{ij}=0$ for all $i,j$, then NIA is equivalent to SUTVA while if the network is complete, so that $g_{ij}=1$ for all $i\neq j$, then NIA allows for arbitrary interference.
The results in the paper are best applied in an intermediate regime with sparse networks where $0<d_i\ll n$, which is the setting for many real-world networks.
Hence, conditioning on observing $g$ pre-treatment, we can limit the set of distinct potential outcomes in a way that allows for the development of estimation theory.

\begin{remark}
The network $g$ does not need to correspond to some observed interactions between units. 
It is easy to apply NIA in settings where $\mathcal{N}_i$ is defined in other ways; for example, the neighbors could correspond to nearby units in space or units sharing certain properties.
Importantly, the network must be observed or at least observable before the experiment and the network must be fixed and assumed not to be effected by treatment.\qed
\end{remark}

\subsection{Structural Assumptions}\label{sec:structural}
While assuming NIA answers the question of which units' treatments can effect which units' outcomes, the question of how the outcomes are effected remains open.
In this section, we consider four structural assumptions assumptions that will more clearly restrict how outcomes are effected by a neighbors' treatments.
These can be combined in various ways to fit different goals of the practitioner to restrict the set of potential outcomes.
Two of the assumptions are related to symmetry of the interference effects, and the other assumptions are related to additivity of effects.

Under no interference, the potential must satisfy the form $Y_i(z_i)=\alpha_i+\beta_i z_i$ which provides a relationship between the potential outcomes framework and linear regression.
Similarly, Each combination of these assumptions implies a parametric form for the potential outcomes. 
We will illustrate the parameterizations in Section~\ref{sec:param} for some combinations and the details for the remaining combinations are in Appendix~\ref{app:param}.
For each of the following definitions we assume that NIA holds.

\subsubsection{Additivity of Main Effects}
The first assumption is that there is no interaction between a unit's treatment and the treatments of a unit's neighbors.
\begin{definition}[Additivity of Main Effects]\label{def:add}
The potential outcomes satisfy additivity of main effects if  
\begin{equation}\label{eq:add_main}
  \tilde{Y}_i(z_i,\mathbf{z}_{\mathcal{N}_i}) = \tilde{Y}_i(0,\bm{0})
+ (\tilde{Y}_i(z_i,\bm{0})-\tilde{Y}_i(0,\bm{0}))
+ (\tilde{Y}_i(0,\mathbf{z}_{\mathcal{N}_i})-\tilde{Y}_i(0,\bm{0}))
\end{equation}
for all treatments $\mathbf{z}$ and all units $i$.
\end{definition}
The first term in Eq.~\eqref{eq:add_main} is the baseline outcome under no treatment, the second term is the direct treatment effect, the usual object of study in the causal inference literature, and the third term is the interference effect.

\subsubsection{Additivity of Interference Effects}
The second assumption is that the interference effects are additive among the interfering units.
\begin{definition}[Additivity of Interference Effects]\label{def:aie}
The potential outcomes satisfy additivity of interference effects if
\begin{align}
 \quad    \tilde{Y}(z_i,\mathbf{z}_{\mathcal{N}_i}) &= \tilde{Y}(z_i,\bm{0}) + \sum_{j=1}^{d_i} \left(\tilde{Y}_i\left(z_i, (\mathbf{z}_{\mathcal{N}_i})_j \mathbf{e}_j\right)-\tilde{Y}\left(z_i,\bm{0}\right)\right),
\end{align}
for all allocations $\mathbf{z}$ and units $i$.
Note that $\mathbf{e}_j\in \{0,1\}^d_i$ is zero in each coordinate except coordinate $j$.
\end{definition}
The first term above is the outcome if no neighbors are treated and each summand is the additional effect of each neighbor's treatment.
As an example, this assumption can be verified to hold in the models proposed in \citet{toulis2013estimation} and \citet{eckles2016estimating} uses closely related assumptions.

\subsubsection{Symmetrically Received Interference Effects}
The third assumption is that the impact of interferences is invariant to which particular subset of neighbors are treated.
\begin{definition}[Symmetrically Received Interference Effects]\label{def:sri}
The potential outcomes have symmetrically received interference if
\begin{equation}
  \tilde{Y}_i(z_i, \mathbf{z}_{\mathcal{N}_i}) = \tilde{Y}_i(z_i, \tau(\mathbf{z}_{\mathcal{N}_i})),
\end{equation}
for all allocations $\mathbf{z}$, units $i$, and permutations $\tau$ of vectors of length $d_i$.
\end{definition}
Equivalently, this means that $\tilde{Y}$ depends on its second argument only through the number of treated neighbors $d_i^{\mathbf{z}}$.
This is also referred to as anonymous interference \citep{manski2013identification} or no peer-effect-heterogeneity \citep{athey2015exact}.

\subsubsection{Symmetrically Sent Inteference Effects}
While the symmetrically received interference effects assumption asserts that the interference effects for unit $i$ are equal across permutations of the neighbors treatments, the symmetrically sent interference assumption asserts that units that share a neighbor are effected ``equally'' by that neighbor.
In general, it is unclear what should be assumed equal since there may be interaction effects between sets of neighbors.
Under additivity of interference effects, however, we can assume that the equality is in terms of the differences between the relevant potential outcomes.
In this manuscript, we will only assume symmetrically sent interference effects in conjunction with additivity of interference effects.
\begin{definition}[Symmetrically Sent Interference Effects]\label{def:ssi}
If the potential outcomes satisfy additivitivity of interference effects (see Definition~\ref{def:aie}), then the potential outcomes have symmetrically sent interference if
\begin{equation}
   Y_i(\mathbf{z}+\mathbf{e}_j)-Y_i(\mathbf{z}) = Y_{i'}(\mathbf{z}+\mathbf{e}_j)-Y_{i'}(\mathbf{z}), \label{eq:ssie}
\end{equation}
for units $j$ and allocations $\mathbf{z}$ where $z_j=0$, and units $i,i'$ where $g_{ji}=g_{ji'}=1$,
where $\mathbf{e}_j$ is the allocation where only unit $j$ is in the treatment group.
\end{definition}
Again, this assumption can be verified to hold in certain models proposed in \citet{toulis2013estimation} and \citet{eckles2016estimating}.

\begin{remark}
In this manuscript, we avoid assumptions of constant treatment effects across units so that our methods may be applicable to realistic settings. 
However, as we will see later, certain combinations of assumptions, such as combining symmetrically sent interference effects and symmetrically received interference effects will imply constant treatment effects across certain units. 
Conversely, if we assume symmetrically sent interference effects and we assume that the interference effects are constant across units then this will also imply that the potential outcomes will satisfy the symmetrically received interference assumption as well.
This arises because the assumption of constant interference effects will impose that the potential outcomes are linear in the treated degree.
Section~\ref{sec:sanasia_param} and Appendix~\ref{app:param} offer more details.\qed
\end{remark}



\section{Models and Estimands} \label{sec:model_estimand}
In practice, an analyst can choose which additional assumptions may hold in addition to NIA.
Figure~\ref{fig:assumGram} shows the partial ordering of the twelve possible subsets of assumptions that can be chosen from the four structural assumptions.
Certain combinations of assumptions  are disallowed because we require that Symmetrically Sent Interference Effects is only assumed in combination with Additivity of Interference Effects.
Each combination of assumptions leads to a particular model or parameterization for the potential outcomes. 
In Section~\ref{sec:param}, we analyze three of these models, NIA with no additional assumptions, SANIA, where we assume symmetrically received interference and additivity of main effects, and SANASIA where we include all four assumptions from Section~\ref{sec:structural}.
We then discuss how these parameterizations can be used to define estimands and how these estimands relate to estimands defined in terms of potential outcomes in Section~\ref{sec:estimand}.
Appendix~\ref{app:param} describes the parameterizations implied by the remaining nine combinations of assumptions.


\begin{figure}
\begin{minipage}[c]{0.6\textwidth}
\newcommand{\uc}{\char`_}

\begin{tikzpicture}[->,>=stealth',shorten >=1pt,auto,node distance=1.75cm,
  thick, column sep=1cm,main node/.style={font=\ttfamily,rectangle, draw=black, fill=none}]


  \node[main node] (NIA) {\char`_\char`_N\char`_\char`_IA};
  \node[main node] (ANIA) [below of=NIA] {\char`_AN\char`_\char`_IA};
  \node[main node] (NAIA) [right of=ANIA] {\char`_\char`_NA\char`_IA};
  \node[main node] (SNIA) [left of=ANIA] {S\char`_N\char`_\char`_IA};
  \node[main node] (ANAIA)[below of=NAIA]{\char`_ANA\char`_IA};
  \node[main node] (SANIA)[below of=SNIA]{SAN\char`_\char`_IA};
  \node[main node] (SNAIA)[below of=ANIA]{S\char`_NA\char`_IA};
  \node[main node] (SANAIA)[below of=SNAIA]{SANA\char`_IA};
  \node[main node] (NASIA)[right of=ANAIA]{\char`_\char`_NASIA};
  \node[main node] (SNASIA)[below of=ANAIA]{S\char`_NASIA};
  \node[main node] (ANASIA)[right of=SNASIA]{\char`_ANASIA};
  \node[main node] (SANASIA)[below of=SNASIA]{SANASIA};

  \path[every node/.style={font=\sffamily\small}]
    (NIA) edge node [left]{} (ANIA)
          edge node [below]{} (NAIA)
          edge node [right]{} (SNIA)
    (ANIA) edge node [below]{} (ANAIA)
          edge node [right]{} (SANIA)
    (NAIA) edge node [left]{} (ANAIA)
          edge node [right]{} (SNAIA)
    (SNIA) edge node [left]{} (SANIA)
          edge node [below]{} (SNAIA)
    (ANAIA) edge node [left]{} (SANAIA)
    (SANIA) edge node [below]{} (SANAIA)
 
    (SNAIA) edge [ right] node [right]{} (SNASIA)
    (NASIA) edge node [right]{} (SNASIA)
    (ANAIA) edge [ right] node [right]{} (ANASIA)
    (SANAIA) edge node [right]{} (SANASIA)
    (SNAIA) edge node [right]{} (SANAIA)
    (NAIA) edge [ right] node[left]{} (NASIA)
    (NASIA) edge node [below]{} (ANASIA)
    (ANASIA) edge node [below]{} (SANASIA)
    (SNASIA) edge node [left]{} (SANASIA);
\end{tikzpicture}
  \end{minipage}\hfill
  \begin{minipage}[c]{0.4\textwidth}
    \caption{
       The partial ordering of models that arise from combining structural assumptions.
       An arrow indicates that one model contains the other.
 Each element is labeled where each of four possibly blank characters refers to whether a particular assumption is made: 
(1) Symmetrically Received Interference Effects, 
(2) Additivity of Main Effects, 
(3) Additivity of Interference Effects, 
(4) Symmetrically Sent Interference Effects.
The largest model is NIA while the smallest model is SANASIA, where all four structural assumptions are made.   } \label{fig:assumGram}
  \end{minipage}
\end{figure}
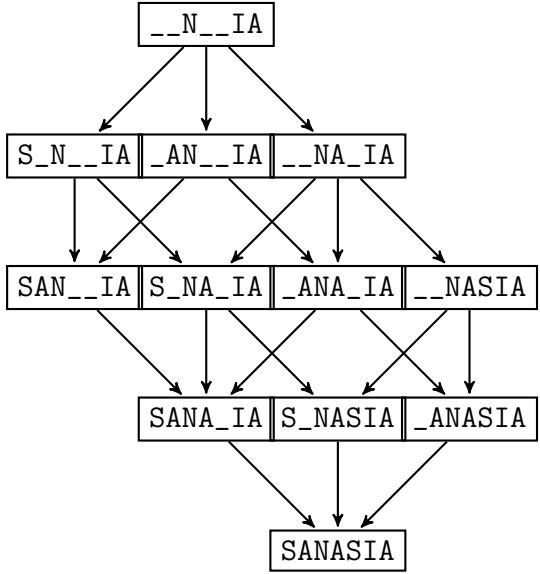

\subsection{Parameterizations for Observed Outcomes}\label{sec:param}

For each of these models, we can parametrize the potential outcomes in terms of the baseline outcome, the direct treatment effect, the interference effect, and possible interaction between direct treatment and interference effects.
We note that the direct treatment effect and the interference effect are closely related to the direct and indirect effects, respeectively, as defined in \citep{hudgens2012toward}, while the interaction effect can be viewed as the difference between the total causal effect and indirect effects as definied there. 

\subsubsection{NIA}

Under NIA, we can write the potential outcome $Y_i(\mathbf{z})$ as 
\begin{align*}
 \overbrace{\tilde{Y}(0,\bm{0})}^{\text{baseline}}&+ z_i \overbrace{(\tilde{Y}(1,\bm{0})-\tilde{Y}(0,\bm{0})}^{\text{direct treatment effect}}) + (\overbrace{\tilde{Y}(0, \mathbf{z}_{\mathcal{N}_i})-\tilde{Y}(0,\bm{0})}^{\text{interference effect}})\\
 & +z_i (\underbrace{\tilde{Y}(1, \mathbf{z}_{\mathcal{N}_i})-\tilde{Y}(1,\bm{0})-(\tilde{Y}(0, \mathbf{z}_{\mathcal{N}_i})-\tilde{Y}(0,\bm{0}))}_{\text{interaction effect}}).
\end{align*}

We define $\alpha_i=\tilde{Y}_i(0,\bm{0})$ as the baseline outcome and $\beta_i=\tilde{Y}_i(1,\bm{0})-\tilde{Y}_i(0,\bm{0})$ as the direct treatment effect.
The functions $\Gamma_i:\{0,1\}^{d_i}\mapsto \Re$ and $\Delta_i:\{0,1\}^{d_i}\mapsto \Re$ are defined as 
\begin{align*}
  \Gamma_i(\mathbf{z}_{\mathcal{N}_i}) &= \tilde{Y}_i(0,\mathbf{z}_{\mathcal{N}_i})-\tilde{Y}_i(0,\bm{0}) \text{ and } \\
  \Delta_i(\mathbf{z}_{\mathcal{N}_i}) &= \tilde{Y}_i(1, \mathbf{z}_{\mathcal{N}_i})-\tilde{Y}_i(1,\bm{0})-(\tilde{Y}_i(0, \mathbf{z}_{\mathcal{N}_i})-\tilde{Y}_i(0,\bm{0})).
\end{align*}
Hence, $\Gamma_i(\mathbf{z}_{\mathcal{N}_i})$ gives the interference effect if the treatment allocation for unit $i$'s neighbors is $\mathbf{z}_{\mathcal{N}_i}$.
Similarly $\Delta_i$ gives the additionally interaction effect between the neighbors' treatments and direct treatment of the unit. 

Using this notation, the NIA assumption is equivalent to the assertion that the potential outcomes adhere to the following parameterization,
\begin{equation}
  Y_i(\mathbf{z}) = \alpha_i+ \beta_i z_i +\Gamma_i(\mathbf{z}_{\mathcal{N}_i})+ z_i \Delta_i(\mathbf{z}_{\mathcal{N}_i}).
\end{equation}
For unit $i$, this parameterization has $2^{d_i+1}$ parameters (noting that $\Gamma_i(\bm{0})=\Delta_i(\bm{0})=0$), which is exactly the number of possibly distinct potential outcomes.

\subsubsection{SANIA}

Under SANIA, where symmetrically received interference effects and additivity of main effects are assumed, we abuse notation and redefine $\Gamma_i:\{0,1,\dotsc,d_i\} \mapsto \Re$ such that $\Gamma_i(d) = \tilde{Y}_i(0,\mathbf{z}_{\mathcal{N}_i})-\tilde{Y}_i(0,\bm{0})$ for all $\mathbf{z}$ where $\sum_{j\in \mathcal{N}_i} z_i =d$.
This is well defined due to the symmetrically received interference effects assumption.
Since the main effects are additive, we may also omit the $\Delta$ term, yielding
\begin{equation}
  Y_i(\mathbf{z}) = \alpha_i+ \beta_i z_i + \Gamma_i(d_i^{\mathbf{z}}) \label{eq:param_sania}
\end{equation}
for all allocations $\mathbf{z}$.
Here we have $2+d_i$ parameters which encode $2d_i+2$ distinct potential outcomes for unit $i$.
Our simulation in Section~\ref{sec:simulation} and much of our theoretical results are focused on this particular model which is relatively parsimonious.

\subsubsection{SANASIA}\label{sec:sanasia_param}
The SANASIA assumption restricts SANIA further by assuming additivity of interference effects and symmetrically sent interference effect. 
This means that the functions $\Gamma_i$ must be additive.
Since we restrict that $\Gamma_i(0)=0$, we have that $\Gamma_i$ is linear so that $\Gamma_i(d)=\gamma_i d$ for some $\gamma_i\in\Re$.
Hence, the potential outcomes must satisfy
\begin{equation}
    Y_i(\mathbf{z}) = \alpha_i + \beta_i z_i +\gamma_i d_i^{\mathbf{z}}. \label{eq:paramSANASIA}
\end{equation}
Additionally, the interference effects are symmetrically sent.
This implies that if $i,i'$ both have $j$ as a neighbor then the additive effect of $j$ being treated is the same for both $i$ and $i'$, so that $\gamma_i=\gamma_{i'}$.
By iterating on this idea, one can show the following proposition which relies on construction of the undirected graph $h$ which indicates whether two nodes share a neighbor.

\begin{proposition}\label{prop:paramSANASIA}
Let $h\in\{0,1\}^{n\times n}$ be an adjacency matrix where $h_{ij}=\bm{1}\{(g^T g)_{ij}>0\}$.
Under SANASIA, the potential outcomes can be parametrized as 
$Y_i(\mathbf{z})= \alpha_i+\beta_i z_i+\gamma_i d_i^{\mathbf{z}}$
where if $i$ and $i'$ are in the same connected component of $h$ then $\gamma_i=\gamma_{i'}$.

\end{proposition}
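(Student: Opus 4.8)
The plan is to leverage the linear form $Y_i(\mathbf{z}) = \alpha_i + \beta_i z_i + \gamma_i d_i^{\mathbf{z}}$, which the discussion preceding the proposition already establishes from additivity of interference effects together with symmetrically received interference; the only remaining content is then the constancy of $\gamma_i$ across connected components of $h$. First I would pin down the meaning of $h$. Since $(g^{T}g)_{ii'} = \sum_{k=1}^{n} g_{ki} g_{ki'}$, the entry $h_{ii'}$ equals $1$ precisely when there is some unit $k$ with $g_{ki} = g_{ki'} = 1$, i.e.\ when $i$ and $i'$ share the common neighbor $k$. Thus an edge of $h$ encodes exactly the hypothesis $g_{ji}=g_{ji'}=1$ of Definition~\ref{def:ssi}, with $k$ playing the role of the treated neighbor $j$.

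The key step is a one-line computation under the linear parameterization. Fix a unit $j$ and an allocation $\mathbf{z}$ with $z_j=0$, and suppose $g_{ji}=1$, so that $j\in\mathcal{N}_i$. Passing from $\mathbf{z}$ to $\mathbf{z}+\mathbf{e}_j$ flips the single coordinate $z_j$ from $0$ to $1$ and leaves every other coordinate fixed, so the treated degree of $i$ rises by exactly one, $d_i^{\mathbf{z}+\mathbf{e}_j} = d_i^{\mathbf{z}}+1$, while $z_i$ is unchanged. Hence
\[
Y_i(\mathbf{z}+\mathbf{e}_j) - Y_i(\mathbf{z}) = \gamma_i\bigl(d_i^{\mathbf{z}+\mathbf{e}_j}-d_i^{\mathbf{z}}\bigr) = \gamma_i .
\]
The identical computation for any $i'$ with $g_{ji'}=1$ gives $Y_{i'}(\mathbf{z}+\mathbf{e}_j)-Y_{i'}(\mathbf{z})=\gamma_{i'}$. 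The symmetrically sent interference assumption, Equation~\eqref{eq:ssie}, equates these two left-hand sides for every such pair, forcing $\gamma_i=\gamma_{i'}$ whenever $i$ and $i'$ share a common neighbor, that is, whenever $h_{ii'}=1$.

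Finally I would promote this adjacency-level equality to the level of connected components by a routine induction along paths. If $i$ and $i'$ lie in the same component of $h$, choose a path $i=v_0,v_1,\dots,v_m=i'$ with $h_{v_k v_{k+1}}=1$ for each $k$; the previous step gives $\gamma_{v_k}=\gamma_{v_{k+1}}$ along each edge, and chaining these equalities yields $\gamma_i=\gamma_{i'}$, as claimed.

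I do not anticipate a serious obstacle. Once the linear form is in hand, the entire content is reading Definition~\ref{def:ssi} as a statement about the common-neighbor graph $h$ and observing that ``shares a common neighbor'' generates, under transitive closure, exactly the connected-component relation on $h$. The only points requiring care are the correct interpretation of $g^{T}g$ under the directed convention $\mathcal{N}_i=\{j:g_{ji}=1\}$ and the verification that treating a single neighbor changes the treated degree by exactly one; both are immediate from the definitions.
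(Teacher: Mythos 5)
Your proof is correct and follows exactly the route the paper sketches: the linear form $Y_i(\mathbf{z})=\alpha_i+\beta_i z_i+\gamma_i d_i^{\mathbf{z}}$ from the preceding discussion, then Definition~\ref{def:ssi} applied to a common neighbor $j$ to force $\gamma_i=\gamma_{i'}$ whenever $h_{ii'}=1$, and finally chaining along paths (the paper's ``iterating on this idea'') to extend equality to connected components of $h$. Your single-coordinate computation $Y_i(\mathbf{z}+\mathbf{e}_j)-Y_i(\mathbf{z})=\gamma_i$ and the identification of $(g^Tg)_{ii'}>0$ with ``shares an in-neighbor'' are precisely the details the paper leaves implicit.
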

Since the interference parameters are equal in connected components of $h$, the total number of parameters needed for all potential outcomes is $2n$ plus the number of connected components of $h$.
In the case that $g$ is undirected and connected, if $g$ is bipartite then $h$ has two connected components, otherwise $h$ has one connected component.
In order to parametrize the potential outcomes in the latter case, we need only $2n +1$ parameters with $Y_i(\mathbf{z})=\alpha_i+\beta_i z_i + \gamma d_i^{\textbf{z}}$ for each each unit.
These parameters encode $2 n+2\sum_i d_i$ distinct potential outcomes.

\subsection{Estimands}\label{sec:estimand}
Causal estimands are defined as functions of the potential outcomes, frequently an average of a given function over a set of units.  
Given parametric forms for the potential outcomes, we can define estimands in terms these parameters as well: (i) estimands for direct treatment effects, the $\beta_i$ terms, (ii) estimands for interference effects, the $\Gamma_i$ terms, and (iii) estimands for interaction effects, the $\Delta_i$ terms. 
Estimands defined in terms of potential outcomes map to different functions of the parameters depending on which assumptions hold, which we demonstrate below for the three parameterizations from Section~\ref{sec:param}.

Combinations of these effects, such as the mean effect of treating all units versus treating no units can be identified via potential outcomes as
\[
  \frac{1}{n} \sum_{i=1}^n Y_i(\bm{1})-Y_i(\bm{0}).
\]
Under NIA this estimand is identified parametrically as $\frac{1}{n}\sum_i \beta_i+ \Gamma_i(\mathbf{1})+\Delta_i(\mathbf{1})$.
On the other hand, if we assume SANIA holds, this estimand is $\frac{1}{n}\sum_i \beta_i +\Gamma_i(d_i)$, while under SANASIA, this estimand is $\frac{1}{n}\sum_i \beta_i+ \gamma_i d_i$.
Finally, if SUTVA holds then the above estimand is simply $\bar{\beta}=\frac{1}{n}\sum_i \beta_i$.
Alternatively, the same estimand defined in terms of parameters, such as $\bar{\beta}$, will potentially have multiple definitions in terms of potential outcomes depending on the specific assumptions made.
Defining estimands from potential outcomes or from parameterizations are complementary approaches that are differentially impacted by changes in the underlying assumptions.

For the remainder of this manuscript, we will focus on estimation of $\bar{\beta}$, though the core ideas presented below apply to other estimands.
The estimand $\bar{\beta}$ can be identified as the average direct treatment effect and under NIA and all submodels, including SUTVA, $\bar{\beta}$ can be defined in terms of the potential outcomes as
\begin{equation}
  \bar{\beta} = \frac{1}{n} \sum_{i=1}^n Y_i(\mathbf{e}_i)- Y_i(\mathbf{0})
\end{equation}
where under $\mathbf{e}_i$ only unit $i$ receives treatment.
If we know SUTVA holds, then \[
  \bar{\beta}=\frac{1}{n} \sum_{i=1}^n Y_i(1)-Y_i(0) = \frac{1}{n} \sum_{i=1}^n Y_i(\mathbf{1})-Y_i(\mathbf{0})=\frac{1}{n} \sum_{i=1}^n Y_i(\mathbf{e}_i)- Y_i(\mathbf{0}).
\]
For models between NIA and SUTVA, we can define $\bar{\beta}$ in terms of other potential outcomes, possibly depending on the structure of the graph.




\section{Linear Unbiased Estimation}\label{sec:unbiased}
In this section we will answer the question of when unbiased estimators of the average direct treatment effect exist. 
We also focus on linear estimators which are of the form $\hat{\beta}^{\mathbf{w}} = \sum_{i=1}^n w_i(\mathbf{z}) Y_i(\mathbf{z})$ where each $w_i:\mathcal{Z}^n \mapsto \Re$ is a function that gives the coefficient for unit $i$ under allocation $\mathbf{z}$.
Non-linear estimators can also be unbiased but since the estimands are linear functions of the potential outcomes, linear estimators are a natural starting point for study.
Linear estimators include the naive difference of means estimator where 
\begin{equation}
     w_i(\mathbf{z}) = \frac{z_i}{\sum_j z_j} - \frac{1-z_i}{\sum_j (1-z_j)}, \label{eq:est_naive}
 \end{equation}
 and the Horvitz-Thompson inverse propensity score weighting estimator where 
 \begin{equation}
      w_i(\mathbf{z}) = \frac{z_i}{n \Pr[z_i^{obs}=z_i]} - \frac{1-z_i}{n\Pr[z_i^{obs}=z_i]}, \label{eq:est_ht}
  \end{equation}
as well as stratified estimators (see Eq.~\ref{eq:strat_naive} in Proposition~\ref{prop:vertex_trans}) and other more sophisticated estimators.

Even under SUTVA, certain pathological designs will preclude the existence of unbiased estimators. 
In Section~\ref{sec:ue_dte}, we'll give conditions on the design for when linear unbiased estimates exists under NIA, SANIA, and SANASIA.
In Section~\ref{sec:designExamples}, we will analyze completely randomized designs and Bernoulli trial as well as various estimators in the context of NIA.



\subsection{Unbiased Estimators of Direct Treatment Effect}\label{sec:ue_dte}

We will require that  $\hat{\beta}^{\mathbf{w}}$ be unbiased for $\bar{\beta}$ with respect to the known design.
For linear estimators, the requirement that $\hat{\beta}^{\mathbf{w}}$ is unbiased is equivalent to a set of linear constraints on the coefficients.
The particular constraints depend on which assumptions we make.

\subsubsection{NIA}
\begin{proposition}\label{prop:constraintNIA}
Under NIA, a linear estimator $\hat{\beta}^{\mathbf{w}}$ is unbiased for $\bar{\beta}$ if and only if 
for all $i\in[n]$, the following constraints hold:
\begin{align*}
&\quad \sum_{\mathbf{z}\in \mathcal{Z}} p(\mathbf{z}) w_i(\mathbf{z}) z_i = 1/n.	\tag{$\beta_i$ constraints) (C1} \\
&\quad \sum_{\bf{z}\in \mathcal{Z}} p(\mathbf{z}) w_i(\mathbf{z} ) = 0,	\tag{$\alpha_i$ constraints) (C2}\\
\forall \mathbf{z}'\in \{0,1\}^{d_i}\setminus\{0\},& \quad \sum_{\bf{z}\in \mathcal{Z}} p(\mathbf{z})  w_{i}(\mathbf{z}) \bm{1}\{\mathbf{z}_{\mathcal{N}_i}=\mathbf{z}'\}  = 0, \tag{$\Gamma_i(\mathbf{z}')$ constraints) (C3}\\
\text{and} \quad \forall \mathbf{z}'\in \{0,1\}^{d_i}\setminus \{0\},& \quad \sum_{\bf{z}\in \mathcal{Z}} p(\mathbf{z})  w_{i}(\mathbf{z})z_i \bm{1}\{\mathbf{z}_{\mathcal{N}_i}=\mathbf{z}'\}  = 0, 	\tag{$\Delta_i(\mathbf{z}')$ constraints) (C4}
\end{align*}
\end{proposition}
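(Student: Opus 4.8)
The plan is to compute $\Ex[\hat{\beta}^{\mathbf{w}}]$ under the design $p$, substitute the NIA parameterization of the potential outcomes, and then recognize that unbiasedness amounts to an identity between two linear functionals of the free parameters $(\alpha_i,\beta_i,\Gamma_i,\Delta_i)$ that must hold for every parameter value.

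First I would write
$$\Ex[\hat{\beta}^{\mathbf{w}}] = \sum_{\mathbf{z}\in\mathcal{Z}} p(\mathbf{z})\sum_{i=1}^n w_i(\mathbf{z})\,Y_i(\mathbf{z})$$
and substitute $Y_i(\mathbf{z}) = \alpha_i + \beta_i z_i + \Gamma_i(\mathbf{z}_{\mathcal{N}_i}) + z_i\,\Delta_i(\mathbf{z}_{\mathcal{N}_i})$ from the NIA parameterization. Since $\Gamma_i(\bm{0})=\Delta_i(\bm{0})=0$, I would expand the interference and interaction terms over their nonzero arguments via $\Gamma_i(\mathbf{z}_{\mathcal{N}_i}) = \sum_{\mathbf{z}'\neq \bm{0}}\Gamma_i(\mathbf{z}')\,\bm{1}\{\mathbf{z}_{\mathcal{N}_i}=\mathbf{z}'\}$ and the analogous expression for $\Delta_i$. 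Interchanging the (finite) sums over $\mathbf{z}$ and over $i$, the expectation becomes a linear combination of the parameters whose coefficients are exactly the left-hand sides appearing in (C1)--(C4): the coefficient of $\alpha_i$ is $\sum_{\mathbf{z}}p(\mathbf{z})w_i(\mathbf{z})$, the coefficient of $\beta_i$ is $\sum_{\mathbf{z}}p(\mathbf{z})w_i(\mathbf{z})z_i$, the coefficient of $\Gamma_i(\mathbf{z}')$ is $\sum_{\mathbf{z}}p(\mathbf{z})w_i(\mathbf{z})\bm{1}\{\mathbf{z}_{\mathcal{N}_i}=\mathbf{z}'\}$, and the coefficient of $\Delta_i(\mathbf{z}')$ is $\sum_{\mathbf{z}}p(\mathbf{z})w_i(\mathbf{z})z_i\bm{1}\{\mathbf{z}_{\mathcal{N}_i}=\mathbf{z}'\}$.

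Then I would compare this with the estimand $\bar\beta=\frac1n\sum_i\beta_i$, viewed as the linear functional with coefficient $1/n$ on each $\beta_i$ and $0$ on every other parameter. For sufficiency, assuming (C1)--(C4), I would substitute the prescribed coefficient values and read off $\Ex[\hat{\beta}^{\mathbf{w}}]=\frac1n\sum_i\beta_i=\bar\beta$ for every parameter setting, hence for every NIA-compliant potential-outcome table. For necessity I would use that, under NIA, the parameterization is a bijection between potential outcomes and parameter vectors in $\Re^{\sum_i 2^{d_i+1}}$ (the counts match, as established when the NIA model was introduced), so the parameters vary freely and independently; two linear functionals agreeing on all of $\Re^{k}$ must have identical coefficients, and equating them coefficient-by-coefficient returns precisely (C1)--(C4).

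The main obstacle --- really the only point requiring care --- is the necessity direction, where I must justify that the constraints are forced rather than merely sufficient. The cleanest route is to exhibit, for each parameter, a potential-outcome configuration that isolates it: setting a single $\beta_i=1$ with all other parameters equal to $0$ forces $\Ex[\hat{\beta}^{\mathbf{w}}]=\bar\beta=1/n$ and yields (C1); setting a single $\alpha_i=1$ with all else $0$ forces the expectation to equal $\bar\beta=0$ and yields (C2); and the analogous one-hot choices of $\Gamma_i(\mathbf{z}')$ and of $\Delta_i(\mathbf{z}')$ yield (C3) and (C4). Each such configuration is a legitimate NIA potential-outcome table precisely because the parameterization is bijective, so no feasibility issue arises. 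I would also remark that the interchange of summation is valid because all sums are finite, so no integrability concerns enter.
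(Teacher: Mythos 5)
Your proof is correct and matches the paper's reasoning: the paper justifies this proposition only by the informal observation that the expectation of $\hat{\beta}^{\mathbf{w}}$ is a linear functional of the freely varying parameters $(\alpha_i,\beta_i,\Gamma_i,\Delta_i)$, so unbiasedness for every NIA-compliant table amounts to matching coefficients with $\bar\beta=\frac1n\sum_i\beta_i$, which yields exactly (C1)--(C4). Your explicit expansion over indicators and the one-hot necessity argument are the natural formalization of that same coefficient-matching idea, with the bijectivity of the parameterization (noted in the paper via the count $2^{d_i+1}$) correctly invoked to ensure the isolating configurations are legitimate.
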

The first line, C1, ensures that the weights for $\beta_i$ terms average to $1/n$ across allocations which guarantees the weights in front all $\beta$ terms average to unity across allocations.
The last three lines, C2, C3, and C4, ensure all other parameters will have weights that average to zero. 
Due to the fact that we allow for direct treatment effect heterogeneity, these constraints are also equivalent to the constraint that for each $i\in[n]$, the estimator $\hat{\beta}_i = w_i(\mathbf{z}^{obs}) Y_i(\mathbf{z}^{obs})$ is unbiased for $\beta_i/n$.

By manipulating the constraints above, we can find a concise condition on the design $p$ such that unbiased estimators of $\bar{\beta}$ exist.
First, by combining the C4 constraints and subtracting them from the C1 constraints we have that for all $i\in [n]$ it must hold that
\[
   \sum_{\bf{z}\in \mathcal{Z}} p(\mathbf{z})  w_{i}(\mathbf{z})z_i \bm{1}\{ d_i^{\mathbf{z}}=0\}  = 1/n \text{ and} \sum_{\bf{z}\in \mathcal{Z}} p(\mathbf{z})  w_{i}(\mathbf{z})(1-z_i) \bm{1}\{ d_i^{\mathbf{z}}=0\}  = -1/n
\]
This leads to the following results.

\begin{proposition}\label{prop:existNIA}
Under NIA, LUEs exist if and only if for each $i\in[n]$ there exist allocations $\mathbf{z},\mathbf{z}'$ such that $p(\mathbf{z})>0$, $p(\mathbf{z}')>0$, $d_i^\mathbf{z}=d_i^{\mathbf{z}'}=0$, $z_i=1$, and $z_i'=0$.
\end{proposition}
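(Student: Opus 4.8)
The plan is to work directly from Proposition~\ref{prop:constraintNIA}, which establishes that an LUE exists precisely when there is a collection of weight functions $w_i$ satisfying C1--C4. I would prove the two directions separately, treating each unit $i$ in isolation, since the constraints decouple across units (each $w_i$ is a free function and appears in its own block of constraints).

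For the \emph{only if} direction, I would invoke the two reduced identities derived immediately above the statement, namely
\[
   \sum_{\mathbf{z}\in \mathcal{Z}} p(\mathbf{z})\, w_{i}(\mathbf{z})\, z_i\, \bm{1}\{ d_i^{\mathbf{z}}=0\}  = \tfrac1n
   \quad\text{and}\quad
   \sum_{\mathbf{z}\in \mathcal{Z}} p(\mathbf{z})\, w_{i}(\mathbf{z})\,(1-z_i)\, \bm{1}\{ d_i^{\mathbf{z}}=0\}  = -\tfrac1n .
\]
Since each right-hand side is nonzero, neither sum can be empty. The first forces at least one allocation $\mathbf{z}$ with $p(\mathbf{z})>0$, $z_i=1$, and $d_i^{\mathbf{z}}=0$ (otherwise every summand would vanish, giving $0\neq \tfrac1n$), and the second forces an allocation $\mathbf{z}'$ with $p(\mathbf{z}')>0$, $z_i'=0$, and $d_i^{\mathbf{z}'}=0$. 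This is exactly the claimed support condition.

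For the \emph{if} direction, the subtlety is that the two reduced identities are only necessary consequences of C1--C4, so merely reproducing them is insufficient; I must exhibit weights meeting all four families of constraints. The key observation is that if $w_i$ is chosen to vanish on every allocation with $d_i^{\mathbf{z}}>0$, then C3 and C4 hold automatically, because for each $\mathbf{z}'\neq\bm{0}$ the indicator $\bm{1}\{\mathbf{z}_{\mathcal{N}_i}=\mathbf{z}'\}$ is supported on $\{d_i^{\mathbf{z}}>0\}$, where $w_i$ is zero. It then remains only to satisfy C1 and C2. Concretely, I would fix the two guaranteed allocations $\mathbf{z}^{+}$ (with $z_i=1$, $d_i^{\mathbf{z}^{+}}=0$) and $\mathbf{z}^{-}$ (with $z_i=0$, $d_i^{\mathbf{z}^{-}}=0$), set $w_i(\mathbf{z}^{+})=\tfrac{1}{n\,p(\mathbf{z}^{+})}$ and $w_i(\mathbf{z}^{-})=-\tfrac{1}{n\,p(\mathbf{z}^{-})}$, with $w_i\equiv 0$ elsewhere; a one-line check confirms C1 and C2. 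Because the $w_i$ can be specified independently across units, this assembles a valid LUE.

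The main obstacle is conceptual rather than computational: recognizing that the reduced identities close the necessity argument but that sufficiency requires returning to the full C1--C4 system, and then noticing that concentrating all of unit $i$'s weight on allocations with no treated neighbors trivializes C3 and C4 while reducing C1 and C2 to a two-variable linear system that is always solvable under the support hypothesis.
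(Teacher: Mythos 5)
Your proposal is correct and follows essentially the same route as the paper: necessity from the two reduced identities (a vanishing sum would contradict $\pm 1/n$), and sufficiency via inverse propensity weights $\pm\frac{1}{n\,p(\cdot)}$ concentrated on the two allocations with $d_i^{\mathbf{z}}=0$, zero elsewhere. Your write-up is merely more explicit than the paper's in verifying that C3 and C4 hold automatically for such weights (and it also fixes a small typo in the paper's proof, which writes $p(\mathbf{z})^{-1}$ where $p(\mathbf{z}')^{-1}$ is meant).
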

\begin{proof}
For the necessity, if no such allocations exist, then one of the summations in the displayed equation above would necessarily sum to $0$ rather than $\pm 1/n$. 
The sufficiency follows by considering inverse propensity score weighting. 
Namely, for the two allocations that satisfy the conditions, set $w_i(\mathbf{z})=p(\mathbf{z})^{-1}/n$, $w_i(\mathbf{z}')=-p(\mathbf{z})^{-1}/n$, and $w_i(\mathbf{z}'')=0$ for all other allocations.
\end{proof}

This propositions implies that, regardless of the network structure, one can construct a design such that unbiased estimators exist.
By ensuring the support of the design contains $\{\mathbf{e}_i: i\in[n]\} \cup \{\mathbf{0}\}$ where $\mathbf{e}_i$ is the allocation where only unit $i$ is treated, the constraints of Proposition~\ref{prop:existNIA} are satisfied.
In particular, unbiased estimators will always exist for non-trivial Bernoulli trials.

\begin{remark}
More generally we can consider designs derived from a collection of independent sets, which are subsets of vertices where no two vertices in the same subset are adjacent.
An independent set for a network $g$ is a set $\mathcal{V}\subset [n]$ such that for all $i,j\in \mathcal{V}$, $g_{ij}=0$, meaning  there are no edges between units in $\mathcal{V}$.
Let $\mathcal{V}_1,\dotsc,\mathcal{V}_M$ be a collection of disjoint independent sets whose union is $[n]$.
Such a collection is called a proper vertex coloring and no two vertices of the same color are adjacent.
An example of a proper vertex coloring is demonstrated in Figure~\ref{fig:ind_set}.

Proper vertex colorings exist for every graph because the singleton $\{i\}$ is an independent set and hence $\{1\},\{2\},\dotsc, \{n\}$ is a proper vertex coloring.
One can verify that the conditions of Proposition~\ref{prop:existNIA} will be satisfied by a design supported on $\bm{0}, \mathbf{z}^{(1)},\dotsc, \mathbf{z}^{(M)}$, where for each $m$, $\mathbf{z}^{(m)}$ is the allocation with treatment group $\mathcal{V}_m$.
Designs based off of independent sets ensure unbiased estimates exist and they can also be used to minimize the presence of interference effects.\qed
\end{remark}

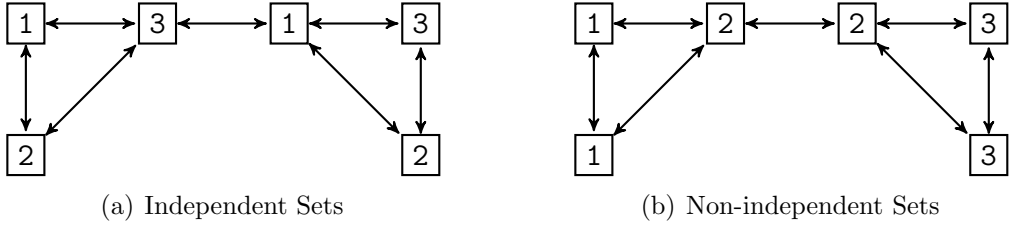
\begin{figure*}[tb]
	\hfill
	\subfigure[Independent Sets]{
	\begin{tikzpicture}[<->,>=stealth',shorten >=1pt,auto,node distance=1.75cm,
	  thick, column sep=1cm,main node/.style={font=\ttfamily,rectangle, draw=black, fill=none}]
	
	  \node[main node] (one) {1};
	  \node[main node] (two) [below of=one] {2};
	  \node[main node] (three) [right of=one] {3};
	  \node[main node] (four) [right of=three] {1};
	  \node[main node] (five) [right of=four] {3};
	  \node[main node] (six) [below of=five] {2};
	  
	  \path[every node/.style={font=\sffamily\small}]
	    (one) edge node [left]{} (two)
	          edge node [below]{} (three)
	    (two) edge node [below]{} (three)
	    (three) edge node [left]{} (four)
	    (four) edge node [left]{} (five)
	    (four) edge node [left]{} (six)
	    (six) edge node [left]{} (five);
	\end{tikzpicture}
	}
\hfill
	\subfigure[Non-independent Sets]{
	\begin{tikzpicture}[<->,>=stealth',shorten >=1pt,auto,node distance=1.75cm,
	  thick, column sep=1cm,main node/.style={font=\ttfamily,rectangle, draw=black, fill=none}]
	
	  \node[main node] (one) {1};
	  \node[main node] (two) [below of=one] {1};
	  \node[main node] (three) [right of=one] {2};
	  \node[main node] (four) [right of=three] {2};
	  \node[main node] (five) [right of=four] {3};
	  \node[main node] (six) [below of=five] {3};

	  \path[every node/.style={font=\sffamily\small}]
	    (one) edge node [left]{} (two)
	          edge node [below]{} (three)
	    (two) edge node [below]{} (three)
	    (three) edge node [left]{} (four)
	    (four) edge node [left]{} (five)
	    (four) edge node [left]{} (six)
	    (six) edge node [left]{} (five);
	\end{tikzpicture}
	}	\hfill ~\!

	\caption{(a) The left panel shows three independent sets, labeled 1, 2, and 3, which partition the graph and form a proper vertex coloring. Notice that no vertex is adjacent to a vertex with the same number. (b) The right panel shows a partition but the sets are not independent sets since vertices are adjacent to other vertices with the same partition.}
	\label{fig:ind_set}
\end{figure*}

In Section~\ref{sec:designExamples} we will examine whether the constraints from Proposition~\ref{prop:existNIA} are satisfied for standard designs for certain networks.
Next, we investigate the existence of unbiased estimators under SANIA, but note that merely assuming symmetrically received interference, the conditions for the existence of  unbiased estimators are the same as in Proposition~\ref{prop:existNIA}.

\subsubsection{SANIA}
Under SANIA, where $Y_i(\mathbf{z}) = \alpha_i+\beta_i+\Gamma_i(d_i^{\mathbf{z}})$, the constraints C1 and C2  from Proposition~\ref{prop:constraintNIA} remain unchanged and constraint C4 is vacuous.
The constraint C3 for the $\Gamma_i$ coefficients are replaced with for all $i\in [n]$ it holds that 
\begin{align*}
\forall d\in [d_i],& \quad \sum_{\bf{z}\in \mathcal{Z}} p(\mathbf{z})  w_{i}(\mathbf{z}) \bm{1}\{ d_i^\mathbf{z}=d\}  = 0. \tag{$\Gamma_i(d)$ coefficients) (C3$'$}
\end{align*}
The exclusion of interaction effects due to the additivity of main effects, Definition~\ref{def:add}, as well as the symmetrically received interference effects, Definition~\ref{def:sri}, affords much more flexibility for unbiased estimates yielding weaker constraints for their existence.
\begin{proposition}\label{prop:existSANIA}
Under SANIA,  LUEs exist if and only if for each $i\in[n]$ there exist allocations $\mathbf{z},\mathbf{z}'$ such that $p(\mathbf{z})>0$, $p(\mathbf{z}')>0$, $d_i^\mathbf{z}=d_i^{\mathbf{z}'}$, $z_i=1$, and $z_i'=0$.
\end{proposition}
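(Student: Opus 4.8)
The plan is to mirror the proof of Proposition~\ref{prop:existNIA}, now exploiting that under SANIA the unbiasedness constraints reduce to C1, C2, and C3$'$ (with C4 vacuous), and that these constraints decouple across units: each involves only the weights $w_i(\cdot)$. So it suffices to fix a unit $i$ and determine when real-valued weights $w_i$ satisfying C1, C2, and C3$'$ exist. First I would group the allocations in the support of $p$ by the pair $(d_i^{\mathbf{z}}, z_i)$, writing $A_d = \sum_{\mathbf{z}: d_i^{\mathbf{z}} = d,\, z_i = 1} p(\mathbf{z}) w_i(\mathbf{z})$ and $B_d = \sum_{\mathbf{z}: d_i^{\mathbf{z}} = d,\, z_i = 0} p(\mathbf{z}) w_i(\mathbf{z})$ for each $d \in \{0, 1, \dots, d_i\}$.

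In this notation C1 reads $\sum_d A_d = 1/n$, constraint C2 reads $\sum_d (A_d + B_d) = 0$, and C3$'$ reads $A_d + B_d = 0$ for each $d \in [d_i]$. Subtracting the C3$'$ identities from C2 recovers $A_0 + B_0 = 0$ as well, so the whole system is equivalent to requiring $A_d + B_d = 0$ for all $d \in \{0, \dots, d_i\}$ together with $\sum_d A_d = 1/n$. This is the analogue of the manipulation preceding Proposition~\ref{prop:existNIA}, except that there the relevant degree was pinned to $d = 0$, whereas here any degree level may serve.

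For sufficiency, given a matched pair $\mathbf{z}, \mathbf{z}'$ at a common treated degree $d$ with $z_i = 1$, $z_i' = 0$, both of positive probability, I would set $w_i(\mathbf{z}) = p(\mathbf{z})^{-1}/n$, $w_i(\mathbf{z}') = -p(\mathbf{z}')^{-1}/n$, and $w_i \equiv 0$ on all other allocations, exactly as in Proposition~\ref{prop:existNIA}; this yields $A_d = 1/n$, $B_d = -1/n$, and $A_{d'} = B_{d'} = 0$ for $d' \neq d$, so the entire system holds. For necessity I would argue contrapositively: if for some $i$ no such matched pair exists, then at every degree level $d$ at most one of the two groups (the $z_i = 1$ group or the $z_i = 0$ group) is populated in the support. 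The identity $A_d + B_d = 0$ then forces $A_d = 0$ at each $d$: if only the $z_i = 0$ group is populated, $A_d = 0$ trivially, and if only the $z_i = 1$ group is populated then $B_d = 0$ is forced, whence $A_d = 0$. Summing gives $\sum_d A_d = 0 \neq 1/n$, contradicting C1, so no LUE exists.

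The only delicate step is the necessity direction. The subtlety is that, unlike the NIA case where the obstruction lived at degree zero, here one must recognize that C2 and C3$'$ jointly annihilate the net weighted mass $A_d + B_d$ at \emph{every} degree including $d = 0$, and that the weakening of C3 to the degree-aggregated C3$'$ is precisely what allows a single matched degree level to carry the estimator. Getting the bookkeeping of which groups are populated correct, and confirming that $A_d = 0$ in each case, is where the argument must be made carefully.
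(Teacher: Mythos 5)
Your proof is correct and takes essentially the same approach as the paper: the paper states Proposition~\ref{prop:existSANIA} without an explicit proof, and your argument is precisely the natural adaptation of the paper's proof of Proposition~\ref{prop:existNIA} --- inverse propensity score weighting on a matched pair for sufficiency, and showing the constraints force the $\beta_i$-coefficient sum to vanish for necessity. Your $A_d,B_d$ bookkeeping correctly captures the reduction of C1, C2, C3$'$ to a per-degree system, which is exactly the manipulation the paper performs (at degree zero only) in the NIA case.
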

While designs using independent set designs will allow for the existence of unbiased estimated under SANIA, much more flexible designs will  yield unbiased estimators and tractable estimation procedures, as will be seen in the Section~\ref{sec:mivlue}.

\subsubsection{SANASIA} 
\label{ssub:sanasiaUE}
The constraint C1 and C2 on the $\alpha_i$ and $\beta_i$ terms, respectively, are the same for SANASIA as under NIA and SANIA.
Let $\mathcal{C}_1,\dotsc, \mathcal{C}_K$ be the connected components of $h=\bm{I}\{ g^T g>0\}$, as described in Proposition~\ref{prop:paramSANASIA}.
In this case, the only other assumption to ensure that $\hat{\beta}^{\mathbf{w}}$ is unbiased for $\beta$ is that 
\[
 \forall k\in[K],\quad  \sum_{\mathbf{z}} p(\mathbf{z}) \sum_{i \in \mathcal{C}_k} w_i(\mathbf{z})d_i^{\mathbf{z}} = 0. \tag{$\gamma_i$ constraint) (C3$''$}
\]

Due to the nested nature of SANASIA and SANIA, a sufficient condition for unbiased estimates to exist is that the design satisfies Proposition~\ref{prop:existSANIA}.
A necessary but insufficient condition is that for each unit $i$ there exist allocations $\mathbf{z},\mathbf{z}'$ such that $p(\mathbf{z})>0$, $p(\mathbf{z}')>0$ and $z_i=0$ and $z'_i=1$, the condition for the existence of unibased estimates under SUTVA. 
Under SANASIA, we have not found a concise necessary and sufficient condition for the existence unbiased estimators given a design.

\subsection{Examples of Biased and Unbiased Estimates}\label{sec:designExamples}
In this section we will examine Bernoulli trials, completely randomized designs and other designs as well as standard estimators such as the naive estimator and Horvitz-Thompson estimator in the context of small examples.
\begin{example}[Empty Graph, SUTVA]
First, consider the case where the graph is empty, so that $g_{ij}=0$ for all $i,j$, or equivalently that SUTVA holds.
In this case unbiased estimators exist as long as for each unit $i$ there exists allocations $\mathbf{z},\mathbf{z}'$ such that $z_i=1=1-z'_i$ and $p(\mathbf{z}), p(\mathbf{z}')>0$. 
Hence, unbiased estimators exists for any nontrivial Bernoulli trial and completely randomized design.
A standard results is that the Horvitz-Thompson, Eq.~\eqref{eq:est_ht}, is always unbiased if unbiased estimators exist. 
The naive estimator Eq.~\eqref{eq:est_naive}, will be unbiased for any design which is invariant under permutations of the unit labels, which can be proved by an application of Proposition~\ref{prop:vertex_trans} to the case of the empty graph.
\qed
\end{example}

\begin{example}[Complete Graph]
 On the other end of the spectrum, suppose that the graph is complete so that $g_{ij}=1$ for all $i\neq j$. 
 In this case, NIA does not constrain the interference and by Proposition~\ref{prop:existNIA} unbiased estimators exist if and only if $p(\mathbf{e}_i)>0$ for all $i$ and $p(\bm{0})>0$. 
 Note that a Bernoulli trial will satisfy this but a completely randomized design will not because $p(\mathbf{0})=0$ for all but trivial completely randomized designs.
 For any design where unbiased estimators exist, the Horvitz-Thompson estimator with \[ w_i(\mathbf{z}) = p(\mathbf{e}_i)^{-1} \mathbf{I}\{\mathbf{z}=\mathbf{e}_i\} - p(\bm{0})^{-1}\mathbf{I}\{\mathbf{z}=\bm{0}\}, \] will be unbiased. 
 On the other hand, the naive difference of means estimator will never be unbiased in this case.

If SANIA holds, completely randomized designs also do not yield unbiased estimators because the treated degree of a unit is determined by its treatment.
For a completely randomize design with $k$ treated units, a unit that receives treatment will have $k-1$ treated neighbors whereas a unit that receives control will have $k$ treated neighbors, and hence the conditions of Proposition~\ref{prop:existSANIA} will not hold.
It is also straightforward to show that the same is true even under SANASIA. 
If we consider a mixture of two completely randomized designed, one where $n_t$ units are treated and one where $n_t+1$ units are treated, then in this case unbiased estimators will exist for any $n_t < n$.  \qed
 \end{example} 

As shown in \citet{karwa2016bias}, even under SANASIA with the added assumption of constant treatment effects, the naive estimator will be biased for any Bernoulli trial or completely randomized design for any non-empty graph. 
On the other hand, while the standard Horvitz-Thompson estimator under SUTVA may be biased, the use of inverse propensity score weighting will be unbiased in some generality \citep{aronow2013estimating}.

\begin{example}[Triangle plus Tail]\label{ex:tri_w_tail_exist}

Consider an undirected network on four nodes where the first three nodes are all connected and the fourth node is adjacent only to the third node, as illustrated in Figure~\ref{fig:tri_w_tail}. 
For this network, under SANIA, again the a Bernoulli trial will yield unbiased estimates while unbiased estimates do not exist for any completely randomized design because as in the complete graph example, the third node has all other nodes as neighbors. 
Table~\ref{tab:tri_w_tail} shows the values of the potential outcomes under SANIA for all units and all allocations in a completely randomized design with two units treated.
Hence, if the support of the design distribution is any subset of the allocations with exactly two treated units, then no unbiased estimator will exist since the treated degree for unit three is always two if the unit is in the control group and the treated degree is one if the unit is in the treatment group.
This violates Proposition~\ref{prop:existSANIA}. \qed

\begin{figure}
\label{fig:tri_w_tail}
\begin{center}
\begin{tikzpicture}[<->,>=stealth',shorten >=1pt,auto,node distance=1.75cm,
  thick, column sep=1cm,main node/.style={font=\ttfamily,rectangle, draw=black, fill=none}]

  \node[main node] (one) {1};
  \node[main node] (two) [below of=one] {2};
  \node[main node] (three) [right of=one] {3};
  \node[main node] (four) [right of=three] {4};

  \path[every node/.style={font=\sffamily\small}]
    (one) edge node [left]{} (two)
          edge node [below]{} (three)
    (two) edge node [below]{} (three)
    (three) edge node [left]{} (four);
\end{tikzpicture}
\end{center}
\caption{Triangle plus tail graph as described in Example~\ref{ex:tri_w_tail_exist}.}
\end{figure}
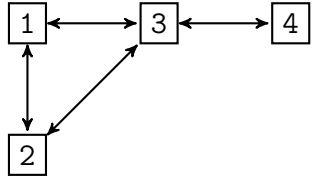

\begin{table}
\begin{center}
\small
\begin{tabular}{c|rrrr} \toprule 
$\mathbf{z}$ & $Y_1^{obs}$ & $Y_2^{obs}$ & $Y_3^{obs}$ & $Y_4^{obs}$ \\ \midrule
$(1,1,0,0)$ & $\beta_1+\Gamma_1(1)+\alpha_1$ & $\beta_2+\Gamma_2(1)+\alpha_2$ & $\Gamma_3(2)+\alpha_3$ & $\alpha_4$ \\
$(1,0,1,0)$ & $\beta_1+\Gamma_1(1)+\alpha_1$ & $\Gamma_2(2)+\alpha_2$ & $\beta_3+\Gamma_3(1)+\alpha_3$ & $\Gamma_4(1)+\alpha_4$ \\ 
$(1,0,0,1)$ & $\beta_1+\alpha_1$ & $\Gamma_2(1)+\alpha_2$ & $\Gamma_3(2)+\alpha_3$ & $\beta_4+\alpha_4$ \\ 
$(0,1,1,0)$ & $\Gamma_1(2)+\alpha_1$ & $\beta_2+\Gamma_2(1)+\alpha_2$ & $\beta_3+\Gamma_3(1)+\alpha_3$ & $\Gamma_4(1)+\alpha_4$ \\ 
$(0,1,0,1)$ & $\Gamma_1(1)+\alpha_1$ & $\beta_2+\alpha_2$ & $\Gamma_3(2)+\alpha_3$ & $\beta_4+\alpha_4$  \\ 
$(0,0,1,1)$ & $\Gamma_1(1)+\alpha_1$ & $\Gamma_2(1)+\alpha_2$ & $\beta_3+\Gamma_3(1)+\alpha_3$ & $\beta_4+\Gamma_4(1)+\alpha_4$\\ \bottomrule
\end{tabular}
\end{center}
\caption{The potential outcomes for the network in Example~\ref{ex:tri_w_tail_exist} for the six treatment assignments that assign exactly two units to treatment under SANIA.
\label{tab:tri_w_tail}}
\end{table}
\end{example}





\section{Minimum Integrated Variance Estimation}\label{sec:mivlue}
At this point we have established the constraints on a linear estimator such that it is unbiased and the constraints on the design such that LUEs exist.
Typically this unbiased estimator will not be unique and so we must introduce additional criteria to chose among them. 
Adopting the variance as our measure of performance for an unbiased estimate, we would ideally minimize variance for all values of the parameters to achieve a uniformly minimum variance LUE. 
However, this project is futile as such estimates do not exist even if one assumes there is no interference.
Hence, we adopt an alternative notion of optimality motivated by Bayesian considerations. 

We propose minimizing the integrated variance (MIV) with respect to distributions on the parameters defined in Section~\ref{sec:param}.
These distributions can be viewed as a prior distribution or as simply a weighting of the parameter space where accuracy is most desired.
Note, one may want to compute the best LUE with respect to a posterior distribution of the parameters that depends on the observed treatments and outcomes, however this will not be guaranteed to be unbiased since the particular unbiased estimator that will be used for each allocation can be different and hence bias can be introduced.\footnote{For each allocation, let $\hat{\beta}^\mathbf{z}$ be the MIV LUE for the posterior when the observed allocation is $\mathbf{z}$. While the constraints will ensure that $\sum_{\mathbf{z}'}p(\mathbf{z}') \hat{\beta}^\mathbf{z}(\mathbf{z}')=\bar{\beta}$ the actual bias of this procedure is $\sum_\mathbf{z} p(\mathbf{z}) \hat{\beta}^{\mathbf{z}}(\mathbf{z}) - \bar{\beta}$ where note that the estimator is different for each allocation $\mathbf{z}$.
Hence, we cannot ensure the overall unbiasedness of the procedure.}
For convenience, we will refer to the distribution on the parameters as the prior distribution.

We will introduce some additional notation used throughout this section.
We denote covariance matrices for the parameters with respect to the prior as 
\begin{align*}
\Sigma_{\xi} &= \mathrm{Cov}(\xi,\xi)\in\Re^{n\times n}\text{ and } \Sigma_{\xi,\xi'} = \mathrm{Cov}(\xi,\xi') \in\Re^{n\times n}
\end{align*}
where $\xi,\xi'$ will denote parameter vectors in $\Re^n$.
For example, under SANIA $\xi,\xi'$ can be any of $\alpha,\beta$, $\Gamma(1),\dotsc, \Gamma(n-1)\in \Re^n$ and $\Sigma_{\alpha}$ denotes the prior covariance for the vector $\alpha=(\alpha_1,\dotsc,\alpha_n)$ and similarly $\Sigma_{\alpha \beta}=\mathrm{Cov}(\alpha,\beta)$.
As is typical for linear estimates \citep{bickel2015mathematical,hoff2009first}, the MIV LUE will depend only on the covariance of the prior and not the explicit form of the prior distribution or higher moments so we do not specify these.


To illustrate the ideas, we'll begin by giving results for MIV LUEs in the SUTVA case before extending these results to SANIA, NIA, and SANASIA.
We'll also focus our attention  on two extreme settings which correspond loosely to maximum and minimum heterogeneity.
The first is priors where parameters are independent across units, where analytical results are straightforward.
The second is priors where with probability one the parameters are equal across units.

\subsection{SUTVA} \label{sub:mivlueSUTVA}

Recall that for SUTVA the parameters are $\alpha=(\alpha_1,\dotsc, \alpha_n)$ and $\beta=(\beta_1,\dotsc,\beta_n)$ and $Y_i(\mathbf{z})=\alpha_i + z_i \beta_i$.
For SUTVA and all other models, it is sufficient to consider mean zero priors on the parameters of the model.
If the priors are not mean zero, then the estimate \[ 
\hat{\beta} = \sum_{i=1}^n w_i(\mathbf{z})\left(Y_i(\mathbf{z}) - \mu_{Y_i(\mathbf{z})}\right) + \mu_{\bar{\beta}}
\]
will be unbiased provided the estimate $\hat{\beta}^{\mathbf{w}}$ is unbiased, where $\mu_{Y_i(\mathbf{z})}$ and $\mu_{\bar{\beta}}$ denote the prior mean of the potential outcome $Y_i(\mathbf{z})$ and average direct treatment effect $\bar{\beta}$, respectively.
Additionally, if $\hat{\beta}^{\mathbf{w}}$ minimizes the integrated variance for a mean zero prior with the same covariance, then the above estimator will minimize variance for the shifted prior. 
This follows from standard arguments from Bayesian estimation \citep{hoff2009first}.

If the priors on $(\alpha_i,\beta_i)$ are mean zero then the integrated variance is 
\begin{equation}
  \mathrm{IVAR}(\hat{\beta}^{\mathbf{w}})=\sum_{\mathbf{z}} p(\mathbf{z}) \sum_{i,j} w_i(\mathbf{z})w_j(\mathbf{z})(\Sigma_{\alpha,ij} +z_i z_j\Sigma_{\beta,ij}+z_j\Sigma_{\alpha\beta,ij} + z_i\Sigma_{\beta\alpha,ij}).
\end{equation}
There are two extreme cases to consider for the priors, one corresponding to a fully heterogeneous outcomes and effects, and one corresponding to constant outcomes and effects.
The first is the case where the potential outcomes are identically distributed and uncorrelated across units so that \[
  \Sigma_{\alpha,ij} = \mathbf{I}\{i=j\} \sigma_{\alpha}^2, \quad 
  \Sigma_{\beta,ij} = \mathbf{I}\{i=j\} \sigma_{\beta}^2, \text{ and } 
  \Sigma_{\alpha\beta,ij}= \mathbf{I}\{i=j\} \sigma_{\alpha,\beta}^2.
\]
The integrated variance simplifies to $\sum_{\mathbf{z}} p(\mathbf{z}) \sum_{i=1}^n w_i(\mathbf{z})^2 (\sigma_\alpha^2 + z_i( \sigma_\beta^2 + \sigma_{\alpha,\beta}^2))$.
For this prior, the Horvitz-Thompson estimator is the MIV LUE, see Theorem~\ref{thm:mivlueSUTVA} part 1.

Alternatively, we can specify a prior where we assume $\alpha_i=\alpha_j$ and $\beta_i=\beta_j$ for all $i,j$.
This is the prior for constant treatment effects with constant baseline outcomes and 
\[
  \Sigma_{\alpha,ij} = \sigma_{\alpha}^2, \Sigma_{\beta,ij} = \sigma_{\beta}^2, \text{ and } \Sigma_{\alpha\beta,ij}= \sigma_{\alpha,\beta}^2.
\]
For nontrivial completely randomized designs and Bernoulli trials excluding the allocations where either all units or no units are treated, the MIV LUE for this prior is the naive estimate, see Theorem~\ref{thm:mivlueSUTVA} part 2.
More generally, this holds for symmetric designs where $p(\mathbf{z})=p(\mathbf{z}')$ whenever $\mathbf{z}$ is a permutation of $\mathbf{z}'$, which are exactly designs which are mixtures of completely randomized designs.

The formal statements of these results follow.

\begin{theorem}\label{thm:mivlueSUTVA}
Suppose that the potential outcomes satisfy SUTVA and the design admits unbiased estimators.
\begin{enumerate}
\item For a prior where the parameters are mean zero and uncorrelated across units, the MIV LUE of $\bar{\beta}$ is the Horvitz-Thompson estimator with 
\[
  w_i(\mathbf{z}) = \frac{z_i}{\sum_{\mathbf{z}'} z_i' p(\mathbf{z}')}- \frac{1-z_i}{\sum_{\mathbf{z}'} (1-z_i') p(\mathbf{z}')} = \frac{2z_i-1}{\Pr[z_i^{obs}=z_i]}.
\]
\item For a prior where the parameters are constant across units, if the design is symmetric and does not contain $\bm{0}$ or $\bm{1}$ in its support, then the MIV LUE for $\bar{\beta}$ is the naive estimator with 
\[
w_i(\mathbf{z}) =  \frac{z_i}{\sum_j z_j} - \frac{1-z_i}{n-\sum_j z_j}.
\]
\end{enumerate}
\end{theorem}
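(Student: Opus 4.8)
The plan is to treat both parts as equality-constrained quadratic programs in the weights $w_i(\mathbf{z})$ and exploit the special covariance structure to reduce each to a transparent optimization. In both cases the unbiasedness constraints are those of Proposition~\ref{prop:constraintNIA} specialized to the empty graph, where C3 and C4 are vacuous, so the only requirements are C1, $\sum_{\mathbf{z}} p(\mathbf{z}) w_i(\mathbf{z}) z_i = 1/n$, and C2, $\sum_{\mathbf{z}} p(\mathbf{z}) w_i(\mathbf{z}) = 0$, each holding separately for every unit $i$.

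For part~1, I would first observe that under the uncorrelated prior the off-diagonal terms of $\mathrm{IVAR}(\hat\beta^{\mathbf{w}})$ vanish, leaving $\sum_{\mathbf{z}} p(\mathbf{z}) \sum_i w_i(\mathbf{z})^2 c_i(\mathbf{z})$ with $c_i(\mathbf{z}) = \sigma_\alpha^2 + z_i(\sigma_\beta^2 + \dotsb) > 0$; since the C1 and C2 constraints are also per-unit, the whole problem decouples into $n$ independent subproblems. Within each unit I would split the allocations according to whether $z_i = 1$ or $z_i = 0$; combining C1 and C2 forces $\sum_{\mathbf{z}: z_i=1} p(\mathbf{z}) w_i(\mathbf{z}) = 1/n$ and $\sum_{\mathbf{z}: z_i=0} p(\mathbf{z}) w_i(\mathbf{z}) = -1/n$. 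Each of these is a problem of the form: minimize $\sum_k p_k w_k^2$ subject to $\sum_k p_k w_k = C$, whose solution by a one-line Lagrange (or Cauchy--Schwarz) argument is the constant weight $w_k = C/\sum_k p_k$. Substituting $C = \pm 1/n$ and the marginal masses $\Pr[z_i^{obs}=1]$, $\Pr[z_i^{obs}=0]$ yields exactly the Horvitz--Thompson weights.

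For part~2, the constant prior makes the covariance matrices rank one, so $\mathrm{IVAR}$ collapses to $\sum_{\mathbf{z}} p(\mathbf{z}) Q(A(\mathbf{z}), B(\mathbf{z}))$, where $A(\mathbf{z}) = \sum_i w_i(\mathbf{z})$, $B(\mathbf{z}) = \sum_i z_i w_i(\mathbf{z})$, and $Q$ is the quadratic form associated with the common $2\times 2$ covariance of $(\alpha_i, \beta_i)$, hence convex. Summing C2 and C1 over $i$ gives the aggregate constraints $\sum_{\mathbf{z}} p(\mathbf{z}) A(\mathbf{z}) = 0$ and $\sum_{\mathbf{z}} p(\mathbf{z}) B(\mathbf{z}) = 1$. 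Applying Jensen's inequality to the convex $Q$ then yields the lower bound $\mathrm{IVAR} \geq Q(0,1) = \sigma_\beta^2$ for every LUE, and I would finish by checking that the naive estimator has $A(\mathbf{z}) = 0$ and $B(\mathbf{z}) = 1$ at every allocation with $0 < \sum_j z_j < n$, so it attains this bound.

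The main obstacle is part~2, and it is subtler than the lower bound itself: the Jensen step uses only the aggregate constraints, so I must separately confirm that the naive estimator is a genuine LUE, i.e.\ that it satisfies the per-unit C1 and C2. This is precisely where the symmetry hypothesis and the exclusion of $\bm{0}, \bm{1}$ from the support enter: for a symmetric design $\Ex[z_i/\sum_j z_j]$ and $\Ex[(1-z_i)/(n - \sum_j z_j)]$ are independent of $i$ and sum to $1$, forcing each to equal $1/n$, which gives C1 and C2. I would also note that equality in Jensen pins down only the aggregates $A \equiv 0$, $B \equiv 1$ on the support, so any assertion of uniqueness of the minimizing weights must come from the per-unit constraints together with nondegeneracy of $Q$, not from the variational argument alone.
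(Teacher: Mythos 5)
Your proposal is correct, and it proves the theorem by a genuinely different route than the paper. The paper does not prove this theorem directly at all: it derives part~1 as a specialization of the general SANIA result (Theorem~\ref{thm:mivlueSANIA}, proved in the appendix by a full Lagrange-multiplier computation over the $\lambda_\alpha,\lambda_\beta,\lambda_{\Gamma d}$ system) to the empty graph, and part~2 as a specialization of Proposition~\ref{prop:vertex_trans} (proved via orbit decompositions of the automorphism group, counting lemmas for vertex-transitive graphs, and explicit verification of the KKT conditions). Your argument is instead self-contained and elementary: for part~1 you exploit the fact that both the diagonal objective and the per-unit constraints decouple across units and across the strata $\{z_i=1\}$, $\{z_i=0\}$, within which the cost coefficient $c_i(\mathbf{z})$ is constant, so a one-line Cauchy--Schwarz/Lagrange step gives constant weights $\pm(1/n)/\Pr[z_i^{obs}=z_i]$; for part~2 you observe that under the constant prior the objective depends on $\mathbf{w}$ only through the aggregates $A(\mathbf{z})=\sum_i w_i(\mathbf{z})$ and $B(\mathbf{z})=\sum_i z_i w_i(\mathbf{z})$, apply Jensen to the convex form $Q$ with the aggregated constraints $\Ex_p[A]=0$, $\Ex_p[B]=1$, and check that the naive estimator attains $A\equiv 0$, $B\equiv 1$ on the support and is a genuine LUE by the permutation-symmetry argument. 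This Jensen argument is considerably slicker than the paper's machinery for this special case; what the paper's route buys is generality, since Theorem~\ref{thm:mivlueSANIA} and Proposition~\ref{prop:vertex_trans} are needed for nonempty networks anyway and the SUTVA statement then comes for free. Your closing caveat is also well taken and applies equally to the paper: since the constant-prior covariance $\Sigma(\mathbf{z})$ is singular, neither your variational argument nor the paper's KKT verification establishes uniqueness of the minimizer, so ``the MIV LUE'' in part~2 should properly be read as asserting that the naive estimator achieves the minimum integrated variance, not that it is the only weight function doing so.
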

We will not prove this theorem directly, but instead apply results from the SANIA setting where the network is empty.
Part 1 of the above is an application of Theorem~\ref{thm:mivlueSANIA}, and part 2 is an application of Proposition~\ref{prop:vertex_trans}.

\subsection{SANIA}\label{sub:mivSANIA}
We now return to SANIA, where the main effects are additive, Definition~\ref{def:add}, and the interference effects are symmetrically received, Definition~\ref{def:sri}.
The potential outcomes satisfy $Y_i(\mathbf{z})=\alpha_i+\beta_i z_i + \Gamma_i(d_i^{\mathbf{z}})$,s see Eq.~\eqref{eq:param_sania}.
For a given allocation $\mathbf{z}$, the integrated square error for $\hat{\beta}^{\mathbf{w}}$ is $\mathbf{w}^T \Sigma(\mathbf{z}) \mathbf{w}-\frac{1}{n^2} \bm{1}^T \Sigma_\beta \bm{1}$ where 
\begin{equation}\label{eq:sigma_z}
\begin{split}
  \Sigma(\mathbf{z}) =&
  \Sigma_\alpha + \mathbf{z}^T \Sigma_\beta \mathbf{z} + \mathbf{z}^T \Sigma_{\beta, \alpha} + \Sigma_{\alpha,\beta}\mathbf{z} \\
  & + \sum_{d=1}^{n-1} \sum_{d'=1}^n \mathbf{I}\{d^{\mathbf{z}}=d\}^T \Sigma_{\Gamma(d),\Gamma(d')}\mathbf{I}\{d^{\mathbf{z}}=d'\} \\
  &+ \sum_{d=1}^{n-1} \mathbf{I}\{d^{\mathbf{z}}=d\}^T \Sigma_{\Gamma(d),\alpha} + \Sigma_{\alpha,\Gamma(d')}\mathbf{I}\{d^{\mathbf{z}}=d\}\\
  &+ \sum_{d=1}^{n-1} \mathbf{I}\{d^{\mathbf{z}}=d\}^T \Sigma_{\Gamma(d),\beta} \mathbf{z} + \mathbf{z}^T \Sigma_{\beta,\Gamma(d')}\mathbf{I}\{d^{\mathbf{z}}=d\}.
\end{split}
\end{equation}
We will now consider first the special case of uncorrelated priors where an explicit solution can be derived in the SANIA and NIA case.
We then analyze the general case where we can compute the solution, but we cannot derive a closed form except for a special case.

\subsubsection{Uncorrelated Prior} 
\label{ssub:uncorrelated_prior}
If the parameters are uncorrelated across units, but possibly correlated within units, then the integrated variance simplifies to 
\[
   \sum_{\mathbf{z}} p(\mathbf{z}) \sum_{i=1}^n w_i(\mathbf{z})^2 \Sigma(\mathbf{z})_{ii} - \frac{1}{n}\sum_i \Sigma_{\beta,ii}
\]
where we require that $\Sigma(\mathbf{z})_{ii}>0$.
Using the method of Lagrange multipliers, we derive an analytic form for the optimal weights $\mathbf{w}$ as a function of the covariances of the parameters and the design.
As under SUTVA, this will lead to a MIV LUE which uses inverse propensity score weighting.

\begin{theorem}\label{thm:mivlueSANIA}
Suppose the potential outcomes satisfy SANIA and the prior on the parameters has no correlation between units. 
If any unbiased estimators exist, then the weights for the MIV LUE are given by
\begin{equation}
    w_i(\mathbf{z}) =\frac{C_{i,d_i^{\mathbf{z}}}}{ \sum_{d=0}^{n-1} C_{i,d} } \frac{(2z_i-1)}{ n \Pr[\mathbf{z}^{obs}_i = z_i, d^{\mathbf{z}^{obs}}_i = d_i^{\mathbf{z}}]}
    \label{eq:mivlue_sania}
\end{equation}
where we define
\begin{align}
    C_{i,d} &=
 \left(   \sum_{\mathbf{z}} p(\mathbf{z}) \mathbf{I}\{d_i^{\mathbf{z}} = d\}  \frac{\Sigma(\mathbf{z})_{ii}  }{\Pr[z^{obs}_i=z_i, d^{\mathbf{z}^{obs}}_i=d]^2}  \right)^{-1}\\
 &= \left( \frac{\Sigma(0,d)_i}{\Pr[z_i^{obs}=0,d_i^{\mathbf{z}^{obs}}=d]} +\frac{\Sigma(1,d)_i}{\Pr[z_i^{obs}=1,d_i^{\mathbf{z}^{obs}}=d]}   \right)^{-1}.
\label{eq:HTscale}
\end{align}
or $C_{i,d}=0$ if either denominator in Eq.~\eqref{eq:HTscale} is zero.
\end{theorem}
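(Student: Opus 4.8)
The plan is to exploit the product structure created by the uncorrelated prior: since the parameters carry no cross-unit correlation, both the integrated variance $\sum_{\mathbf{z}} p(\mathbf{z}) \sum_i w_i(\mathbf{z})^2 \Sigma(\mathbf{z})_{ii}$ and the unbiasedness constraints C1, C2, C3$'$ split into $n$ independent problems, one per unit. I would therefore fix a unit $i$ and minimize $\sum_{\mathbf{z}} p(\mathbf{z}) w_i(\mathbf{z})^2 \Sigma(\mathbf{z})_{ii}$ subject to its own constraints. The key structural observation, read off directly from Eq.~\eqref{eq:sigma_z}, is that the $i$th diagonal entry $\Sigma(\mathbf{z})_{ii}$ depends on the allocation only through the pair $(z_i, d_i^{\mathbf{z}})$; I will write $\Sigma(z,d)_i$ for its value. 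Introducing Lagrange multipliers $\lambda_0,\lambda_1$ for C2 and C1 and $\mu_1,\dots,\mu_{d_i}$ for C3$'$, the stationarity condition at any $\mathbf{z}$ with $p(\mathbf{z})>0$ reads $2 w_i(\mathbf{z}) \Sigma(\mathbf{z})_{ii} = \lambda_0 + \lambda_1 z_i + \mu_{d_i^{\mathbf{z}}}$ (with the convention $\mu_0:=0$). Since both sides depend on $\mathbf{z}$ only through $(z_i, d_i^{\mathbf{z}})$, the optimal weight is itself a function $v_i(z_i, d_i^{\mathbf{z}})$ of that pair alone, and the whole problem collapses onto the $(z,d)$-cells.

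Next I would rewrite everything in terms of the cell probabilities $q_i(z,d) = \Pr[z_i^{obs}=z,\ d_i^{\mathbf{z}^{obs}}=d]$. The objective becomes $\sum_{z,d} q_i(z,d) v_i(z,d)^2 \Sigma(z,d)_i$. The constraints simplify nicely: C3$'$ forces $q_i(0,d)v_i(0,d)+q_i(1,d)v_i(1,d)=0$ for every $d\ge 1$, and adding these to C2 forces the same identity at $d=0$ as well, so the per-degree sums vanish for all $d$. Setting $T_d := q_i(1,d)\, v_i(1,d)$, this identity gives $v_i(0,d) = -T_d/q_i(0,d)$ and $v_i(1,d)=T_d/q_i(1,d)$, while C1 reduces to the single scalar constraint $\sum_d T_d = 1/n$. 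Substituting back, the objective reduces to $\sum_d T_d^2\,\big(\Sigma(0,d)_i/q_i(0,d)+\Sigma(1,d)_i/q_i(1,d)\big) = \sum_d T_d^2\, C_{i,d}^{-1}$, where the bracket is exactly the expression $C_{i,d}^{-1}$ identified in Eq.~\eqref{eq:HTscale}; a short computation splitting the sum in the first form of Eq.~\eqref{eq:HTscale} by $z_i$ confirms the two displayed forms of $C_{i,d}$ agree.

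The remaining step is a one-line weighted least-squares problem: minimize $\sum_d C_{i,d}^{-1} T_d^2$ subject to $\sum_d T_d = 1/n$. By Cauchy--Schwarz (or a final Lagrange multiplier) the minimizer is $T_d \propto C_{i,d}$, normalized to $T_d = C_{i,d}/(n\sum_{d'} C_{i,d'})$. Unwinding $v_i(z,d) = (2z-1)\,T_d/q_i(z,d)$ yields precisely the weights in Eq.~\eqref{eq:mivlue_sania}. Degenerate cells are handled by the stated convention: if either $q_i(0,d)$ or $q_i(1,d)$ is zero then $C_{i,d}=0$, so $T_d=0$ and degree level $d$ is simply not used; Proposition~\ref{prop:existSANIA} guarantees at least one $d$ with both probabilities positive, so $\sum_{d'} C_{i,d'}>0$ and the normalization is well defined.

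I expect the only real subtlety to be the reduction in the first step, namely the justification that the optimizer is constant on each $(z_i,d_i^{\mathbf{z}})$ level set. The Lagrangian stationarity argument delivers this cleanly once one verifies from Eq.~\eqref{eq:sigma_z} that $\Sigma(\mathbf{z})_{ii}$ truly depends on $\mathbf{z}$ only through $(z_i,d_i^{\mathbf{z}})$; one should also note strict convexity, guaranteed by $\Sigma(\mathbf{z})_{ii}>0$, so that the stationary point is the unique global minimum and the weights on allocations with $p(\mathbf{z})=0$ may be fixed arbitrarily without affecting optimality. Everything after that reduction is routine finite-dimensional quadratic optimization.
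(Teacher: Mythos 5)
Your proof is correct, and it reaches Eq.~\eqref{eq:mivlue_sania} by a genuinely different route from the paper's. Both arguments start from the same KKT stationarity condition --- the paper's Eq.~\eqref{eq:wz_uncorr} is, up to the paper's halving of the objective and a relabeling of multipliers, exactly your identity that $2\,w_i(\mathbf{z})\Sigma(\mathbf{z})_{ii}$ equals an affine function of $z_i$ and the indicators $\mathbf{I}\{d_i^{\mathbf{z}}=d\}$ --- but they diverge immediately afterward. The paper stays in the dual: it introduces the aggregates $a_i, b_i, f_{id}, g_{id}$ of Eqs.~\eqref{eq:ai_bi}--\eqref{eq:fi_gi}, solves the resulting linear system for the multipliers $\lambda_{\alpha i},\lambda_{\beta i},\lambda_{\Gamma d i}$ explicitly (first eliminating $\lambda_{\Gamma d i}$, then a $2\times 2$ solve), substitutes back into Eq.~\eqref{eq:wz_uncorr}, and needs a further algebraic pass with $h_{id}=f_{id}-g_{id}$ before the probabilities can be recognized. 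You use stationarity only for the structural conclusion that the optimizer is constant on $(z_i, d_i^{\mathbf{z}})$ cells, then work entirely in the primal: C2 and C3$'$ are eliminated by the $T_d$ parametrization, C1 becomes $\sum_d T_d = 1/n$, and the problem collapses to minimizing $\sum_d C_{i,d}^{-1}T_d^2$ under one scalar constraint, solved by Cauchy--Schwarz. Your route is shorter, avoids solving any linear system, and makes transparent two things the paper only addresses in remarks around the theorem: the degenerate-cell convention $C_{i,d}=0$ (in your setup $T_d=0$ simply drops unusable degree strata, with Proposition~\ref{prop:existSANIA} guaranteeing $\sum_{d'} C_{i,d'}>0$ so the normalization is well defined) and the interpretation of the MIV LUE as an inverse-integrated-variance weighted average of per-degree stratified Horvitz--Thompson estimators. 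What the paper's dual computation buys in exchange is that it is a direct specialization of the general machinery of Section~\ref{sec:sania_nonsingular}, which continues to operate for correlated priors, where your cell-collapse step fails because $\Sigma(\mathbf{z})$ is no longer diagonal and $\Sigma(\mathbf{z})_{ii}$ alone no longer governs unit $i$'s contribution. Both proofs lean on the standing assumption $\Sigma(\mathbf{z})_{ii}>0$, which you correctly flag as the source of strict convexity and hence of uniqueness of the optimum on the support of the design.
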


We make a few observations about this estimator before directing our attention to the general case.
First, if $C_{i,d}=0$ for all $d$ then no unbiased estimators for $\bar{\beta}$ will exist as this violates the conditions of Proposition~\ref{prop:existSANIA}.
Second, for the case where the graph is empty so that SUTVA holds, then $C_{i,d}=0$ for each $d>0$ and this estimator simplifies to the Horvitz-Thompson estimator as stated in Theorem~\ref{thm:mivlueSUTVA}.

Finally, when the graph is non-empty the estimator is a weighted average of Horvitz-Thompson estimators.
Consider an the linear estimator $\hat{\beta}_i^{\tilde{w}}$ with weights $\tilde{w}_i$ defined as $w_i$ above but with $C_{i,d}$ defined to be $\mathbf{I}\{d=d_i^*\}$ for some $d^*$.
We require that $d_i^*$ satisfy $\Pr[z_i^{obs}=1|d_i^{\mathbf{z}^{obs}} = d_i^*]\in(0,1)$ for all $i$. 
Provided this holds for each $i$, the estimates $\hat{\beta}_i^{\tilde{w}}$ is unbiased for $\beta_i/n$ and hence $\sum_i \hat{\beta}_i^{\tilde{w}}$ is unbiased for $\bar{\beta}$.

The estimator in Theorem~\ref{thm:mivlueSANIA} is a weighted average over all valid values of $d_i^*$ of estimates like $\hat{\beta}_i^{\tilde{w}}$.
The value $C_{i,d}$ in Theorem~\ref{thm:mivlueSANIA} is equal to the inverse of the  integrated variance of $\hat{\beta}_i^{\tilde{w}}$ when $d_i^*=d$.
Hence, the estimator from Theorem~\ref{thm:mivlueSANIA} can be viewed as a weighted average of the estimators proposed in \citet{aronow2013estimating} with weights proportional to inverse integrated variances of said estimators. 




\subsubsection{General Prior and Constant Priors} 
\label{ssub:general_prior}
The general case in some ways resembles the case for the independent priors but does not have an explicit form.
Computation of the MIV LUE under a general prior requires finding the solution to a system of linear equations with $|\mathrm{supp}(\mathbf{z}^{obs})| n+2n+\sum_{i=1}^n |\mathrm{supp}(d_i^{\mathbf{z}^{obs}})\setminus \{0\}|$ unknowns.
If the covariance matrices are non-singular some simplifications can be made to reduce the computational complexity of computing this estimator. 
The details of for the general case are provided in Section~\ref{app:sania_general}.

As in Section~\ref{sub:mivlueSUTVA}, the opposite end of the spectrum from uncorrelated priors are priors where all parameters are constant across units.
Such simplifications unfortunately do not yield a closed form for the MIV LUE.
The next two examples illustrate the types of estimators that arise out of such constant parameter priors for two different graphs.



\begin{example}[Triangle plus tail]\label{ex:tri_w_tail_mivlue}
Consider again the graph from Example~\ref{ex:tri_w_tail_exist} and Figure~\ref{fig:tri_w_tail} with four nodes consisting of a triangle with one extra node adjacent to the first node in the triangle.
We consider the prior where with probability one $\alpha_i=\alpha$, $\beta_i=\beta$, and $\Gamma_i=\Gamma$ for all $i$. 
The prior distribution will be $\alpha\sim\Normal(0,1)$, $\beta\sim\Normal(0,1)$, and $\Gamma_i(d)\sim \Normal(0,d)$.
We consider the design which is a uniform sample over all allocations except for $\bm{0}$ and $\bm{1}$, essentially a Bernoulli trial with treatment probability 1/2 and the trivial allocations removed.
The resulting weights for the MIV LUE are given in Table~\ref{tab:tri_w_tail_mivlue}.
Careful investigation does not illuminate any straightforward pattern to these weights and moreover some weights seem nonsensical as high positive weights are given to units in the control group and visa versa.
Nonetheless, using prior distributions such as these will often lead to well performing estimators but this lack of a clear pattern is prototypical of MIV LUEs.
\qed
\end{example}

\begin{table}[ht]
\centering
\small
\begin{tabular}{r|rrr|rrr|rrr}
  \hline
$i$ & $z_i$ & $d_i^{\mathbf{z}}$ & $w_i(\mathbf{z})$ & $z_i$ & $d_i^{\mathbf{z}}$ & $w_i(\mathbf{z})$& $z_i$ & $d_i^{\mathbf{z}}$ & $w_i(\mathbf{z})$\\ 
  \hline
1 &   1 &   0 &   0 &   0 &   1 & -1.7 &   1 &   1 & 0.92 \\ 
  2 &   0 &   1 &  -2 &   1 &   0 & 1.3 &   1 &   1 & 0.41 \\ 
  3 &   0 &   1 &  -2 &   0 &   1 & 1.8 &   0 &   2 & -0.017 \\ 
  4 &   0 &   1 & 3.9 &   0 &   0 & -1.7 &   0 &   1 & -1.4 \\ 
   \hline
  \hline
$i$ & $z_i$ & $d_i^{\mathbf{z}}$ & $w_i(\mathbf{z})$ & $z_i$ & $d_i^{\mathbf{z}}$ & $w_i(\mathbf{z})$& $z_i$ & $d_i^{\mathbf{z}}$ & $w_i(\mathbf{z})$\\ 
  \hline
1 &   0 &   1 & -1.7 &   1 &   1 & 0.92 &   0 &   2 & 0.067 \\ 
  2 &   0 &   1 & 1.8 &   0 &   2 & -0.017 &   1 &   1 & 0.37 \\ 
  3 &   1 &   0 & 1.3 &   1 &   1 & 0.41 &   1 &   1 & 0.37 \\ 
  4 &   0 &   0 & -1.7 &   0 &   1 & -1.4 &   0 &   0 &  -1 \\ 
   \hline
  \hline
$i$ & $z_i$ & $d_i^{\mathbf{z}}$ & $w_i(\mathbf{z})$ & $z_i$ & $d_i^{\mathbf{z}}$ & $w_i(\mathbf{z})$& $z_i$ & $d_i^{\mathbf{z}}$ & $w_i(\mathbf{z})$\\ 
  \hline
1 &   1 &   2 & -2.5 &   0 &   1 & 0.13 &   1 &   1 & 1.4 \\ 
  2 &   1 &   2 & 1.4 &   0 &   0 & -0.85 &   0 &   1 & -0.69 \\ 
  3 &   1 &   2 & 1.4 &   0 &   0 & -0.85 &   0 &   1 & -0.69 \\ 
  4 &   0 &   1 & -0.23 &   1 &   0 & 1.3 &   1 &   1 & -0.015 \\ 
   \hline
  \hline
$i$ & $z_i$ & $d_i^{\mathbf{z}}$ & $w_i(\mathbf{z})$ & $z_i$ & $d_i^{\mathbf{z}}$ & $w_i(\mathbf{z})$& $z_i$ & $d_i^{\mathbf{z}}$ & $w_i(\mathbf{z})$\\ 
  \hline
1 &   0 &   2 & -0.18 &   1 &   2 & 1.4 &   0 &   2 & -0.18 \\ 
  2 &   1 &   0 & -0.49 &   1 &   1 & 0.4 &   0 &   1 & -0.37 \\ 
  3 &   0 &   1 & -0.37 &   0 &   2 & -1.4 &   1 &   0 & -0.49 \\ 
  4 &   1 &   0 & 1.3 &   1 &   1 & -0.45 &   1 &   0 & 1.3 \\ 
   \hline
  \hline
$i$ & $z_i$ & $d_i^{\mathbf{z}}$ & $w_i(\mathbf{z})$ & $z_i$ & $d_i^{\mathbf{z}}$ & $w_i(\mathbf{z})$\\ 
  \hline
1 &   1 &   2 & 1.4 &   0 &   3 &   0 \\ 
  2 &   0 &   2 & -1.4 &   1 &   1 & 0.064 \\ 
  3 &   1 &   1 & 0.4 &   1 &   1 & 0.064 \\ 
  4 &   1 &   1 & -0.45 &   1 &   0 & 0.43 \\ 
   \hline
\end{tabular}
\caption{ The (approximate) weights for the MIV LUE for each allocation in a Bernoulli trial excluding the two trivial allocations for Example~\ref{ex:tri_w_tail_mivlue}.}
\label{tab:tri_w_tail_mivlue}
\end{table}

Section~\ref{ssub:uncorrelated_prior}, Section~\ref{sub:mivlueNIA}, and Theorem~\ref{thm:mivlueSUTVA} illustrate how when the prior distribution has no correlation across units, the resulting MIV LUE uses inverse propensity weighting.
Theorem~\ref{thm:mivlueSUTVA} also states that under SUTVA and a constant parameter prior, the naive estimator is the MIV LUE provided the design is symmetric.
On the other hand, the previous example shows that under SANIA, neither the naive estimator nor a stratified version of it are the MIV LUE for the constant prior.
There are two key reasons that this fails in the previous example. 
In the following example, we illustrate a set of conditions under which the stratified naive estimator is the MIV LUE.
While these conditions are sufficient, we have not been able to show that they are necessary.

\begin{example}[Vertex Transitive Graphs]
One reason stratified naive estimators are not MIV LUEs, is they are not even unbiased.
The key reason that a naive-type estimator fails to be the MIV LUE in the previous example is the lack of sufficient symmetry.
Even if the graph was regular, so that all units have equal degree, stratified naive estimators are not necessarily MIV LUEs since the joint distribution of treated degrees and treatments is not exchangeable across units.

A sufficient property for the MIV LUE to be the stratified naive estimator is if the graph is vertex transitive and the design is sufficiently symmetric.
A graph is vertex transitive if for all pairs of units $i,j$, there exists a graph automorphism $\tau_{ij}:[n]\mapsto[n]$ such that $\tau(i)=j$ \citep{godsil2013algebraic}.\footnote{A graph automorphism $\tau:[n]\mapsto[n]$ is a bijection where $g_{ij}=g_{\tau(i)\tau(j)}$ for all $i,j$.}
Vertex transitive graphs include rings, hypercubes, and the Peterson graph and can be informally viewed as graphs which are locally identical at each vertex. 

Vertex transitivity is not quite sufficient to lead to the stratified naive estimator.
Recall that Theorem~\ref{thm:mivlueSUTVA}, part 2, requires that the design is symmetric and the support omits $\bm{0}$ and $\bm{1}$. 
For the SANIA case, we need to exclude any allocations where the set of treated degrees  for the control units, $\{d_i^{\mathbf{z}}: i\in [n], z_i=0\}$, is disjoint from the set of treated degrees for the treated units, $\{d_i^{\mathbf{z}}: i\in [n], z_i=1\}$. 
Viewing the treated degree as a covariate, this excludes any allocations which have no balance on the covariates. 

\begin{proposition}\label{prop:vertex_trans}
Suppose that the graph $g$ is vertex transitive and suppose the design $p$ satisfies,
\begin{enumerate}
    \item for any graph automorphism $\tau$ of $g$, $p(\mathbf{z})=p(\tau(\mathbf{z}))$ and
    \item for all allocations $\mathbf{z}$ where $\{d_i^{\mathbf{z}}: i\in[n], z_i=0\}\cap \{d_i^{\mathbf{z}}: i\in[n], z_i=1\}=\emptyset$, it holds that $p(\mathbf{z})=0$.
\end{enumerate}
Let $n_{z,d}=|\{i:z_i=z,d_i^{\mathbf{z}}=d\}|$.
For any prior where for all $i,j$, it holds that almost surely $\alpha_i=\alpha_j$, $\beta_i=\beta_j$, and $\Gamma_i=\Gamma_j$, the MIV LUE has weights given by
\begin{equation}
    w_i(\mathbf{z}) = \frac{C_{d_i^{\mathbf{z}}}(\mathbf{z})}{\sum_d C_d(\mathbf{z})} \frac{2z_i-1}{n_{z_i,d_i^{\mathbf{z}}}}, \label{eq:strat_naive}
\end{equation}
where 
\[
    C_d(\mathbf{z})=\mathbf{I}\{n_{0,d}>0, n_{1,d}>0 \} \left( \frac{1}{n_{0,d}}+\frac{1}{n_{1,d}} \right)^{-1}.
\]
\end{proposition}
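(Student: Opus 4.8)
The plan is to view the problem as minimizing the convex quadratic integrated variance $\sum_{\mathbf{z}} p(\mathbf{z})\,\mathbf{w}(\mathbf{z})^{T}\Sigma(\mathbf{z})\mathbf{w}(\mathbf{z})$ (the additive term $-\frac{1}{n^{2}}\mathbf{1}^{T}\Sigma_{\beta}\mathbf{1}$ in Eq.~\eqref{eq:sigma_z} is constant in $\mathbf{w}$) over the affine set cut out by the unbiasedness constraints (C1), (C2), (C3$'$). First I specialize to the constant prior: since $\alpha_{i}=\alpha$, $\beta_{i}=\beta$, $\Gamma_{i}=\Gamma$ almost surely, Eq.~\eqref{eq:param_sania} gives $Y_{i}(\mathbf{z})=\alpha+\beta z_{i}+\Gamma(d_{i}^{\mathbf{z}})$, so that $\hat\beta^{\mathbf{w}}(\mathbf{z})=\alpha W_{0}(\mathbf{z})+\beta W_{1}(\mathbf{z})+\sum_{d}\Gamma(d)\,W_{\Gamma,d}(\mathbf{z})$ with aggregate weights $W_{0}=\sum_{i}w_{i}$, $W_{1}=\sum_{i}w_{i}z_{i}$, and $W_{\Gamma,d}=\sum_{i:d_{i}^{\mathbf{z}}=d}w_{i}$. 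The per-allocation prior variance $\mathbf{w}(\mathbf{z})^{T}\Sigma(\mathbf{z})\mathbf{w}(\mathbf{z})=\mathrm{Var}(\hat\beta^{\mathbf{w}}(\mathbf{z}))$ therefore depends on $\mathbf{w}(\mathbf{z})$ only through the triple $u(\mathbf{z})=(W_{0},W_{1},W_{\Gamma,\cdot})(\mathbf{z})$.

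The optimality rests on an exact cancellation. Evaluating the three aggregates for the weights of Eq.~\eqref{eq:strat_naive} — summing the $n_{z,d}$ equal within-type contributions and using that the normalized stratum weights $C_{d}(\mathbf{z})/\sum_{d'}C_{d'}(\mathbf{z})$ sum to one, which is well defined precisely because condition~2 forces each support allocation to contain a degree $d$ with $n_{0,d},n_{1,d}>0$ — yields $W_{0}(\mathbf{z})=0$, $W_{\Gamma,d}(\mathbf{z})=0$ for every $d$, and $W_{1}(\mathbf{z})=1$. Hence $\hat\beta^{\mathbf{w}}(\mathbf{z})=\beta=\bar\beta$ identically in the parameters, so its design variance vanishes for every constant parameter value and the integrated variance equals zero. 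To confirm this is the minimum over all LUEs, I would sum the per-unit constraints over $i$ to obtain $\sum_{\mathbf{z}}p(\mathbf{z})\,u(\mathbf{z})=(0,1,\mathbf{0})$, and then apply Jensen's inequality to the convex map $u\mapsto\mathrm{Var}(\alpha W_{0}+\beta W_{1}+\sum_{d}\Gamma(d)W_{\Gamma,d})$: the objective is bounded below by $\mathrm{Var}(\beta)$, with equality exactly when $u(\mathbf{z})$ is constant across the support — which the cancellation supplies.

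It remains to show Eq.~\eqref{eq:strat_naive} is a genuine LUE, i.e.\ satisfies (C1), (C2), (C3$'$); this is where the symmetry hypotheses enter and where I expect the main work. The key structural fact is that every automorphism $\tau$ preserves treated degrees, $d_{\tau(i)}^{\tau\mathbf{z}}=d_{i}^{\mathbf{z}}$, and hence the type-counts $n_{z,d}$, so the proposed weights are equivariant: $w_{\tau(i)}(\tau\mathbf{z})=w_{i}(\mathbf{z})$. Combining equivariance with design invariance $p(\tau\mathbf{z})=p(\mathbf{z})$ (condition~1) and the change of variables $\mathbf{z}\mapsto\tau\mathbf{z}$, I would prove that $\sum_{\mathbf{z}}p(\mathbf{z})\,w_{i}(\mathbf{z})\,h(z_{i},d_{i}^{\mathbf{z}})$ is independent of $i$ for every $h$, using vertex transitivity to supply, for each $i,j$, an automorphism sending $i$ to $j$. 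Choosing $h\equiv 1$, $h(z,d)=z$, and $h(z,d)=\mathbf{I}\{d=d'\}$ then collapses each per-unit constraint to $\tfrac{1}{n}$ times the corresponding averaged aggregate $\sum_{\mathbf{z}}p(\mathbf{z})W_{\bullet}(\mathbf{z})$, which equals $0$, $1$, $0$ by the cancellation; these are exactly (C2), (C1), (C3$'$). Together with the optimality argument this shows Eq.~\eqref{eq:strat_naive} is a MIV LUE.

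The hardest step is this equivariance-and-change-of-variables verification of unbiasedness: one must track the simultaneous action of $\tau$ on units, allocations, treated degrees, and type-counts, and check that condition~2 is exactly what keeps the normalizer $\sum_{d}C_{d}(\mathbf{z})$ positive on the support so that both the weights and the identity $W_{1}=1$ are meaningful. I would also flag that the minimizer need not be unique — any automorphism-invariant stratum combination with $\sum_{d}\lambda_{d}=1$ produces the same zero-variance estimator — so the statement is best read as exhibiting the canonical inverse-variance representative $C_{d}(\mathbf{z})=(1/n_{0,d}+1/n_{1,d})^{-1}$, which is the specialization of the uncorrelated-prior formula of Theorem~\ref{thm:mivlueSANIA} to equal within-stratum propensities.
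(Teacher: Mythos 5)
Your proposal is correct, but it proves the proposition by a genuinely different route than the paper, on both of the two required steps. For unbiasedness, the paper first reduces to designs supported on a single orbit of the automorphism group, proves a counting lemma (the sets $T_{ij}$ of automorphisms sending $i$ to $j$ all have the same cardinality) to conclude $\Pr[z_i^{obs}=z,\,d_i^{\mathbf{z}^{obs}}=d]=n_{z,d}/n$, computes $\Ex[w_i(\mathbf{z}^{obs})Y_i(\mathbf{z}^{obs})]=\beta_i/n$ directly, and then extends to orbit mixtures by conditioning; you instead prove the equivariance $w_{\tau(i)}(\tau(\mathbf{z}))=w_i(\mathbf{z})$ and use the change of variables $\mathbf{z}\mapsto\tau(\mathbf{z})$ to show each per-unit constraint value is independent of $i$, so that summing over $i$ collapses (C1), (C2), (C3$'$) to the within-allocation identities $\sum_i w_i(\mathbf{z})z_i=1$, $\sum_i w_i(\mathbf{z})=0$, and $\sum_{i:d_i^{\mathbf{z}}=d}w_i(\mathbf{z})=0$, with no orbit decomposition or propensity computation. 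For optimality the divergence is sharper: the paper verifies the KKT stationarity condition of Eq.~\eqref{eq:KKT2} by exhibiting explicit multipliers $\lambda_\alpha=\sigma_{\alpha,\beta}\bm{1}$, $\lambda_\beta=\sigma_{\beta,\beta}\bm{1}$, $\lambda_{\Gamma d}=\sigma_{\beta,\Gamma(d)}\bm{1}$, whereas you observe that under the constant prior the same cancellation makes $\hat{\beta}^{\mathbf{w}}(\mathbf{z})=\beta=\bar{\beta}$ identically on the support, so the integrated variance is exactly zero, which is the trivial lower bound for any unbiased estimator; indeed your Jensen step is not even needed, since nonnegativity of an expected design variance already suffices. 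Your route is shorter and makes transparent both why the constant prior is special (the weights interpolate $\bar{\beta}$ exactly) and why the minimizer is non-unique, which you correctly flag; the paper's route is heavier but slots this example into the general Lagrangian framework of its appendix, a method that continues to work for priors whose minimum integrated variance is strictly positive and where no zero-variance shortcut exists, and it yields the reusable intermediate fact $\Pr[z_i^{obs}=z,\,d_i^{\mathbf{z}^{obs}}=d]=n_{z,d}/n$. One small blemish: your claim that equality in Jensen holds ``exactly when $u(\mathbf{z})$ is constant'' needs strict convexity that may fail, but since only the lower bound and its attainment are used, this does not affect the argument.
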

The proof of this proposition is in Appendix~\ref{app:const_prior}.\qed
\end{example}


\subsection{NIA} 
\label{sub:mivlueNIA}

Under NIA it holds that $Y_i(\mathbf{z})=\alpha_i+\beta_i z_i + \Gamma_i(d_i^{\mathbf{z}}) = \Delta_i(d_i^{\mathbf{z}}) z_i$ .
The techniques to derive MIV LUEs are the same in the NIA case as in the SANIA case. 
One major difference is that the possible interaction between the direct treatment effect and the interference effect means that for an allocation $\mathbf{z}$ where $z_i=0$, it does not hold that $Y_i(\mathbf{z}+\mathbf{e}_i)-Y_i(\mathbf{z})=\beta$. 
This means that we cannot use stratified estimates like those derived for the SANIA case with uncorrelated priors.
Indeed, if only NIA or NIA with symmetrically received interference (SNIA) can be assumed and we find the MIV LUE under an uncorrelated prior then non-zero weights will only be assigned to units with no treated neighbors, as detailed in the following result.
\begin{theorem}
Under NIA (or SNIA), for a prior on the parameters which is uncorrelated across units, the MIV LUE has weights
\begin{equation}
   w_i(\mathbf{z}) = \frac{(2z_i-1)\mathbf{I}\{d_i^{\mathbf{z}}=0\}}{n\Pr[z^{obs}_i = z_i, d_i^{\mathbf{z}^{obs}}=0]}.
 \end{equation} 
\end{theorem}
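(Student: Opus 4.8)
The plan is to exploit the across-unit independence of the prior to split the problem into $n$ identical single-unit optimizations, and then to show that within each unit the unbiasedness constraints force every allocation with at least one treated neighbor to contribute pure, unremovable variance, so the optimal weights must vanish there.

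First I would record that, because the prior is mean zero and uncorrelated across units, the matrices $\Sigma(\mathbf{z})$ are diagonal and, on the feasible set, the integrated variance separates as $\mathrm{IVAR}(\hat\beta^{\mathbf{w}})=\sum_{\mathbf{z}}p(\mathbf{z})\sum_i w_i(\mathbf{z})^2\Sigma(\mathbf{z})_{ii}-\frac{1}{n^2}\sum_i\Sigma_{\beta,ii}$, exactly as in the SANIA display, Eq.~\eqref{eq:sigma_z}, with $\Sigma(\mathbf{z})_{ii}=\Var(Y_i(\mathbf{z}))>0$. The nontrivial point is that the cross term with $\bar\beta$ collapses to the stated constant: expanding $\Cov(Y_i(\mathbf{z}),\beta_i)$ into $\Cov(\alpha_i,\beta_i)$, $z_i\Var(\beta_i)$, $\Cov(\Gamma_i,\beta_i)$ and $z_i\Cov(\Delta_i,\beta_i)$ and integrating against $p$, the constraints C2, C1, C3 and C4 (grouped by the neighbor pattern $\mathbf{z}_{\mathcal{N}_i}$, and using $\Gamma_i(\bm{0})=\Delta_i(\bm{0})=0$) annihilate every term except $\Var(\beta_i)/n$. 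Since C1--C4 are themselves per-unit, the whole problem decouples into minimizing $\sum_{\mathbf{z}}p(\mathbf{z})w_i(\mathbf{z})^2\Sigma(\mathbf{z})_{ii}$ over $w_i(\cdot)$ for each $i$ separately.

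The crux is to show the minimizer is supported on $\{\mathbf{z}:d_i^{\mathbf{z}}=0\}$. I would first observe that C3 and C4, summed over all nonzero neighbor patterns, force the positive-degree weights to contribute nothing to the $\alpha_i$ and $\beta_i$ constraints: $\sum_{\mathbf{z}:d_i^{\mathbf{z}}>0}p(\mathbf{z})w_i(\mathbf{z})=0$ and $\sum_{\mathbf{z}:d_i^{\mathbf{z}}>0}p(\mathbf{z})w_i(\mathbf{z})z_i=0$, so that the degree-zero weights alone must already satisfy the reduced constraints $\sum_{d_i^{\mathbf{z}}=0}p(\mathbf{z})w_i(\mathbf{z})z_i=1/n$ and $\sum_{d_i^{\mathbf{z}}=0}p(\mathbf{z})w_i(\mathbf{z})=0$ (this is the computation preceding Proposition~\ref{prop:existNIA}, which also guarantees feasibility). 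Consequently, given any feasible $\mathbf{w}$, zeroing out every weight with $d_i^{\mathbf{z}}>0$ leaves C1 and C2 intact, satisfies C3 and C4 trivially, and strictly lowers the objective whenever some positive-degree weight was nonzero, since those terms enter with strictly positive coefficients $p(\mathbf{z})\Sigma(\mathbf{z})_{ii}>0$. Hence the optimum is supported on the no-treated-neighbor allocations. I expect this dominance step to be the main obstacle, as it is where one must combine the homogeneity of the interference constraints with the positivity of the variance coefficients to rule out any variance-reducing role for positive-degree weights.

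Finally, on $\{d_i^{\mathbf{z}}=0\}$ we have $Y_i(\mathbf{z})=\alpha_i+\beta_i z_i$, so $\Sigma(\mathbf{z})_{ii}$ depends on $\mathbf{z}$ only through $z_i$, and the two reduced constraints decouple the treated and control allocations. Minimizing $\sum p(\mathbf{z})w_i(\mathbf{z})^2$ on each group subject to its single linear constraint (Lagrange multipliers, equivalently Cauchy--Schwarz) yields weights constant within each group, equal to $1/(n\Pr[z_i^{obs}=1,d_i^{\mathbf{z}^{obs}}=0])$ on the treated side and $-1/(n\Pr[z_i^{obs}=0,d_i^{\mathbf{z}^{obs}}=0])$ on the control side; together these are exactly $w_i(\mathbf{z})=\frac{(2z_i-1)\mathbf{1}\{d_i^{\mathbf{z}}=0\}}{n\Pr[z_i^{obs}=z_i,d_i^{\mathbf{z}^{obs}}=0]}$. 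The argument is identical under NIA and SNIA: the only change is that C3 and C4 are indexed by full neighbor patterns $\mathbf{z}'\in\{0,1\}^{d_i}$ under NIA and by treated degrees $d\in[d_i]$ under SNIA, and in either case their sum over nonzero indices isolates $\{d_i^{\mathbf{z}}=0\}$.
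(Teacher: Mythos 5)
Your proof is correct, and it takes a genuinely different route from the one the paper (implicitly) intends. The paper never writes out a proof of this theorem: Section~\ref{sub:mivlueNIA} says the techniques are ``the same in the NIA case as in the SANIA case,'' meaning the Lagrangian/KKT machinery of the appendix proof of Theorem~\ref{thm:mivlueSANIA} --- introduce multipliers $\lambda_\alpha,\lambda_\beta$ and one multiplier per nonzero neighbor pattern for each of the $\Gamma_i$ and $\Delta_i$ constraint families, solve the stationarity system, and read off that the optimal weights vanish off $\{d_i^{\mathbf{z}}=0\}$. You replace that linear-algebraic computation with a dominance argument: summing C3 and C4 over nonzero patterns shows the positive-treated-degree weights contribute nothing to C1 and C2, so zeroing them preserves feasibility, trivializes C3/C4, and weakly decreases the objective since each term enters with coefficient $p(\mathbf{z})\Sigma(\mathbf{z})_{ii}>0$; the residual problem on $\{d_i^{\mathbf{z}}=0\}$, where $Y_i(\mathbf{z})=\alpha_i+\beta_i z_i$, splits into two single-constraint quadratic programs solved by Cauchy--Schwarz, yielding exactly the stated inverse-propensity weights. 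What your route buys: it avoids solving a multiplier system whose size grows like $2^{d_i}$ per unit, it makes transparent \emph{why} allocations with treated neighbors must get zero weight (they cannot help satisfy C1/C2 but always add variance), and it covers NIA and SNIA uniformly, since only the indexing of the summed constraints changes. What the paper's machinery buys: it extends to settings where no such dominance argument exists --- notably SANIA, where the optimum genuinely spreads weight across all treated degrees (Theorem~\ref{thm:mivlueSANIA}), and correlated priors, where $\Sigma(\mathbf{z})$ is not diagonal and the per-unit decoupling you rely on fails. A further small merit of your write-up: you actually verify, via C1--C4, that the cross term in the integrated mean square error collapses to $\frac{1}{n^2}\sum_i\Sigma_{\beta,ii}$, an identity the paper asserts without proof.
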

Note that for correlated priors, the MIV LUE can have nonzero weights on units with positive treated degrees but again a closed form does not exist for the general case.

\subsection{SANASIA} 
\label{sub:mivlueSANASIA}

Consider the special case where the graph is undirected and not bipartite so that under SANASIA the potential outcomes can be parameterized as $Y_i(\mathbf{z}) = \alpha_i+\beta_i z_i +\gamma d^{\mathbf{z}}_i$.
In this case, we need to consider a prior distribution over the parameters $\alpha_1,\dotsc,\alpha_n$, $\beta_1,\dotsc,\beta_n$, and $\gamma$.
In the case of independent priors on all of these parameters, closed forms for the MIV LUE weights $\mathbf{w}(\mathbf{z})$ do exist but do not have a compact form. 
In Appendix~\ref{app:mivlue_sanasia}, we give the optimal weights in the independent prior case.


\section{Simulations}\label{sec:simulation}

We will now attempt a brief analysis via simulation of the quality of the estimators that we derived above in comparison to other commonly used estimators.
While a full exploration of the space of parameters is well beyond the scope of this work,
we will demonstrate how the performance of various estimators varies as a function of various aspects of the problem, including the potential outcome parameters, network properties, and the number of units.
While our theoretical results show that certain estimators will minimize the integrated variance among all unbiased estimators, under other notions of optimality certain estimators may be more desirable than others and our simulations will serve to investigate the impact of the choice of estimator in situations where the estimator is not necessarily optimal. 
In particular, since our focus in the theory is on finding estimators which minimize integrated variance, we will avoid integrating with respect to parameter distributions and instead focus on the performance of the estimators as the parameters change and with respect to the distributions of the networks.

Certain elements of the simulation will be common across experiments.
First, we consider six different estimators which we detail here.

\begin{description}
\item[Naive] Difference of means between the treatment group and the control group (see Eq.~\eqref{eq:est_naive}). The Naive estimator is the MIV LUE under SUTVA for the prior where all parameters are constant across units and if the design is invariant under permutation of the units. The Naive estimator is not unbiased if there is interference.
\item[Horvitz-Thompson] Inverse propensity score weighting ignoring interference (see Eq.~\eqref{eq:est_ht}). The Horvitz-Thompson estimator is the MIV LUE under SUTVA if the prior distribution is independent across units. This estimator is biased if there is interference.
\item[Stratified Naive] A weighted difference of means estimator stratified according to treated degree (see Eq.~\eqref{eq:strat_naive}). This estimator is MIV LUE under certain strong symmetry conditions as described in Proposition~\ref{prop:vertex_trans} and other this estimator can be biased.
\item[Independent] The MIV LUE under SANIA for the prior distribution where all parameters independent with $\alpha_i, \beta_i \sim \mathcal{N}(0,1)$ and $\Gamma_i(d)\sim \mathcal{N}(0,d)$ for all $i$ and $d$. (See Theorem~\ref{thm:mivlueSANIA}.)
\item[Equal] The MIV LUE under SANIA for a prior where $\alpha_i=\alpha+\epsilon_i$, $\beta_i=\beta$ and $\Gamma_i=\Gamma$ for all $i$.  
The parameters $\alpha,\beta$ and $\Gamma(1),\dotsc,\Gamma(d)$ are all independent with $\Gamma(d)\sim \mathcal{N}(0,d)$ and $\alpha,\beta\sim \mathcal{N}(0,1)$.
Additionally, $\epsilon_1,\dotsc, \epsilon_n \stackrel{iid}{\sim}\mathcal{N}(0,10^{-4})$.
\item[SANASIA] The MIV LUE under SANASIA, where all parameters are independent and $\alpha_i,\beta_i \sim \mathcal{N}(0,1)$ and $\gamma\sim \mathcal{N}(0,1/n)$.
\end{description}
These six estimators will be compared in various settings.
\begin{remark}
For the Equal estimator we impose some variance in the $\alpha_i$ parameters in order to guarantee that the $\Sigma(\mathbf{z})$ matrices are all non-singular. 
This allows for the much more rapid computational procedure as described in Section~\ref{sec:sania_nonsingular}. \qed
\end{remark}
\begin{remark}
For the SANASIA estimator, we choose $\gamma$ to have variance that is a factor of $1/n$ smaller than the variances for the $\Gamma$ parameters. 
Viewing SANASIA as an approximation for SANIA, $\gamma$ corresponds to an average across units of the $\Gamma$ parameters hence the variance of $\gamma$ matches the variance of $\bar{\Gamma}(1)$ for the Independent estimator. 
We could have scaled the variances for the Equal estimator but since the MIV LUE is invariant under uniform scaling of the variances, it would not change the results presented. \qed
\end{remark}

Computation times for these simulations will depend greatly on the two factors, the number of units and the number of allocations in the support of the design.
For most of the estimators, computation will be approximately linear in the number of units but for the Equal and SANASIA estimators the computation time will scale polynomially with the number of units since a matrix inversion is required. 

Additionally, since we will compute the mean square error with respect to the design, each of the estimators will have computation time which scales linearly with the number of allocations in the support of the design.
In order to consider more analyses we choose to keep these numbers small and to leave efforts towards more efficient computation for future work. 
To this end, the second common aspect of our simulations is that we will impose that the cardinality of the support is the minimum of $2^n-2$ and $2^{13}$, where the particular allocations used are sampled uniformly from all allocations.
Additionally, we always omit the allocations $\bm{0}$ and $\bm{1}$ where all units receive the same treatment.

Finally, for each of the simulations we impose that SANIA holds for the potential outcomes. 
While other sets of assumptions such as NIA may be of interest, restricting our attention to the SANIA setting allows us to investigate the impact of the core aspects of the problem, such as the number of units, the density and degree distribution of the network, and the effect sizes.
As we vary these aspects of the problem we will deviate from the regions of the parameter where each of the estimators tends to perform well, in order to discover how the performance of each estimator changes.

\subsection{Varying number of units and edges} 
\label{sub:vary_n_edge}

For this experiment we investigate the performance of the estimators as the number of units $n$ varies. 
We also consider two types of graphs, dense graphs, where for each Monte Carlo replicate we sample an \ER\ graph with edge probability $1/2$, and sparse graphs, where for each Monte Carlo replicate we sample an \ER\ graph with edge probability $1/n$.

For the potential outcomes, we impose that SANIA holds and use parameters that are common across all Monte Carlo replicates with the same number of units. 
The parameters are sampled independently with $\alpha_i\sim \mathcal{N}(0,1)$, $\beta_i \sim \mathcal{N}(2,1)$ and $\Gamma_i(d)=\mathcal{N}(d,1)$ using the same seed across replicates.

For each $n$ ranging in $10,20,\dotsc,50$, we sampled 500 graphs for each of the settings, dense $\mathrm{ER}(n,1/2)$ and sparse $\mathrm{ER}(n,1/n)$.
For each graph, we computed the six estimators and the mean square error for the estimators over the fixed randomization of the allocations $\mathbf{z}$.
We then averaged the mean square error across graphs with the same number of units for the dense and sparse settings.
These estimated mean square errors are plotted in Figure~\ref{fig:vary_n_density}.
We omit error bars as the number of Monte Carlo replicates were sufficiently large for the error bars to be quite small.

\begin{figure}[tb]
    \centering
    \includegraphics[width=\textwidth]{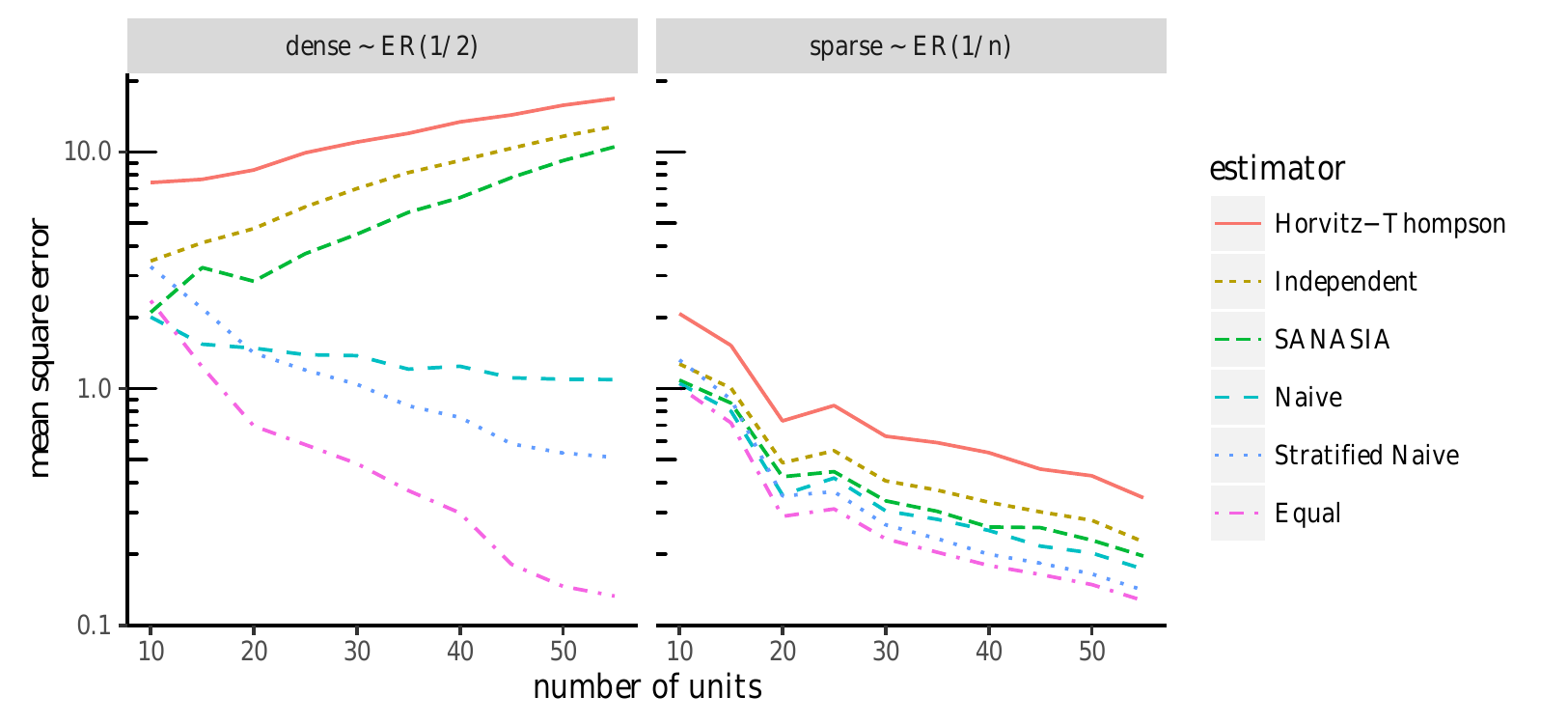}
    \caption{Mean square errors on the log scale for the six estimators as a function of the number of units for dense graphs and sparse graphs.}
    \label{fig:vary_n_density}
\end{figure}

First, ranks for the performance of each estimators are consistent across number of units and density.
Typically, Equal has the lowest mean square error followed by Stratified Naive, Naive, SANASIA, Independent, and finally Horvitz-Thompson.
However, in the case when the number of units is small and the graph is dense, this order can change.
Both the Horvitz-Thompson and Naive estimators are biased and do not take treated degree into account.
Nevertheless, Naive is generally one of the three best estimators according to these metrics.
The Horvitz-Thompson estimator has much higher variance and is always the worst estimator.

The MIV LUE Independent can be viewed as version of the Horvitz-Thompson estimator which stratifies based on treated degree.
Independent has slightly better performance, though the bias reduction does not make up for the large variance due to inverse propensity weighting.
The MIV LUE SANASIA is also derived from a prior where all the $\alpha_i$ and $\beta_i$ are independent but also imposes that the interference effects are constant and linear, which yields slight improvements over Independent.
The MIV LUE Equal imposes the most regularity in the prior, with all parameters being equal across units, which yields the best performance overall.

One problematic aspect of Horvitz-Thompson, Independent, and SANASIA, each of which implicitly or explicitly consider heterogeneity in baseline outcomes and direct effects, is that in the dense case, the mean square error actually increases with the number of units.
This suggests that these estimators have difficulty accounting for large interference effects and indeed, unlike naive type estimators, Horvitz-Thompson estimators can have variances that scale with effect sizes.

All the estimators improve as the number of units increase in the sparse case, where the overall interference effect per unit remains approximately constant in the number of units.

\subsection{Varying effect size}

In this next experiment, we investigate the impact of the size of the effect on the performance of the estimators. 
This is of interest since the MIV LUEs are each derived from mean zero priors so we might expect performance to vary as the effect sizes increase. 

For each monte carlo replicate in this setting we sampled an \ER\ random graph with edge probability $1/2$ and $12$ units. 
The parameters for the potential outcomes were sampled independently with $\alpha_i\sim \mathcal{N}(0,1)$, $\beta_i \sim \mathcal{N}(\mu_\beta,1)$ and $\Gamma_i(d)\sim \mathcal{N}(\mu_\gamma d, 1)$. 
The mean direct treatment effect (DTE) size $\mu_\beta$ ranged in $0,2,4$ and the interference effect (IE) size ranged in $0,0.2,\dotsc, 1.8$.

\begin{figure}[tb]
    \centering
    \includegraphics[width=\textwidth]{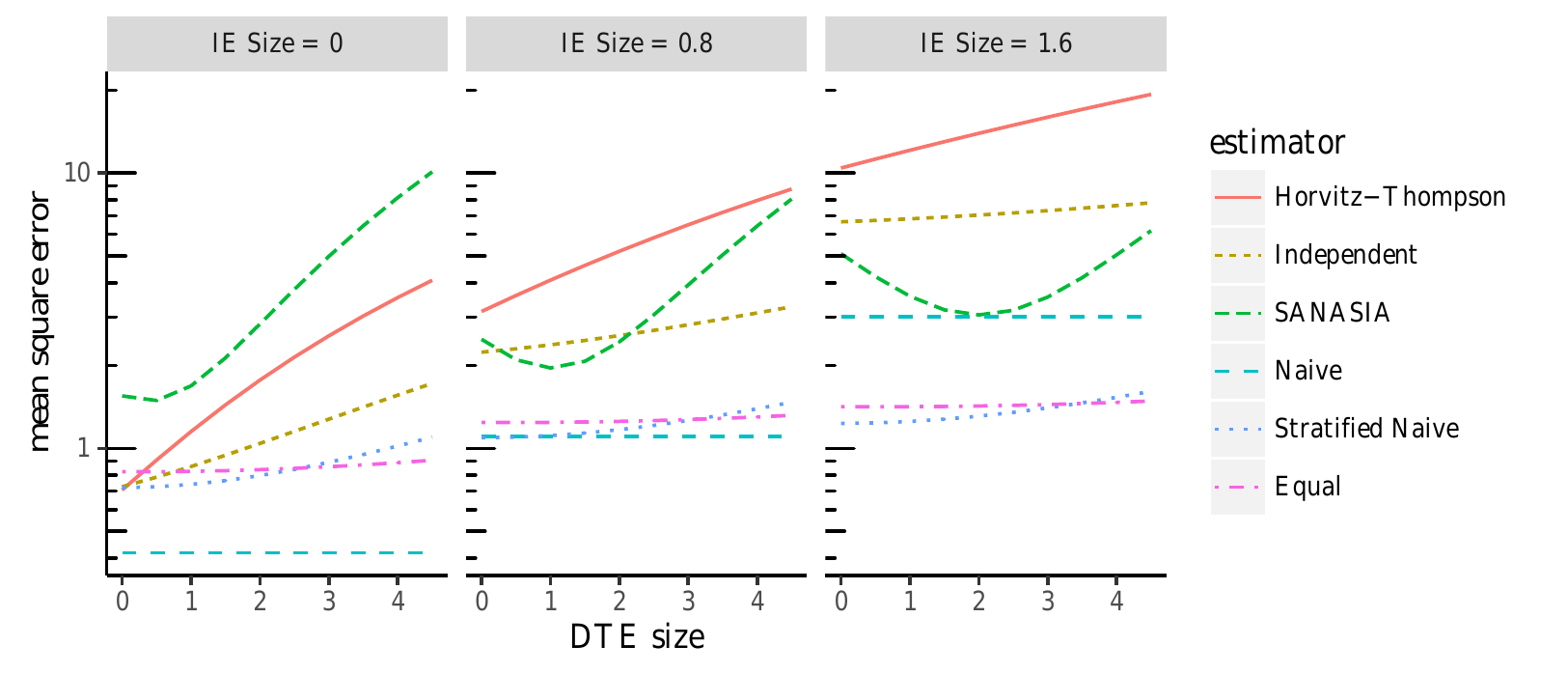}
    \caption{Mean square errors on the log scale for the six estimators as a function of the number of the direct treatment and interference effect sizes.}
    \label{fig:vary_beta_gamma}
\end{figure}

Figure~\ref{fig:vary_beta_gamma} shows the mean square errors for the six estimators described above in these settings. 
First, the mean square error for the naive estimate is invariant to shifts in the DTE size yielding flat lines in each panel.
The stratified naive estimator is not quite flat due to the fact that the bias of this estimator increases slowly with the DTE size due to the lack of balance in the treated degrees for each node.
The MIV LUE equal is also very close to having mean square error invariant to effect size, though it does increase slowly.

On the other hand, both the Horvitz-Thompson and Independent estimators have variances that depend strongly on effect sizes and hence each become poor estimators quite quickly as these increase. 
Each estimator, including the MIV LUEs, have mean square errors which increase with interference effect sizes as the estimators have more to compensate for. 

The MIV LUE SANASIA behaves quite differently from the other estimators, with the mean square error first decreasing with the DTE size and then increasing rather quickly. 
We have not fully investigated the behavior of this estimator but we believe that tuning this estimator more carefully may yield significant improvements.

Overall, we again see that when inteference effects do not exist, a large penalty is payed by using estimators that account for them, but when interference effects are present both the Equal and the Stratified Naive estimates performing well, with the Equal estimators less prone to bias when the direct effect is large.

We also investigated the impact of treatment effect heterogeneity through the variance of the $\beta_i$ and $\Gamma_i(d)$ terms, but we found that the impact of treatment effect heterogeneity was small compared to the impact of effect sizes.
The optimal estimators tended to not change quickly in terms of the level of treatment effect heterogeneity.
While it may be natural to posit that an estimator such as Independent will tend to have relatively superior performance as the effect heterogeneity increases, we found that this is only marginally the case. 
The Independent was less impacted by treatment effect heterogeneity, with a relatively small slope as effect heterogeneity increased, however, the overall scale of the errors for Independent leads to the fact that the effect heterogeneity must be very large before Independent will perform better than other estimators we investigated.
Additionally, the naive estimator was the least susceptible to treatment effect heterogeneity and hence is frequently the best estimator when treatment effect variances are large. 
The performance of the estimators could be impacted by other aspects of the effect size distribution such as the skewness but we have not yet investigated the nature of those impacts.





\subsection{Varying Degree Distribution}

For the simulations above, we only considered graphs distributed according \ER\ distributions which many have noted do not adequately fit most real world graphs \citep{barabasi1999emergence,newman2003structure}
For example, the degree distribution of \ER\ graphs are approximately Poisson or Binomial while many real world graphs exhibit more skewed degree distributions such as power law distributions.
In this simulation we generate graphs with different degree distributions using variations on the preferential attachment model \citep{barabasi1999emergence}. 
In the preferential attachment model, nodes are added one at a time, with new nodes having one edge connecting it to one of the nodes currently in the graph.
The probability that the edge connects to a node currently in the graph is proportional to $d_i^{\alpha}$, where $d_i$ is the current degree of node $i$ and $\rho$ is a parameter.
Once all $n$ nodes have been added and attached to the graph, we randomly permute the vertices, which preserves the degrees but not the order that the nodes were added.
This ensures that the fixed designs are not biased towards higher or lower degree nodes.

In the standard preferential attachment model, $\rho=1$ which yields a power law distribution for the degrees.
When $\rho=0$, the distribution is similar to an \ER\ distribution, while when $\rho=2$, the degree distribution becomes highly skewed, frequently yielding a star graph.

\begin{figure}[tb]
    \centering
    \includegraphics[width=4.4in]{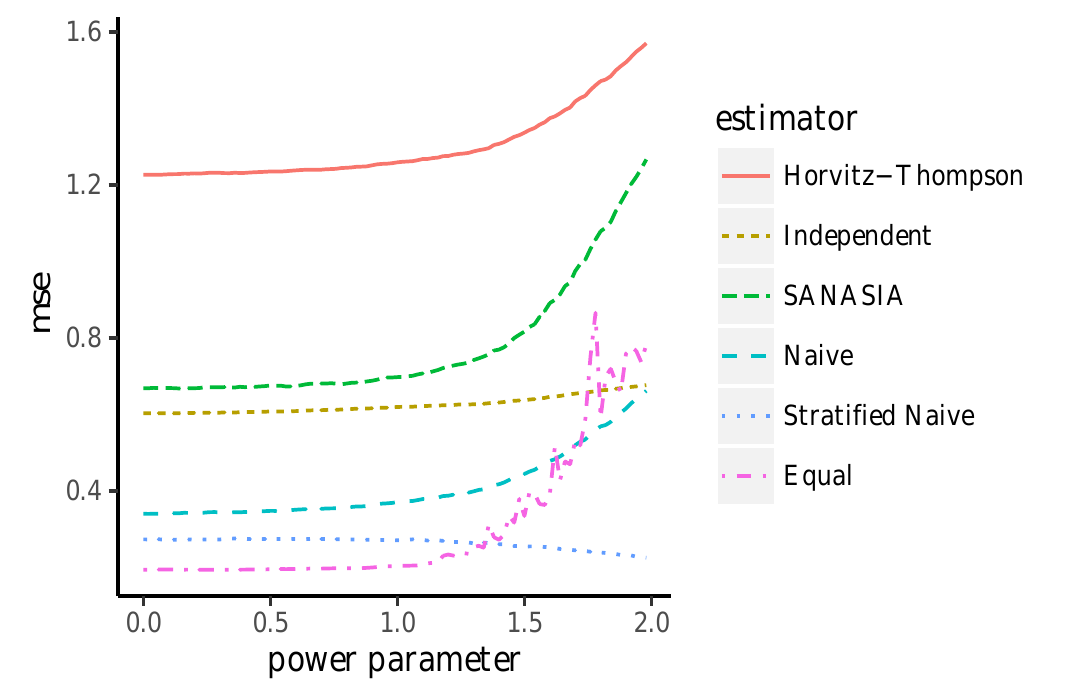}
    \caption{Mean square errors for the six estimators as a function of the number of the direct treatment and interference effect sizes.}
    \label{fig:vary_degree_dist}
\end{figure}

Figure~\ref{fig:vary_degree_dist} shows the mean square error for as a function of the power parameter $\rho$. 
We see that most estimators have there performance degrade increasingly quickly as the power parameter increases, especially beyond $\rho=1.5$.
The best estimator up to that point is again the MIV LUE Equal.
This estimator has performance that becomes non-smooth when the power gets large. We believe this occurs because the distribution of the graphs tends to concentrate and the Equal estimator may be sensitive to small changes in the graph structure. 
Finally, the stratified naive estimator actually has improved performance as the degree distribution becomes more skewed.
This is likely due to the fact that the stratified naive estimator will omit units that have high treated degree, effectively estimating the mean DTE for the remaining nodes which all have similar treated degrees.
Estimators such as Equal instead must be fully unbiased and hence cannot omit nodes with large treated degree.

\section{Discussion}\label{sec:disc}

In this work we have considered the neighborhood interference assumption and four core assumptions structural assumptions regarding the potential outcomes. 
The neighborhood interference assumption is the simplest nonparametric assumption for which units' treatments' can effect which units' outcomes' in a setting with networked units.
As we have seen, even this assumption can lead to complex high-dimensional parameterizations for the potential outcomes that make estimation and inference difficult.

Some practitioners may desire to allow for units at a greater distance to effect each other and this can be done but it requires either allowing for even more potential outcomes, losing information about path lengths by rewiring the graph, or parameterizing how vertices at a greater distance can effect units.
Additionally, due to the small world property of most observed networks, allowing treatments to effect units at a distance of more than two or three edges away would often be nearly equivalent to the arbitrary interference setting.
Hence, even though NIA does not capture all aspects of possible interference in a real problem, we believe for many practitioners NIA does provide a sufficiently rich structure to approximate real-world interference, while promoting a more parsimonious parameterization.

A more challenging problem for many real world setting will be that the observed network does not fully capture the possible paths of interference.
If we view the observed network as a noisy version of a true network encoding interference, then under SANIA, the observed treated degree will also be a noisy observation of the true treated degree for that allocation.
In these settings, it may be necessary to account for this noise by modelling the true network and the noise structure for the observed network or by modelling the noisy structure of the interference directly.
We believe these problems pose very interesting challenges even in the SANIA setting.

We focused on estimation of the direct treatment effect and most of our efforts were restricted to the SANIA setting.
Experimenters may be interested in other estimands under other sets of assumptions but we believe that the general structure of our results regarding how to chose among unbiased estimates will hold for other estimands provided the constraints for unbiased are changed appropriately.

Another core aspect of this work is to consider unbiased estimators, and while we showed in simulation that this can lead to improved performance it still is not clear that unbiased estimators should be generally preferred. 
Our simulations showed that when interference was present, the Equal MIVLUE estimator and the stratified naive estimator performed comparably.
Unbiasedness may be desired if combining the results of many experiments is a goal and unbiased estimators can sometimes be more easily interpreted.
In general, the desired optimality properties of the estimator must be chosen by the practioner.
If unbiasedness is wanted, we have shown that using more generic unbiased estimators based on inverse propensity score weighting can lead to a significant sacrifice in performance. 
Additionally, the minimum integrated variance concept provides a core tool for choosing among unbiased estimators. 

One of the major challenges for using these ideas on large scale social networks is the computational burden for computing the MIVLUEs, especially those where the prior imposes correlation between units.
For relatively simple designs such as Bernoulli trials and completely randomized designs, computation of propensities is staightforward and so for uncorrelated priors the computation can be easily paralyzed across units.
For correlated priors, computation of $\Sigma(\mathbf{z})^{-1}$ is required for each $\mathbf{z}$ in the support of $p$. 
This can be parallelized across allocations with the relevant $\mathbf{w}(\mathbf{z})$ computed in a MapReduce framework.

To reduce the computational burden, practitioners do have a few options.
One option is to reduce the number of units or repeat the experiment across small disconnected subgraphs.
Another option, which we adopted in the simulations, is to reduce the support of the design $p$.
While we did this randomly, one could use rerandomization to reduce the support towards certain desirable designs, such as those with low treated degrees or balanced treated degrees \citep{basse2015optimal}.

In general, constructing the design presents many challenges and one promising aspect of this work is to use the minimum integrated variance ideas to motivate design decisions as well as estimation procedures.
Ideally, one could seek to jointly optimize the design $p$ and the weights $\mathbf{w}$ but this introduces further computational challenges since an iterative approach could involve finding a MIV LUE at each iteration.
Anoter option is to first optimize the design based on other criteria such as balance for the treated degree and then find the MIV LUE for that fixed design.

Finally, we considered the relatively unrealistic setting where there are no additional covariates about the units.
Unit covariates can be handled by many classical tools from causal inference and tools like MIVLUEs can also be applied in analogous manner.
A challenge more closely related to this manuscript is if the edges between units have covariate information.
If this information is categorical, such as a communication type or topic, then then one may want to partition the treated degree according to each edge type, effectively increasing the types of interference treatments.
If these covariates are continuous or more complex then accounting for that information presents further challenges.





\section*{Acknowledgements}
This work was partially supported by National Science Foundation awards CAREER IIS-1149662 and IIS-1409177, and by Office of Naval Research YIP award N00014-14-1-0485, all to Harvard University.  EMA is an Alfred P. Sloan Research Fellow, and a Shutzer Fellow at the Radcliffe Institute for Advanced Study.

\bibliography{../refs/bibliography}

\appendix


\section{Parametric Forms for Combinations of Assumptions}\label{app:param}

Here we give parameterizations for the remaining nine combinations of the four assumptions from Section~\ref{sec:structural}.

\begin{definition}[SNIA]
If we assume symmetrically received interference effects then the potential outcomes can be parameterized as 
\begin{equation}
    Y_i(\mathbf{z}) = \alpha_i + \beta_i z_i + \Gamma_i(d_i^{\mathbf{z}}) + \Delta_i(d_i^{\mathbf{z}})z_i ,
\end{equation}
where $\alpha_i,\beta_i\in\Re$ and $\Gamma_i,\Delta_i:\{0,\dotsc,d_i\}\mapsto\Re$.
\end{definition}

\begin{definition}[ANIA]
The potential outcomes satisfy additivity of main effects if and only if the the potential outcomes can be parameterized as 
\begin{align*}
Y_i(\mathbf{z}) = \alpha_i +\beta_i z_i + \Gamma_i(\mathbf{z}_{\mathcal{N}_i}) 
\end{align*}
where $\alpha_i,\beta_i\in\Re$ and $\Gamma_i:\{0,1\}^{d_i}\mapsto \Re$.
\end{definition}

\begin{definition}[NAIA]
The potential outcomes satisfy additivity of interference effects if and only if the potential outcomes can be parameterized as
 \begin{equation}
    Y_i(\mathbf{z}) = \alpha_i + \beta_i z_i + \sum_{j} \gamma_{ji} g_{ji} z_j +  \sum_{j} \delta_{ji} g_{ji} z_i z_j 
 \end{equation}
 where for each $i,j\in[n]$, $\alpha_i,\beta_i,\gamma_{ij},\delta_{ij}\in \Re$.
 \end{definition} 
\begin{remark}
Each $\gamma_{ji}$ represents the effect of unit $j$ being treated on the outcome of unit $i$ provided there is an edge from $j$ to $i$ and
similarly, $\delta_{ij}$ represents the additional effect due to any interaction between  the treatments of $i$ and $j$.
Note that for each unit $i$ there are still $2^{d_i+1}$ distinct potential outcomes but those outcomes can be parametrized using only $2d_i+2$ parameters. \qed
\end{remark}

\begin{definition}[ANAIA]
The potential outcomes satisfy additivity of main effects and additivity of interference effects if and only if the potential outcomes can be parameterized
\begin{equation}
  Y_i(\mathbf{z}) = \alpha_i+\beta_i z_i + \sum_{j} \gamma_{ij} g_{ij} z_j
\end{equation}
for  $\alpha_i,\beta_i,\gamma_{ij}\in \Re$.
\end{definition}

\begin{definition}[SNAIA]
The potential outcomes satisfy symmetrically received interference effects and additivity of interference effects if and only if the potential outcomes can be parameterized
\begin{align}
  Y_i(\mathbf{z}) &= \alpha_i + \beta_i z_i + \gamma_{i} d_i^{\mathbf{z}} +   \delta_{i} d_i^{\mathbf{z}} z_i
\end{align}
for some $\alpha_i,\beta_i,\gamma_i,\delta_i\in\Re$.
\end{definition}

\begin{definition}[SANAIA]
The potential outcomes satisfy symmetrically received interference effects, additivity of main effects, and additivity of interference effects if and only if the potential outcomes can be parameterized
\begin{align}
  Y_i(\mathbf{z}) &= \alpha_i + \beta_i z_i + \gamma_{i} \sum_{j} g_{ji} z_j \\
  &= \alpha_i + \beta_i z_i + \gamma_{i} d_i^{\mathbf{z}}
\end{align}
for some $\alpha_i,\beta_i,\gamma_i\in \Re$.
\end{definition}




\begin{definition}[NASIA]
The potential outcomes satisfy additivity of interference effects and symmetrically sent interference effects if and only if the potential outcomes can be parameterized
\begin{equation}
   Y_i(\mathbf{z}) = \alpha_i + \beta_i z_i + \sum_{j} \gamma_j g_{ji} z_j +  \sum_{j} \delta_j g_{ji} z_i z_j 
\end{equation}
for some $\alpha_i, \beta_i, \gamma_j, \delta_j\in \Re$.
\end{definition}

\begin{definition}[ANASIA]
The potential outcomes satisfy additivity of main effects, additivity of interference effects, and symmetrically sent interference effects  if and only if the potential outcomes can be parameterized
\begin{equation}
   Y_i(\mathbf{z}) = \alpha_i + \beta_i z_i + \sum_{j} \gamma_j g_{ji} z_j 
\end{equation}
for some $\alpha_i, \beta_i, \gamma_j\in \Re$.
\end{definition}

\begin{definition}[SNASIA]
The potential outcomes satisfy symmetrically received interference effects, additivity of interference effects, and symmetrically sent interference effects  if and only if the potential outcomes can be parameterized
\begin{equation}
   Y_i(\mathbf{z}) = \alpha_i + \beta_i z_i + \gamma_i d_i^{\mathbf{z}} + \delta_i d_i^{\mathbf{z}}.
\end{equation}
for some $\alpha_i, \beta_i, \gamma_i, \delta_i\in \Re$, where as in Proposition~\ref{prop:paramSANASIA}, $\gamma_i=\gamma_j$ and $\delta_i=\delta_j$ if $i$ and $j$ are in the same connected component of $h=\mathbf{I}\{g^T g>0\}$.
\end{definition}



\section{MIV LUE under SANIA}
In this section we will develop a method to solve for the minimum integrated variance linear unbiased estimated under SANIA for quite general priors. 
The method relies on the use of Lagrange multipliers to solve the constrained optimization problem \citep{boyd2004convex}
We will require that $\Sigma(\mathbf{z})$ is positive semidefinite for each $\mathbf{z}$ in the support of the design. 
Note, we do not require $\Sigma(\mathbf{z})$ to be nonsingular, since singular $\Sigma(\mathbf{z})$ arise for certain priors, such as those where parameters are almost surely constant across units. 

\subsection{General Case}\label{app:sania_general}

Using the definition of $\Sigma(\mathbf{z})$ from Eq.~\eqref{eq:sigma_z}, half of the integrated mean square error for an estimate $\hat{\beta}^{\mathbf{w}}$ is given by \[
   \frac{1}{2} \sum_{\mathbf{z}} p(\mathbf{z}) \mathbf{w}(\mathbf{z})^T \Sigma(\mathbf{z}) \mathbf{w}(\mathbf{z}).
\]
We use half the IMSE because it simplifies some formulas later without changing the solution. 

Using the method of Lagrange multipliers we know that the best weights will be a stationary point function
\begin{align*}
L(\mathbf{w},\lambda) = \quad \frac{1}{2}&\sum_{\mathbf{z}} p(\mathbf{z}) \mathbf{w}(\mathbf{z})^T \Sigma(\mathbf{z}) \mathbf{w}(\mathbf{z}) \\
+ \lambda_\alpha^T &\sum_{\mathbf{z}} p(\mathbf{z}) \mathbf{w}(\mathbf{z}) \\
+ \lambda_\beta^T & \left( \sum_{\mathbf{z}} p(\mathbf{z}) \mathbf{w}(\mathbf{z}) \mathbf{z} - \frac{1}{n} \bm{1} \right) \\
+ \sum_{d=1}^{n-1} \lambda_{\Gamma d}^T  &\sum_{\mathbf{z}} p(\mathbf{z}) \mathbf{w}(\mathbf{z}) \mathbf{I}\{d^{\mathbf{z}} =d\}.
\end{align*}
 where $\mathbf{w}:\{0,1\}^n \mapsto \Re^n$ and $\lambda_\alpha,\lambda_\beta,\lambda_{\Gamma 1},\dotsc, \lambda_{\Gamma n-1}\in \Re^n$ are vectors.
 Note that we use the notation $\mathbf{I}\{d^{\mathbf{z}} =d\}$ to refer to the vector of indicators, so that $\mathbf{I}\{d^{\mathbf{z}} =d\}_i=\mathbf{I}\{d^{\mathbf{z}}_i =d\}$.
 Similar abuses of notation will be made later, hopefully with minimum confusion for the reader.
 Finally, for a discrete random variable $X$, its support is the set of values which the random variable takes with some positive probability which we denote $\mathrm{supp}(X)= \{x: \Pr[X=x]>0\}$.

In order to find the MIV LUE, we must find the Karus-Kuhn-Tucker (KKT) point \citep{boyd2004convex} for this problem which is the pair $\mathbf{w},\lambda$ which satisfies 
\[
    \nabla L \begin{pmatrix}
        \mathbf{w} \\ \lambda
    \end{pmatrix}  = \bm{0}.
\]
This is a system of linear equations with $|\mathrm{supp}(\mathbf{z}^{obs})| n+2n+\sum_{i=1}^n |\mathrm{supp}(d_i^{\mathbf{z}^{obs}})\setminus \{0\}|$ unknowns and coefficients.
As is the case for generalized least squares estimates for linear regression, $\nabla L$ comprises $2\times 2$ blocks where the bottom right block of size $2n+\sum_{i=1}^n |\mathrm{supp}(d_i^{\mathbf{z}^{obs}})\setminus \{0\}|$ is zero.

\begin{remark}
For a Bernoulli trial, $\nabla L$ has $n 2^n+2n+\sum_i d_i$ rows which for $n=12$ is nearly 50000. 
Fortunately, this system is quite sparse, with only $O(2^n n^2)$ non-zero entries, rather than $O(4^n n^2)$, so the use of sparse linear system solvers may lead to faster algorithms \citep{saad2003iterative}.
Furthermore, the support of the design can be restricted, as was done in our simulations, to ease this computational burden.
\end{remark}

\subsection{Nonsingular case} \label{sec:sania_nonsingular}
If it holds that $\Sigma(\mathbf{z})$ is nonsingular for each $\mathbf{z}$ in the support of $p$, then this system can be solved more rapidly.
First, taking derivatives with respect to $\mathbf{w}(\mathbf{z})$ for a given unit $i$ and allocation $\mathbf{z}$, we have that 
\[
  \frac{\partial L}{\partial \mathbf{w}(\mathbf{z})} = p(\mathbf{z}) \left(\Sigma(\mathbf{z}) \mathbf{w}(\mathbf{z}) -  \lambda_\alpha  - \mathrm{diag}(\mathbf{z}) \lambda_\beta  - \sum_{d=1}^{n-1} \mathrm{diag}(\mathbf{I}\{d^{\mathbf{z}} =d\})\lambda_{\Gamma d}\right)
\]
As $\Sigma(\mathbf{z})$ is invertible, we can solve for $\frac{\partial L}{\partial \mathbf{w}(\mathbf{z})}=\bm{0}$. 
This yields
\begin{equation}
    \mathbf{w}(\mathbf{z}) = - \Sigma(\mathbf{z})^{-1} \left( \lambda_{\alpha}+\mathrm{diag}(\mathbf{z}) \lambda_{\beta} + \sum_{d=1}^{n-1} \mathrm{diag}(\mathbf{I}\{d^{\mathbf{z}} = d\}) \lambda_{\Gamma d}\right). \label{eq:wz_inv}
\end{equation}

Taking derivatives with respect to the $\lambda$ terms and plugging in Eq.~\eqref{eq:wz_inv} gives
\begin{align*}
\frac{\partial L}{\partial \lambda_\alpha} &= - \sum_{\mathbf{z}} p(\mathbf{z}) \Sigma(\mathbf{z})^{-1} \left(  \lambda_\alpha  + \mathrm{diag}(\mathbf{z}) \lambda_\beta  + \sum_{d=1}^{n} \mathrm{diag}(d^\mathbf{z}=d)\lambda_{\Gamma d}\right)\\
\frac{\partial L}{\partial \lambda_\beta} &= - \sum_{\mathbf{z}} p(\mathbf{z})  \mathrm{diag}(\mathbf{z}) \Sigma(\mathbf{z})^{-1} \left( \lambda_\alpha  + \mathrm{diag}(\mathbf{z}) \lambda_\beta  + \sum_{d=1}^{n} \mathrm{diag}(d^\mathbf{z}=d)\lambda_{\Gamma d}\right)- \frac{1}{n}\bm{1}\\
\frac{\partial L}{\partial \lambda_{\Gamma d}} &= -\sum_{\mathbf{z}} p(\mathbf{z})  \mathrm{diag}(d^\mathbf{z}=d)  \Sigma(\mathbf{z})^{-1} \left( \lambda_\alpha  + \mathrm{diag}(\mathbf{z}) \lambda_\beta  + \sum_{d'=1}^{n} \mathrm{diag}(d^\mathbf{z}=d')\lambda_{\Gamma d'}\right)
\end{align*}
To find the optimal $\lambda$, we must set the above equations to zero and solve for the $\lambda$ terms. 
This is now a system with $2n+\sum_{i} |\mathrm{supp}(d_i^{\mathbf{z}}))\setminus \{0\}|$ unknowns and coefficients which must have a solution since we are working under the conditions where an unbiased estimator exists.
Finally, after solving for the $\lambda$ terms the MIV LUE is found by plugging them back into Eq.~\eqref{eq:wz_inv}.
In the next section we will show how we can analytically solve this system in the case of uncorrelated priors which will lead to Theorem~\ref{thm:mivlueSANIA}.

\begin{remark}\label{rem:compute}
The case when $\Sigma(\mathbf{z})$ is invertible offers a substantial reduction in computation compared to the general case.
First, we must find all $\Sigma(\mathbf{z})^{-1}$ involves inverting $|\mathrm{supp}(\mathbf{z}^{obs})|$ matrices of size $n\times n$. 
Following this, we need to sum various diagonal transformations of the $\Sigma(\mathbf{z})^{-1}$ over all allocations.

We note that these two operations fit well into a MapReduce framework and hence can be performed relatively quickly in a distributed or parallel system.
At this point we have not fully explored the computation aspects of this procedure but further methodological research will likely allow scaling to relatively large problems provided the the support of the design does not grow too quickly.\qed
\end{remark}

\begin{remark}
If $\Sigma(\mathbf{z})$ is singular, the strategy outlined in this section can still be attempted with the inverse of $\Sigma(\mathbf{z})$ replaced by the Moore-Penrose pseudoinverse. 
Unfortunately, this method is not guaranteed to find an optimal solution but it is straightforward to check whether the KKT conditions are satisfied.
Again, we have not explored this in detail but we do note that this method finds the solution found in Section~\ref{app:const_prior}.\qed
\end{remark}



\subsection{Uncorrelated Case}
In this subsection, we'll prove Theorem~\ref{thm:mivlueSANIA}.
Uncorrelated priors will automatically yield non-singular covariance structure, so everything in the previous subsection applies in this case but the equations simplify substantially.

\begin{proof}[Proof of Theorem~\ref{thm:mivlueSANIA}]
First, $\Sigma(\mathbf{z})\in \Re^{n\times n}$ will be diagonal and with $\Sigma(\mathbf{z})_{ii}=\sigma_i(\mathbf{z})$, 
\begin{equation}
    w_i(\mathbf{z}) = - \sigma_i(\mathbf{z})^{-1} \left( \lambda_{\alpha i}+z_i \lambda_{\beta i} + \sum_{d=1}^{n-1} \mathbf{I}\{d^{\mathbf{z}}_i = d\} \lambda_{\Gamma d,i}\right). \label{eq:wz_uncorr}
\end{equation}
Let
\begin{align}
a_i &= \sum_{\mathbf{z}} \frac{p(\mathbf{z})}{\sigma_i(\mathbf{z})} &b_i &= \sum_{\mathbf{z}} \frac{p(\mathbf{z}) z_i }{\sigma_i(\mathbf{z})} \label{eq:ai_bi}\\
f_{id} &= \sum_{\mathbf{z}} \frac{p(\mathbf{z}) \mathbf{I}\{d_i^{\mathbf{z}} = d\} }{\sigma_i(\mathbf{z})}  & g_{id} &= \sum_{\mathbf{z}} \frac{p(\mathbf{z}) z_i \mathbf{I}\{d_i^{\mathbf{z}} = d\} }{\sigma_i(\mathbf{z})} \label{eq:fi_gi}
\end{align}
so that 
\begin{align*}
   \frac{\partial L}{\partial \lambda_{\alpha i}} &= - \left(\lambda_{\alpha i} a_i + \lambda_{\beta i} b_i + \sum_{d=1}^{n-1} \lambda_{\Gamma,d,i}  f_{id}\right),\\
   \frac{\partial L}{\partial \lambda_{\beta i}} &=  -\left(\lambda_{\alpha i} b_i +\lambda_{\beta i } b_i +  \sum_{d=1}^{n-1} \lambda_{\Gamma,d,i} g_{id}  + 1/n\right), \text{ and }\\ 
   \frac{\partial L}{\partial \lambda_{\Gamma,d,i}} &=-\left(\lambda_{\alpha i} f_{id}+\lambda_{\beta i }g_{id} + \lambda_{\Gamma,d,i}f_{id}\right).
\end{align*}

The last equation implies $\lambda_{\Gamma,d,i}=-\lambda_{\alpha i} - \lambda_{\beta i} \frac{g_{id}}{f_{id}}$ which we plug into the first two equations yielding

\begin{align*}
   \frac{\partial L}{\partial \lambda_{\alpha i}} &=  \lambda_{\alpha i} \left(-a_i+\sum_{d=1}^{n-1}f_{id}\right) + \lambda_{\beta i} \left(-b_i+\sum_{d=1}^{n-1} g_{id}\right),\\
   \frac{\partial L}{\partial \lambda_{\beta i}} &=  \lambda_{\alpha i} \left( -b_i +\sum_{d=1}^{n-1} g_{id}\right) +\lambda_{\beta i } \left( -b_i  + \sum_{d=1}^{n-1} \frac{g_{id}^2}{f_{id}} \right) - 1/n.
\end{align*}

Setting the above to 0 gives 
\begin{align*}
\lambda_{\alpha i} &= \frac{g_{i0}/f_{i0}}{n \sum_{d=0}^{n-1}g_{id} \left( 1- \frac{g_{id}}{f_{id}} \right) } \\
\lambda_{\beta i} &= \frac{-1}{n \sum_{d=0}^{n-1}g_{id} \left( 1- \frac{g_{id}}{f_{id}} \right) }\text{ and} \\
\lambda_{\Gamma d i} &= \frac{-g_{i0}/f_{i0}+g_{id}/f_{id}}{n \sum_{d=0}^{n-1}g_{id} \left( 1- \frac{g_{id}}{f_{id}} \right) }.
\end{align*}
Which implies after plugging into  Eq.~\eqref{eq:wz_uncorr} that 
\[
  w_i(\mathbf{z}) = \frac{z_i-g_{id_i^{\mathbf{z}}}/f_{id_i^{\mathbf{z}}}}{n\sigma_i(\mathbf{z}) \sum_{d=0}^{n-1}g_{id} \left( 1- \frac{g_{id}}{f_{id}} \right) }.
\]
By letting $h_{id}=f_{id}-g_{id}$, we can rewrite the above equation as 
\begin{align*}
  w_i(\mathbf{z}) &= \dfrac{z_i  \left( \frac{g_{id_i^{\mathbf{z}}} h_{id_i^{\mathbf{z}}} }{f_{id_i^{\mathbf{z}}}} \right)  }{n\sigma_i(\mathbf{z}) g_{id_i^{\mathbf{z}}} \sum_{d=0}^{n-1} \frac{g_{id}h_{id}}{f_{id}}}-\dfrac{(1-z_i)  \left( \frac{g_{id_i^{\mathbf{z}}} h_{id_i^{\mathbf{z}}} }{f_{id_i^{\mathbf{z}}}} \right)  }{n\sigma_i(\mathbf{z}) h_{id_i^{\mathbf{z}}} \sum_{d=0}^{n-1} \frac{g_{id}h_{id}}{f_{id}}} \\
  &=\left(  n \Pr[z_i^{obs}=z_i,d_i^{\mathbf{z}^{obs}}=d_i^{\mathbf{z}}] \sum_{d=0}^{n-1} \frac{g_{id}h_{id}}{f_{id}} \right)^{-1}
     \frac{g_{id_i^{\mathbf{z}}} h_{id_i^{\mathbf{z}}} }{f_{id_i^{\mathbf{z}}}} ( 2z_i  -1)
\end{align*}
Plugging in
\[
    g_{id} = \frac{\Pr[z_i^{obs}=1,d_i^{\mathbf{z}^{obs}} = d]}{\sigma(\mathbf{z'})} 
  \text{ and }
    h_{id} = \frac{\Pr[z_i^{obs}=0,d_i^{\mathbf{z}^{obs}} = d]}{\sigma(\mathbf{z}'')},
\]
where $\mathbf{z}'$ and $\mathbf{z}''$ are allocations where the treated degree for unit $i$ is $d$ and $z'_i=1$ and $z''_i=0$, yields the estimate in Eq.~\ref{eq:mivlue_sania} in Theorem~\ref{thm:mivlueSANIA}.
\end{proof}

The first part of Theorem~\ref{thm:mivlueSUTVA} is simply an application of Theorem~\ref{thm:mivlueSANIA}.

\subsection{Constant Prior Case for Vertex Transitive Graphs}\label{app:const_prior}

The constant prior case, where almost surely the parameters are constant across units, does not provide immediate simplifications or closed forms for the MIV LUE for general graphs and designs. 
At this point, only the vertex transitive case has been fully explored, see Section~\ref{ssub:general_prior}, in which case a very simple form for the MIV LUE arises.

\begin{proof}[Proof of Proposition~\ref{prop:vertex_trans}]
To begin, we will prove the result in the special case that the design $p$ is supported on a single orbit of the automorphism group acting on the set of allocations. 
Specifically, the orbit of an allocation $\mathbf{z}$ is the set of allocations $\tau(\mathbf{z})$ where $\tau$ is an automorphism of the group $g$ which acts on allocations such that $\tau(\mathbf{z})_i= z_{\tau^{-1}(i)}$.
Importantly, the set of orbits partitions the space of allocations \citep{godsil2013algebraic}.

The first step in this proof is to show that estimator is unbiased.
In this special case, since we assume that the design is symmetric with respect to automorphisms, $p$ must be uniform on the orbit. 
We recall the notation $n_{z,d}(\mathbf{z})$ is the number of units $i$ where $z_i=z$ and $d_i^{\mathbf{z}}=d$.
Note that $n_{z,d}(\mathbf{z})=n_{z,d}(\tau(\mathbf{z}))$ for any automorphism $\tau$ since reassigning treatments according to an automorphism $\tau$ must also reassign the treated degrees in according to $\tau$.
Hence, for now we will abbreviate $n_{z,d}(\mathbf{z})$ as simply $n_{z,d}$ since it is the same for all allocations in the support.

Before proceeding, we will briefly prove the following group Lemmas.
\begin{lemma}\label{lem:auto}
Suppose that $g$ is vertex transitive and let $T$ be the automorphism group for $g$.
For any $i,j,k\in[n]$, the number of automorphisms $\tau$ where $\tau(i)=j$ is equal to the number of automorphisms where $\tau(i)=k$.
\end{lemma}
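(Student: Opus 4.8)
The plan is to recognize Lemma~\ref{lem:auto} as an instance of the coset structure underlying the orbit--stabilizer theorem, so that no genuine combinatorial work is required; the content is purely group-theoretic. Fix the vertex $i$ and let $T_i=\{\tau\in T:\tau(i)=i\}$ denote its stabilizer, which is a subgroup of $T$. For each target vertex $j$ write $T_{i\to j}=\{\tau\in T:\tau(i)=j\}$ for the set whose cardinality we wish to compare across different $j$. The goal is to show each nonempty $T_{i\to j}$ is a coset of $T_i$, and hence that they all have the same size $|T_i|$.

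First I would invoke vertex transitivity to guarantee that every $T_{i\to j}$ is nonempty: for any $j$ there is some automorphism $\sigma_j$ with $\sigma_j(i)=j$. Fixing such a $\sigma_j$, I would then verify the identity
\[
   T_{i\to j} = \sigma_j\, T_i,
\]
by a direct double inclusion. If $\tau\in T_{i\to j}$ then $\sigma_j^{-1}\tau(i)=\sigma_j^{-1}(j)=i$, so $\sigma_j^{-1}\tau\in T_i$ and $\tau\in\sigma_j T_i$; conversely, if $\tau=\sigma_j\rho$ with $\rho\in T_i$, then $\tau(i)=\sigma_j(\rho(i))=\sigma_j(i)=j$, so $\tau\in T_{i\to j}$.

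With this identity in hand, the conclusion is immediate: left multiplication $\rho\mapsto\sigma_j\rho$ is a bijection from $T_i$ onto $\sigma_j T_i=T_{i\to j}$, because $T$ is a group and hence admits cancellation. Therefore $|T_{i\to j}|=|T_i|$ for every $j$ in the image of $i$ under $T$, and by transitivity this is every $j\in[n]$. In particular $|T_{i\to j}|=|T_i|=|T_{i\to k}|$ for all $j,k\in[n]$, which is the claimed equality.

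I do not anticipate a genuine obstacle here, since the argument is elementary; the only point requiring care is bookkeeping about the group action, namely distinguishing the action of $T$ on vertices (simply $\tau(i)$) from the induced action on allocations ($\tau(\mathbf{z})_i=z_{\tau^{-1}(i)}$ introduced just above), and being consistent about using left cosets throughout. I would state the proof for the vertex action only, as that is all the lemma concerns, and defer the allocation-level consequences to where the lemma is applied in the proof of Proposition~\ref{prop:vertex_trans}.
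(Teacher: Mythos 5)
Your proof is correct. It differs from the paper's argument in its decomposition, though both ultimately rest on the same algebraic fact that left translation in a group is injective. The paper never introduces the stabilizer: it fixes, for the two targets $j$ and $k$, an automorphism $\tau_{jk}$ with $\tau_{jk}(j)=k$ (available by vertex transitivity), shows that $\tau\mapsto\tau_{jk}\circ\tau$ injects $T_{ij}$ into $T_{ik}$, concludes $|T_{ij}|\leq|T_{ik}|$, and then swaps the roles of $j$ and $k$ to get equality. You instead pivot on an automorphism $\sigma_j$ carrying $i$ to $j$ and prove the set identity $T_{i\to j}=\sigma_j T_i$, so that every such set is a left coset of the stabilizer $T_i$ and all cardinalities equal $|T_i|$ at once. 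Your coset route buys slightly more: it identifies the common value as the order of the stabilizer, gives a single bijection rather than two injections, and exhibits the sets $\{T_{i\to j}\}_j$ as the coset partition of $T$ (the orbit--stabilizer picture), which would also make clear why each probability in the subsequent lemma equals $1/n$ when one averages uniformly over $T$. The paper's route is more minimal: it compares the two sets of interest directly, needs no subgroup structure beyond closure under composition and inverses, and matches the exact statement of the lemma with no surplus machinery. Your closing remark about keeping the vertex action separate from the induced action on allocations is apt, and is consistent with how the paper uses the lemma inside the proof of Proposition~\ref{prop:vertex_trans}.
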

\begin{proof}
For any $i,j\in[n]$, let $T_{ij}\subset T$ be the set of automorphisms which map $i$ to $j$.
For any $\tau_{jk}\in T_{jk}$ and $\tau_{ij}\in T_{ij}$ we have that $\tau_{jk}\circ \tau_{ij}\in T_{ik}$.
Additionally, by the fact that $\tau_{jk}$ has an inverse, if $\tau_{ij},\tau'_{ij}\in T_{ij}$ then  $\tau_{ij}\neq \tau'_{ij}$ implies $\tau_{jk}\circ \tau_{ij}\neq \tau_{jk}\circ \tau'_{ij}$ for any distinct $\tau_{ij},\tau'_{ij}\in T_{ij}$.
This implies that $|T_{ij}|\leq |T_{ik}|$ and reversing the roles of $j$ and $k$ above implies the result.
\end{proof}

\begin{lemma}
Under the conditions of Proposition~\ref{prop:vertex_trans} where the design $p$ is supported on a single orbit, $\Pr[z_i^{obs}=z, d_i^{\mathbf{z}^{obs}}=d]=n_{z,d}/n$.
\end{lemma}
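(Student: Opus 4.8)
The plan is to exploit the symmetry of the design together with Lemma~\ref{lem:auto}. Since the design is symmetric with respect to automorphisms (condition 1 of Proposition~\ref{prop:vertex_trans}) and is supported on a single orbit $O$ of the automorphism group $T$, the values $p(\mathbf{z})$ must agree across $O$, so $p$ is uniform on $O$. Writing a generator of the orbit as $\mathbf{z}_0$ and recalling that $n_{z,d}(\mathbf{z})$ is orbit-invariant (as established just before the lemma), the target probability is
\[
  \Pr[z_i^{obs}=z,\ d_i^{\mathbf{z}^{obs}}=d]=\frac{1}{|O|}\sum_{\mathbf{z}\in O}\mathbf{I}\{z_i=z,\ d_i^{\mathbf{z}}=d\}.
\]

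First I would pass from an average over the orbit to an average over the whole group $T$. By the orbit--stabilizer theorem, each allocation $\mathbf{z}\in O$ is obtained as $\tau(\mathbf{z}_0)$ for exactly $|T|/|O|$ automorphisms, so the displayed average equals $\tfrac{1}{|T|}\sum_{\tau\in T}\mathbf{I}\{\tau(\mathbf{z}_0)_i=z,\ d_i^{\tau(\mathbf{z}_0)}=d\}$. The key step is then to establish that the automorphism action is equivariant for the treated degree. Using $\tau(\mathbf{z})_i=z_{\tau^{-1}(i)}$ together with the defining property $g_{\tau(a)\tau(b)}=g_{ab}$, a short reindexing of $d_i^{\tau(\mathbf{z}_0)}=\sum_j g_{ji}(\mathbf{z}_0)_{\tau^{-1}(j)}$ yields $d_i^{\tau(\mathbf{z}_0)}=d_{\tau^{-1}(i)}^{\mathbf{z}_0}$, while $\tau(\mathbf{z}_0)_i=(\mathbf{z}_0)_{\tau^{-1}(i)}$ is immediate. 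Hence the indicator inside the group sum equals $\mathbf{I}\{(\mathbf{z}_0)_{\tau^{-1}(i)}=z,\ d_{\tau^{-1}(i)}^{\mathbf{z}_0}=d\}$, depending on $\tau$ only through $k=\tau^{-1}(i)$.

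Finally I would re-index the sum by $k=\tau^{-1}(i)$. Since $\tau^{-1}(i)=k$ iff $\tau(k)=i$, and $\tau\mapsto\tau^{-1}$ is a bijection of $T$, the count $|\{\tau:\tau(k)=i\}|$ equals $|\{\sigma:\sigma(i)=k\}|$, which by Lemma~\ref{lem:auto} (fixing the source $i$) is the same for every $k$ and hence equals $|T|/n$. Therefore the group average collapses to
\[
  \frac{1}{n}\sum_{k=1}^n \mathbf{I}\{(\mathbf{z}_0)_k=z,\ d_k^{\mathbf{z}_0}=d\}=\frac{n_{z,d}}{n}.
\]
The main obstacle is the equivariance identity $d_i^{\tau(\mathbf{z})}=d_{\tau^{-1}(i)}^{\mathbf{z}}$, which is precisely what makes the treated-degree statistic compatible with the group action; once that is in hand, the counting via Lemma~\ref{lem:auto} is routine, and the passage from the orbit average to the group average is standard orbit--stabilizer bookkeeping.
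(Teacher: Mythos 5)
Your proof is correct and takes essentially the same approach as the paper's: both rest on uniformity of $p$ over the orbit, representing the orbit average as a uniform average over the automorphism group, equivariance of the pair $(z_i,d_i^{\mathbf{z}})$ under the action (i.e., $\tau(\mathbf{z})_i=z_{\tau^{-1}(i)}$ and $d_i^{\tau(\mathbf{z})}=d_{\tau^{-1}(i)}^{\mathbf{z}}$), and Lemma~\ref{lem:auto} to conclude that each candidate pre-image $\tau^{-1}(i)$ is equally likely, giving $n_{z,d}/n$. You simply make explicit the orbit--stabilizer bookkeeping and the $\tau$ versus $\tau^{-1}$ distinction that the paper's proof handles informally.
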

\begin{proof}
Select a fixed allocation $\mathbf{z}\in \mathrm{supp}(\mathbf{z}^{obs})$.
We can sample from $p$ by selecting a random automorphism $\tau$ drawn uniformly from all automorphisms and then $\mathbf{z}^{obs}=\tau(\mathbf{z})\sim p$. 
We now have that $\Pr[z_i^{obs}=z, d_i^{\mathbf{z}^{obs}}=d]$ is the probability that we draw an allocation that maps $i$ to a unit $j$ where $d_j^{\mathbf{z}}=d$ and $z_j=z$.
The number of such units is exactly $n_{z,d}$ and by Lemma~\ref{lem:auto} the probability that the random automorphism $\tau$ maps $i$ to $j$ for any units $i$ and $j$ is exactly $|T_{ij}|/\sum_k |T_{ik}| = 1/n$.
\end{proof}

Now, recall from Proposition~\ref{prop:vertex_trans} Eq.~\eqref{eq:strat_naive}, the estimator is defined as
\[
w_i(\mathbf{z}) = \frac{C_{d_i^{\mathbf{z}}}}{\sum_{d=0}^{n-1} C_d} \frac{2z_i-1}{n_{z_i,d_i^{\mathbf{z}}}} 
\]
where \[
    C_d=\mathbf{I}\{n_{0,d}>0, n_{1,d}>0 \} \left( \frac{1}{n_{0,d}}+\frac{1}{n_{1,d}} \right)^{-1}\text{ and } C= \sum_{d=0}^{n-1} C_d.
\]
Altogether, we have that
\begin{align*}
&\Ex[w_i(\mathbf{z}^{obs})Y_i(\mathbf{z}^{obs})] \\
=& \sum_{d=0}^{n-1} \frac{C_d}{C} \left( \Pr[z^{obs}_i=1,d^{\mathbf{z}^{obs}}_i=d] \frac{1}{n_{1,d}}(\alpha_i+\beta_i+\Gamma_i(d))\right.\\
& \qquad\qquad\qquad\qquad \left.-\Pr[z^{obs}_i=0,d^{\mathbf{z}^{obs}}_i=d]\frac{1}{n_{0,d}}(\alpha_i+\Gamma_i(d))\right) \\ 
=&\sum_{d=0}^{n-1} \frac{C_d}{C} \left(   \frac{n_{1,d}}{n}\frac{1}{n_{1,d}}(\alpha_i+\beta_i+\Gamma_i(d))-\frac{n_{0,d}}{n}\frac{1}{n_{0,d}}(\alpha_i+\Gamma_i(d))\right) \\
=&\sum_{d=0}^{n-1} \frac{C_d}{C} \left(   \frac{1}{n}(\alpha_i+\beta_i+\Gamma_i(d))-\frac{1}{n}(\alpha_i+\Gamma_i(d))\right) \\
=& \sum_{d=0}^{n-1} \frac{C_d}{C} \frac{1}{n}\beta_i = \frac{1}{n}\beta_i.
\end{align*}
This implies that the estimator is unbiased. 

The remaining KKT conditions are that 
\begin{equation}
  \Sigma(\mathbf{z})\mathbf{w}(\mathbf{z})  -  \lambda_\alpha  - \mathrm{diag}(\mathbf{z}) \lambda_\beta  - \sum_{d=1}^{n-1} \mathrm{diag}(\mathbf{I}\{d^{\mathbf{z}} =d\})\lambda_{\Gamma d}=0,\label{eq:KKT2}
\end{equation}
for each $\mathbf{z}$ in the support
where we are free to choose each $\lambda$ term, but the $\lambda$ terms cannot depend on $\mathbf{z}$.
Since all the parameters are equal across units, let us define $\sigma_{\xi,\xi'}=\mathrm{Cov}(\xi,\xi')$ where $\xi,\xi'$ can be any of $\alpha$, $\beta$, $\Gamma(1), \dotsc, \Gamma(n-1)$.
We have \[
  \Sigma(\mathbf{z})_{ij} = \sigma_{\alpha,\alpha}+(z_i+z_j) \sigma_{\alpha,\beta}+z_i z_j \sigma_{\beta,\beta}
  +\sigma_{\Gamma(d_i^{\mathbf{z}}), \Gamma(d_j^{\mathbf{z}})} +
   z_i\sigma_{\beta, \Gamma(d_j^{\mathbf{z}})}+
   z_j \sigma_{\beta, \Gamma(d_i^{\mathbf{z}})}+
   \sigma_{\alpha, \Gamma(d_j^{\mathbf{z}})}+
   \sigma_{\alpha, \Gamma(d_i^{\mathbf{z}})}+\sigma_{\alpha, \Gamma(d_j^{\mathbf{z}})}
\]
Note that since $\sum_j w_j(\mathbf{z})=0$, we have that $(\Sigma(\mathbf{z})\mathbf{w}(\mathbf{z}))_i$ will have no terms from above that depend only on $z_i$ and $d_i^{\mathbf{z}}$, including $\sigma_{\alpha, \alpha}$, $z_i\sigma_{\alpha,\beta}$, $\sigma_{\alpha,\Gamma_i(d_i^{\mathbf{z}})}$, \ldots.
Let $w_{z,d}=w_i(\mathbf{z})$ for any unit $i$ where $d_i^{\mathbf{z}}=d$ and $z_i=z$ and note that $w_{z,d}n_{z,d}=(2z-1)C_d/C$.
Hence,
\begin{align*}
(\Sigma(\mathbf{z})\mathbf{w}(\mathbf{z}))_i =\sum_{d=0}^{n-1} w_{0,d}n_{0,d} &\left (\sigma_{\alpha,\alpha}+z_i \sigma_{\alpha,\beta}
  +\sigma_{\Gamma(d_i^{\mathbf{z}}), \Gamma(d)} +
   z_i\sigma_{\beta, \Gamma(d)}+
   \sigma_{\alpha, \Gamma(d)}+
   \sigma_{\alpha, \Gamma(d_i^{\mathbf{z}})}+\sigma_{\alpha, \Gamma(d)}) \right)\\
+w_{1,d}n_{1,d}&\left( \sigma_{\alpha,\alpha}+z_i \sigma_{\alpha,\beta}+\sigma_{\alpha,\beta}+z_i \sigma_{\beta,\beta}
  +\sigma_{\Gamma(d_i^{\mathbf{z}}), \Gamma(d)} +
   z_i\sigma_{\beta, \Gamma(d)}+
   \sigma_{\beta, \Gamma(d_i^{\mathbf{z}})}+\right. \\
   &\left. \sigma_{\alpha, \Gamma(d)}+
   \sigma_{\alpha, \Gamma(d_i^{\mathbf{z}})}+\sigma_{\alpha, \Gamma(d)}\right) \\
   =\sum_{d=0}^{n-1} \frac{C_d}{C} (\sigma_{\alpha,\beta}+&z_i\sigma_{\beta,\beta}+\sigma_{\beta, \Gamma(d_i^{\mathbf{z}})}) = \sigma_{\alpha,\beta}+z_i\sigma_{\beta,\beta}+\sigma_{\beta, \Gamma(d_i^{\mathbf{z}})}
\end{align*}
or, succinctly, 
\[
  \Sigma(\mathbf{z})\mathbf{w}(\mathbf{z}) =  \sigma_{\alpha,\beta}\bm{1}  +\sigma_{\beta,\beta}\mathbf{z} + \sum_{d=1}^{n-1}\sigma_{\beta,\Gamma(d)} \mathbf{I}\{d^{\mathbf{z}} =d\})
\]
Hence, setting $\lambda_{\alpha}=\sigma_{\alpha,\beta}\bm{1}$, $\lambda_{\beta}=\sigma_{\beta,\beta}\bm{1}$, and $\lambda_{\Gamma d}=\sigma_{\beta,\Gamma(d)}\bm{1}$ will ensure that Eq.~\eqref{eq:KKT2} is satisfied.

This proves the result if $p$ is supported on a single orbit. 
If $p$ is not supported on a single orbit then we can write $p$ as a mixture of designs supported on each orbit.
Since the estimator does not depend on the design, by conditioning we can verify that the estimator is unbiased for any such mixture.
Similarly, the analysis of the remaining KKT conditions follows {\em mutatis mutandis}.
\end{proof}

\section{MIVLUE under SANASIA}\label{app:mivlue_sanasia}
In this section we will sketch a derivation of the MIVLUE under SANASIA for a specific set of priors. 
We assume that the the priors are uncorrelated across units for the $\alpha_i$ and $\beta_i$ parameters and that the prior for the $\gamma_i$ parameters are uncorrelated with the priors on the $\alpha_i$ and $\beta_i$ parameters.
As in the previous sections we assume that the priors are all mean zero.

We will also assume, in the context of Proposition~\ref{prop:paramSANASIA}, that the graph $h$ has exactly one connected component so that the interference effects are equal for all units. 
The results are similar for $h$ with multiple connected components but the derivation requires additional notation and book keeping.

Recall that under SANASIA $Y_i(\mathbf{z})= \alpha_i + \beta_i z_i +\gamma d_i^{\mathbf{z}}$.
The mean square error for a LUE is 
\begin{align*}
    MSE(\hat{\beta}^w)  
    &= \sum_{\mathbf{z}} p(\mathbf{z}) \sum_{i=1}^n\sum_{j=1}^n w_i(\mathbf{z})w_j(\mathbf{z})(\alpha_i+\beta_i z_i +\gamma d_i^{\mathbf{z}})(\alpha_j+\beta_j z_j +\gamma d_j^{\mathbf{z}})-\beta^2.
\end{align*}
We will write the prior variance for $\alpha_i$  as $\sigma_{\alpha i}^2$ and $\beta_i$  as $\sigma_{\beta,i}^2$. 
We write $\mathrm{Var}(\gamma)=\sigma_\gamma^2$.
Hence, half of the integrated mean square error can be written as
\begin{align*}
\frac{1}{2}\sum_{\mathbf{z}} p(\mathbf{z}) \sum_{i=1}^n w_i(\mathbf{z})^2 \left(  \sigma_{\alpha i}^2+\sigma_{\beta i}^2 z_i +   d_i^{\mathbf{z}}\sigma_\gamma^2 \sum_j d_j^{\mathbf{z}} \right) -\frac{\sigma_\beta^2}{2n}.
\end{align*}
The Lagrangian is 
\begin{align*}
L(\mathbf{w},\bm{\lambda})=& \frac{1}{2} \sum_{\mathbf{z}} p(\mathbf{z}) \sum_{i=1}^n w_i(\mathbf{z})^2 \left(  \sigma_{\alpha i}^2+\sigma_{\beta i}^2 z_i +   d_i^{\mathbf{z}}\sigma_\gamma^2 \sum_j d_j^{\mathbf{z}} \right) \\
+ \lambda_\alpha^T &\sum_{\mathbf{z}} p(\mathbf{z}) \mathbf{w}(\mathbf{z}) \\
+ \lambda_\beta^T & \left( \sum_{\mathbf{z}} p(\mathbf{z}) \mathbf{w}(\mathbf{z}) z_i - \frac{1}{n} \bm{1} \right) \\
+ \lambda_{\gamma} & \sum_{\mathbf{z}} p(\mathbf{z}) \sum_{i} w_i(\mathbf{z}) d_i^{\mathbf{z}}.
\end{align*}.
The derivative of the Lagrangian with respect to $w_i(\mathbf{z})$ is 
\[
  \frac{\partial L}{\partial w_i(\mathbf{z})} = p(\mathbf{z}) \left(  2 w_i(\mathbf{z})( \sigma_{\alpha i}^2+\sigma_{\beta i}^2 z_i + \sigma_\gamma^2 d_i^{\mathbf{z}} \sum_j d_j^{\mathbf{z}}) + \lambda_{\alpha i }  + \lambda_{\beta i }  z_i + \lambda_\gamma d_i^{\mathbf{z}} \right)
\]
so that \[
  w_i(\mathbf{z}) = -\frac{\lambda_{\alpha i}+\lambda_{\beta i}z_i +\lambda_\gamma d_i^{\mathbf{z}} }{\sigma_i(\mathbf{z})}.
\]
where we define $\sigma_i(\mathbf{z})= \sigma_{\alpha i}^2+\sigma_{\beta i}^2 z_i + \sigma_\gamma^2 d_i^{\mathbf{z}} \sum_j d_j^{\mathbf{z}}$.

The derivative with respect to the $\lambda$ terms after plugging in the solution for $w_i(\mathbf{z})$ above is 
\begin{align*}
   \frac{\partial L}{\partial \lambda_{\alpha i}} &= -\sum_{\mathbf{z}} p(\mathbf{z}) \frac{\lambda_{\alpha i}+\lambda_{\beta i }z_i +  \lambda_{\gamma} d_i^{\mathbf{z}} }{ \sigma_i(\mathbf{z})}\\
   \frac{\partial L}{\partial \lambda_{\beta i}} &= -\sum_{\mathbf{z}}  p(\mathbf{z}) \frac{\lambda_{\alpha i}z_i+\lambda_{\beta i }z_i +  \lambda_{\gamma} d_i^{\mathbf{z}} z_i}{\sigma_i(\mathbf{z})}  - 1/n\\ 
   \frac{\partial L}{\partial \lambda_{\gamma}} &= -\sum_{\mathbf{z}} p(\mathbf{z}) \sum_i \frac{\lambda_{\alpha i} d_i^{\mathbf{z}} +\lambda_{\beta i }d_i^{\mathbf{z}} z_i + \lambda_{\gamma} {d_i^{\mathbf{z}}}^2 }{\sigma_i(\mathbf{z})} 
\end{align*}
Hence, \[
  \lambda_\gamma = C^{-1}\sum_{\mathbf{z}} p(\mathbf{z}) \sum_i \frac{\lambda_{\alpha i} d_i^{\mathbf{z}} +\lambda_{\beta i }d_i^{\mathbf{z}} z_i }{\sigma_i(\mathbf{z})}
\]
where $C=-\sum_{\mathbf{z}} p(\mathbf{z}) \sum_i \frac{ {d_i^{\mathbf{z}}}^2 }{\sigma_i(\mathbf{z})} $.

Let $a_i$ and $b_i$ be as in Eq.~\eqref{eq:ai_bi} and let $g_i = \sum_{\mathbf{z}} p(\mathbf{z}) \frac{d_i^{\mathbf{z}}}{\sigma_i(\mathbf{z})} $ and let $h_i = \sum_{\mathbf{z}} p(\mathbf{z}) \frac{d_i^{\mathbf{z}} z_i}{\sigma_i(\mathbf{z})}$.

Plugging in the solution for $\lambda_{\gamma}$ in terms of $\lambda_\alpha$ and $\lambda_\beta$, we can write the derivatives with respect to $\lambda_\alpha$ and $\lambda_\beta$ as
\[
 -\left(   \begin{pmatrix}
    \mathrm{diag}(a) & \mathrm{diag}(b) \\ \mathrm{diag}(b) &\mathrm{diag}(b)
  \end{pmatrix} 
  + C^{-1}
  \begin{pmatrix}
    g \\ h
  \end{pmatrix}
  \begin{pmatrix}
    g^T & h^T
  \end{pmatrix}   \right) \begin{pmatrix}
    \lambda_\alpha \\ \lambda_\beta
  \end{pmatrix}
\]

The inverse of the leftmost matrix is 
 \[
 A^{-1}=  \begin{pmatrix}
    \mathrm{diag}(a-b)^{-1} & -\mathrm{diag}(a-b)^{-1} \\ -\mathrm{diag}(a-b)^{-1} &  \mathrm{diag}(a \circ b^{-1}  \circ(a-b)^{-1})
  \end{pmatrix}
\]
where $\circ$ denotes the Hadamard product of vectors, $(a \circ b)_i=a_i b_i$.
Using the Woodbury matrix identity \citep{higham2002accuracy} with the above, the solution for $\lambda_\alpha$ and $\lambda_\beta$ is
\[
\begin{pmatrix}
    \lambda_{\alpha}\\ \lambda_\beta 
  \end{pmatrix}
   = \left(   A^{-1} +  Q^{-1} A^{-1} \begin{pmatrix}
    g \\ h
  \end{pmatrix}   \begin{pmatrix}
    g^T & h^T
  \end{pmatrix} A^{-1} \right) \begin{pmatrix}
   \bm{0} \\ \frac{1}{n}\bm{1} 
  \end{pmatrix}
\]
where $Q$ is a scalar equal to 
\[
  C+ g^T \mathrm{diag}(a-b)^{-1} g -2 g^T \mathrm{diag}(a-b)^{-1} h -h^T  \mathrm{diag}(a \circ b^{-1}  \circ(a-b)^{-1}) h .
\]
So 
\[
\begin{pmatrix}
    \lambda_{\alpha}\\ \lambda_\beta 
  \end{pmatrix}
   = \frac{1}{n}\begin{pmatrix}
     -\frac{1}{a-b} \\ \frac{a}{b(a-b)}
   \end{pmatrix} - \frac{R}{n} \begin{pmatrix}
     \frac{g-h}{a-b} \\ \frac{h a-g b}{b(a-b)}
   \end{pmatrix}
   \]
where $R=Q^{-1} (h^T \frac{a}{b(a-b)} - g^T \frac{1}{a-b})$
Plugging this into $\lambda_\gamma$ and simplifying gives that
$\lambda_\gamma = C^{-1}( g^T \lambda_\alpha+ h^T \lambda_\beta) = \frac{RQ}{nC}-\frac{R (Q-C)}{nC}=R/n $

So we have \[
  w_i(\mathbf{z}) = \frac{1}{n\sigma_i(\mathbf{z})}\left(  \frac{-b_i-Rb_i(g_i-h_i)}{b_i(a_i-b_i)}+ z_i \frac{a_i-R(h_ia_i-g_ib_i)}{b_i(a_i-b_i)} + R d_i^{\mathbf{z}}  \right).
\]
This is the formula that is used for the MIV LUEs based on SANASIA in the simulations section. 

\begin{theorem}\label{thm:mivlueSANASIA}
Suppose the potential outcomes satisfy SANASIA and that the interference effect is constant so that $Y_i(\mathbf{z})=\alpha_i +\beta_i z_i +\gamma d_i^{\mathbf{z}}$.
Suppose the prior distribution on the parameters satisfies $\mathrm{Cov}((\alpha_i,\beta_i),(\alpha_j,\beta_j))=0$ and that $\gamma$ is indepo and the prior on the parameters has no correlation between units. 
If any unbiased estimators exist, then the weights for the MIV LUE are given by
\begin{equation}
    w_i(\mathbf{z}) = \frac{1}{n\sigma_i(\mathbf{z})}\left(  \frac{-b_i-Rb_i(g_i-h_i)}{b_i(a_i-b_i)}+ z_i \frac{a_i-R(h_ia_i-g_ib_i)}{b_i(a_i-b_i)} + R d_i^{\mathbf{z}}  \right).
    \label{eq:mivlue_sanasia}
\end{equation}
where we define the vectors $a,b,g,h\in \Re^n$ as
\begin{align}
    a_i &= \sum_{\mathbf{z}} \frac{p(\mathbf{z})}{\sigma_i(\mathbf{z})} &b_i &= \sum_{\mathbf{z}} \frac{p(\mathbf{z}) z_i }{\sigma_i(\mathbf{z})},\\
    g_i &= \sum_{\mathbf{z}} p(\mathbf{z}) \frac{d_i^{\mathbf{z}}}{\sigma_i(\mathbf{z})}  & h_i &= \sum_{\mathbf{z}} p(\mathbf{z}) \frac{d_i^{\mathbf{z}} z_i}{\sigma_i(\mathbf{z})},
\end{align}
The scalar $R$ is defined in terms of 
\begin{align*}
C&=-\sum_{\mathbf{z}} p(\mathbf{z}) \sum_i \frac{ {d_i^{\mathbf{z}}}^2 }{\sigma_i(\mathbf{z})}\in \Re,\text{ and }\\
Q&=C+ g^T \mathrm{diag}(a-b)^{-1} g -2 g^T \mathrm{diag}(a-b)^{-1} h -h^T  \mathrm{diag}(a \circ b^{-1}  \circ(a-b)^{-1})h \in \Re,\\
R&=Q^{-1} \left(h^T (a \circ b^{-1} \circ (a-b)^{-1}) - g^T (a-b)^{-1} \right)\in \Re.
\end{align*}
\end{theorem}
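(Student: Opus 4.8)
The plan is to treat this as a convex quadratic program: minimize the integrated variance, a positive semidefinite quadratic form in the collection of weight vectors $\{\mathbf{w}(\mathbf{z})\}_{\mathbf{z}\in\mathrm{supp}(p)}$, subject to the affine unbiasedness constraints. Under SANASIA with a single connected component of $h$ (Proposition~\ref{prop:paramSANASIA}), these constraints are the $\alpha_i$ constraints (C2) and $\beta_i$ constraints (C1) from Proposition~\ref{prop:constraintNIA}, together with the \emph{single scalar} constraint C3$''$ enforcing that the aggregate weight placed on $\gamma$ vanishes. Because $\gamma$ is shared across all units, this last constraint is one equation rather than a family, so its Lagrange multiplier $\lambda_\gamma$ is a scalar while $\lambda_\alpha,\lambda_\beta\in\Re^n$; this asymmetry is what distinguishes the derivation from the purely diagonal SANIA case of Theorem~\ref{thm:mivlueSANIA}. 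I would first form the Lagrangian $L(\mathbf{w},\lambda)$ as in Appendix~\ref{app:mivlue_sanasia}, using that under the assumed uncorrelated prior the relevant per-unit variance is $\sigma_i(\mathbf{z})=\sigma_{\alpha i}^2+\sigma_{\beta i}^2 z_i+\sigma_\gamma^2 d_i^{\mathbf{z}}\sum_j d_j^{\mathbf{z}}$.

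Second, I would differentiate in $w_i(\mathbf{z})$ and set the gradient to zero, yielding the pointwise stationarity solution $w_i(\mathbf{z})=-(\lambda_{\alpha i}+\lambda_{\beta i}z_i+\lambda_\gamma d_i^{\mathbf{z}})/\sigma_i(\mathbf{z})$. Substituting this back into the multiplier stationarity conditions $\partial L/\partial\lambda_\alpha=\partial L/\partial\lambda_\beta=\partial L/\partial\lambda_\gamma=0$ collapses the high-dimensional problem to a finite linear system in $(\lambda_\alpha,\lambda_\beta,\lambda_\gamma)$ whose coefficients are exactly the moment sums $a_i,b_i,g_i,h_i$ and the scalar $C$ appearing in the statement. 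Since the $\lambda_\gamma$ equation is scalar, I would solve it first, obtaining $\lambda_\gamma=C^{-1}(g^T\lambda_\alpha+h^T\lambda_\beta)$, and eliminate $\lambda_\gamma$ from the remaining $2n$ equations.

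Third, I would observe that the reduced system for $(\lambda_\alpha,\lambda_\beta)$ is a block matrix $A$ with diagonal blocks built from $\mathrm{diag}(a)$ and $\mathrm{diag}(b)$, plus a rank-one correction $C^{-1}\binom{g}{h}\bigl(g^{T}\ h^{T}\bigr)$ arising from the eliminated scalar constraint. The inverse $A^{-1}$ is available in closed form because its blocks are diagonal, after which the Woodbury matrix identity handles the rank-one update and produces $\lambda_\alpha,\lambda_\beta$ in terms of $a,b,g,h$ and the scalars $Q$ and $R$. Back-substituting $\lambda_\gamma=R/n$ together with all three multipliers into the pointwise formula for $w_i(\mathbf{z})$ gives Eq.~\eqref{eq:mivlue_sanasia}. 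Optimality is then automatic: the objective is a convex (positive semidefinite) quadratic and the constraints are affine, so any KKT point is a global minimizer, and feasibility of the linear system is precisely the hypothesis that unbiased estimators exist.

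The main obstacle I anticipate is the algebra of the Woodbury step and the attendant bookkeeping: one must check that the Schur-complement scalar $Q$ is well-defined and nonzero so the rank-one correction can be inverted, and then track the Hadamard-product combinations $a\circ b^{-1}\circ(a-b)^{-1}$ through the simplification to confirm both the stated forms of $\lambda_\alpha,\lambda_\beta$ and the cross-consistency identity $\lambda_\gamma=C^{-1}(g^T\lambda_\alpha+h^T\lambda_\beta)=R/n$. A secondary technical point is that $\sigma_i(\mathbf{z})$ can vanish for certain allocations, so some care is needed to restrict to the support on which the per-unit variances are positive, mirroring the zero-denominator convention already adopted in Theorem~\ref{thm:mivlueSANIA}.
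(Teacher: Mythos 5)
Your proposal follows essentially the same route as the paper's derivation in Appendix~\ref{app:mivlue_sanasia}: the same Lagrangian with vector multipliers $\lambda_\alpha,\lambda_\beta$ and scalar $\lambda_\gamma$, the same pointwise stationarity formula $w_i(\mathbf{z})=-(\lambda_{\alpha i}+\lambda_{\beta i}z_i+\lambda_\gamma d_i^{\mathbf{z}})/\sigma_i(\mathbf{z})$, elimination of $\lambda_\gamma$ via $\lambda_\gamma=C^{-1}(g^T\lambda_\alpha+h^T\lambda_\beta)$, inversion of the diagonal-block matrix plus rank-one correction by the Woodbury identity, and back-substitution with $\lambda_\gamma=R/n$. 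Your added remarks on convexity guaranteeing that the KKT point is a global minimizer and on the need to guard against vanishing $\sigma_i(\mathbf{z})$ and $Q$ are sensible refinements but do not change the argument.
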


\end{document}